\documentclass[conference]{IEEEtran}
\IEEEoverridecommandlockouts

\usepackage[top=2cm, bottom=2cm, left=2cm, right=2cm]{geometry}
\usepackage{multirow,url,diagbox,amsmath,amsthm,amssymb,
graphicx,color,cite,algorithm,algpseudocode,algorithmicx,amsfonts,comment,framed}
\usepackage[table,xcdraw]{xcolor}

\newtheorem{theorem}{Theorem}
\newtheorem{definition}{Definition}
\floatname{algorithm}{Algorithm}
\usepackage{url,caption}
\usepackage{amsmath,amssymb,amsfonts}
\usepackage{tikz}
\usepackage{filecontents}
\usepackage{graphicx}
\usepackage{textcomp}
\usepackage{pifont,framed}
\usepackage{algorithm,algorithmicx}
\usepackage{xcolor}
\usepackage{color}
\usepackage{verbatim}
\usepackage[subfigure]{tocloft}
\usepackage{subfigure}
\usepackage{multirow}
  \usepackage{mathrsfs}
  \usepackage{ upgreek }
  
  \makeatletter
  \def\ps@IEEEtitlepagestyle{%
  	\def\@oddhead{\hfil\textcolor{red}{\textit{Accepted for publication at IEEE International Conference on Data Engineering (ICDE 2025)}}\hfil}
  	\def\@evenhead{}%
  }
  \makeatother
\allowdisplaybreaks[4]
\def\BibTeX{{\rm B\kern-.05em{\sc i\kern-.025em b}\kern-.08em
    T\kern-.1667em\lower.7ex\hbox{E}\kern-.125emX}}
\algtext*{EndWhile}
\algtext*{EndIf}
\newtheorem{lemma}{\bf Lemma}[section]

\begin{document}
\title{Dual Utilization of Perturbation for Stream Data Publication under Local Differential Privacy}
\author{
\IEEEauthorblockN{Rong Du, Qingqing Ye$^*$, Yaxin Xiao, Liantong Yu, Yue Fu, Haibo Hu}

\IEEEauthorblockA{Department of Electrical and Electronic Engineering, The Hong Kong Polytechnic University}
{\it roong.du@connect.polyu.hk, qqing.ye@polyu.edu.hk, yaxin.xiao@connect.polyu.hk,} \\ {\it liantong2001.yu@connect.polyu.hk, yuesandy.fu@connect.polyu.hk, haibo.hu@polyu.edu.hk}
\thanks{$^*$Corresponding author: Qingqing Ye}
}

\maketitle

\begin{abstract}
	
Stream data from real-time distributed systems such as IoT, tele-health, and crowdsourcing has become an important data source. However, the collection and analysis of user-generated stream data raise privacy concerns due to the potential exposure of sensitive information. To address these concerns, local differential privacy (LDP) has emerged as a promising standard. Nevertheless, applying LDP to stream data presents significant challenges, as stream data often involves a large or even infinite number of values. Allocating a given privacy budget across these data points would introduce overwhelming LDP noise to the original stream data.  
	
	
Beyond existing approaches that merely use perturbed values for estimating statistics, our design leverages them for both perturbation and estimation. This dual utilization arises from a key observation: each user knows their own ground truth and perturbed values, enabling a precise computation of the deviation error caused by perturbation. By incorporating this deviation into the perturbation process of subsequent values, the previous noise can be calibrated. 
Following this insight, we introduce the Iterative Perturbation Parameterization (IPP) method, which utilizes current perturbed results to calibrate the subsequent perturbation process. To enhance the robustness of calibration and reduce sensitivity, two algorithms, namely Accumulated Perturbation Parameterization (APP) and Clipped Accumulated Perturbation Parameterization (CAPP) are further developed. We prove that these three algorithms satisfy $w$-event differential privacy while significantly improving utility. 
Experimental results demonstrate that our techniques outperform state-of-the-art LDP stream publishing solutions in terms of utility, while retaining the same privacy guarantee. 
\end{abstract}

\section{Introduction}
In the era of big data, collecting and analyzing real-time data streams are essential in many distributed systems such as IoT, tele-health, and crowdsourcing, which continuously generate ample amount of data. Typical examples include real-time traffic updates for navigation systems such as Google Maps and Waze~\cite{gumasing2023antecedents}, and consumer sentiment analysis on peer review platforms like Yelp~\cite{zhang2023can}. While these stream data offer substantial benefits for big data analysis and training AI models, they simultaneously bring forth privacy issues due to the potential exposure of sensitive personal information~\cite{harris1994time, dowding2012impact}.

Local Differential Privacy (LDP)\cite{chen2016private, duchi2013local, kasiviswanathan2011can}, a cryptographic technique extensively deployed in various real-world contexts~\cite{erlingsson2014rappor,team2017learning,ding2017collecting,nguyen2016collecting}, provides formal privacy guarantees by bounding the information leakage of individual users through a privacy parameter $\epsilon$. However, its direct application to stream data faces fundamental limitations. In particular, user-level LDP~\cite{bao2021cgm} considers the worst-case scenario and partitions the privacy budget according to the composition theorem~\cite{li2016differential}, where privacy budget division across timestamps leads to exponential utility degradation as accumulated noise overwhelms the original data.  To address the challenge of limited privacy budget in stream data collection, two mainstream approaches have emerged in existing research, i.e., relaxing privacy definitions and reducing the amount of transmitted data.

For privacy relaxation, event-level LDP~\cite{bao2021cgm, wang2021continuous} assigns independent privacy budgets $\epsilon$ to individual data points, which improves utility at the cost of weakened privacy guarantees across multiple timestamps. $w$-event LDP\cite{kellaris2014differentially} provides a more rigorous privacy guarantee by constraining the budget allocation within sliding windows of length $w$. For reducing data transmission, methods proposed in \cite{erlingsson2014rappor, wang2020towards} allow users to report only at turning points, allocating more privacy budget to each reported point. However, reporting at turning points reveals temporal information, as the transmission time itself indicates when the original data changes. To decouple the relationship between turning points and time, Mao et al. \cite{mao2024privshape} proposed PrivShape, which first employs symbolic aggregate approximation to convert numerical sequences into short strings, and then uses tree structures to identify frequent string patterns. However, due to the loss of detailed information, this method does not perform well in estimating means and distributions for range queries. Therefore, enhancing the utility while preserving user privacy remains a key challenge in LDP-enabled stream data analysis. 

Departing from privacy budget allocation strategies, we propose a novel approach that leverages perturbation results of steam data in a dual manner. Beyond their conventional role in statistical analysis, these results can serve as valuable components for calibrating perturbation parameters. This dual-purpose approach stems from a key observation that each user knows both their own ground truth and perturbed values, which enables the deviation error to be computed precisely and in turn incorporated into subsequent values to calibrate the previous noise. Leveraging this auxiliary information, 
users can employ their locally perturbed values to adjust subsequent inputs for the perturbation mechanism $\mathcal{A}$, effectively mitigating errors from previous perturbations. We rigorously prove that this approach satisfies $w$-event LDP, as the adjust inputs process naturally dilutes individual value information. 

To implement this idea, we first introduce the Iterative Perturbation Parameterization (IPP) algorithm as a strawman proposal, which incorporates the deviation from the previous perturbation to the current stream value as input for $\mathcal{A}$.  Since the current perturbation is a cumulative effect of all previous data points rather than just the nearest one, we present the Accumulated Perturbation Parameterization (APP) algorithm to determine the deviations up to the current stream value. Through this process, the sensitivity increases as the range of the aggregated result expands. To mitigate this, we propose the Clipped Accumulated Perturbation Parameterization (CAPP), an optimized version of the APP that can reduce sensitivity while retaining most of the perturbation information. Additionally, we extend our perturbation parameterization scheme to time-slot sampling, which is particularly advantageous for mean statistics of subsequences while ensuring the accuracy of the published stream data.

Our contributions are summarized as follows:
\begin{itemize}
	\item To the best of our knowledge, this is the first work that parameterizes input value based on perturbation for publishing stream data under LDP, which significantly enhances utility over existing state-of-the-art techniques.
	\item We define a new problem of subsequence data collection under LDP, for statistical analysis on subsequences influenced by LDP while retaining rigorous privacy guarantee, i.e., $w$-event LDP.
	\item We propose a time-slot sampling approach for subsequence mean statistics estimation over user sampling, which further improves the utility of subsequence mean estimation.
\end{itemize}

The remainder of this paper is organized as follows. In Section \ref{preliminaries}, we introduce some preliminaries of LDP. In Section \ref{problemdefinition}, we formally define the problem and present a baseline approach to demonstrate our core ideas. In Section \ref{ASMs}, we propose two optimized algorithms, followed by a sampling-based solution in Section \ref{Delay}. In Section \ref{exp}, we provide extensive experimental evaluations. Then we review related studies in Section \ref{relatedwork}, and conclude the paper in Section \ref{conclusion}.

\section{Preliminaries}
\label{preliminaries}
\subsection{Local Differential Privacy}
Local Differential Privacy (LDP)~\cite{duchi2013local, kasiviswanathan2011can} is a modern privacy-preserving standard that enables users to anonymize their data before publishing. It is defined as follows:

\begin{definition}
	A randomized algorithm $\mathcal{A}$ satisfies $\epsilon$-local differential privacy ($\epsilon$-LDP), if and only if for any two values $x$ and $x'$, and all possible outputs $y \subseteq \text{Range}(\mathcal{A})$, the following condition holds:
	\begin{equation*}
		P[\mathcal{A}(x)\in y] \leq e^{\epsilon} \cdot P[\mathcal{A}(x')\in y].
	\end{equation*}
\end{definition}

The intuition of LDP is the probabilistic nature that $\mathcal{A}$ maps any particular input to an output according to a distribution, controlled by the privacy budget $\epsilon$. Therefore, anyone is unable to tell any individual's true answer from the observed output $y$ with high confidence. Particularly, LDP lifts up the dependency on a trusted collector that is inherent in centralized differential privacy (DP)~\cite{dwork2008differential,dwork2006calibrating,mcsherry2007mechanism}.

The following composition theorems of LDP provide a way to analyze the cumulative privacy loss when multiple LDP mechanisms are applied, either sequentially or in parallel.

\begin{theorem}
	\textbf{(Sequential Composition)} For any $k$ mechanisms providing $\epsilon_i$-local differential privacy for each, the sequence of all these mechanisms provides $\epsilon_{\text{seq}}$-local differential privacy, where
	\begin{equation*}
		\setlength{\abovedisplayskip}{0pt}
		\setlength{\belowdisplayskip}{0pt}
		\epsilon_{\text{seq}} = \sum\nolimits_{i=1}^{k}\epsilon_i.
	\end{equation*}
\end{theorem}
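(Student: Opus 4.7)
The plan is to prove sequential composition directly from the definition of $\epsilon$-LDP by bounding the probability ratio of the joint output of the $k$ mechanisms. Let $\mathcal{A}_1, \dots, \mathcal{A}_k$ denote the mechanisms, where $\mathcal{A}_i$ satisfies $\epsilon_i$-LDP, and let $\mathcal{A}(x) = (\mathcal{A}_1(x), \dots, \mathcal{A}_k(x))$ denote their sequential composition on the same input $x$. I would fix two arbitrary inputs $x, x'$ and an arbitrary output tuple $(y_1, \dots, y_k) \in \text{Range}(\mathcal{A}_1) \times \cdots \times \text{Range}(\mathcal{A}_k)$, and show that
\begin{equation*}
\frac{P[\mathcal{A}(x) = (y_1, \dots, y_k)]}{P[\mathcal{A}(x') = (y_1, \dots, y_k)]} \leq \exp\!\Bigl(\sum\nolimits_{i=1}^{k} \epsilon_i\Bigr).
\end{equation*}

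The key step is to exploit the independence of the internal randomness across the $k$ mechanisms so that the joint probability factorizes as $P[\mathcal{A}(x) = (y_1, \dots, y_k)] = \prod_{i=1}^{k} P[\mathcal{A}_i(x) = y_i]$, and similarly for $x'$. Once this factorization is in place, I would apply the $\epsilon_i$-LDP guarantee of each component mechanism to obtain $P[\mathcal{A}_i(x) = y_i] \leq e^{\epsilon_i} \cdot P[\mathcal{A}_i(x') = y_i]$, then multiply the $k$ inequalities to get the product bound $\prod_{i=1}^{k} e^{\epsilon_i} = e^{\sum_i \epsilon_i}$. Extending this from a single output tuple to an arbitrary measurable output set $y \subseteq \text{Range}(\mathcal{A})$ is then a standard integration/summation argument using the same pointwise bound.

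The main obstacle I anticipate is justifying the factorization when later mechanisms may in principle depend on the outputs of earlier ones (i.e., adaptive sequential composition, as is typical in real LDP deployments). To address this, I would condition on the prefix $(y_1, \dots, y_{i-1})$ when invoking the $i$-th mechanism's LDP guarantee: since the LDP bound of $\mathcal{A}_i$ must hold for every fixed choice of its internal randomness seed and auxiliary inputs, the conditional density $P[\mathcal{A}_i(x) = y_i \mid y_1, \dots, y_{i-1}]$ still satisfies the $e^{\epsilon_i}$ ratio bound. Telescoping the chain rule $P[\mathcal{A}(x) = (y_1, \dots, y_k)] = \prod_{i=1}^{k} P[\mathcal{A}_i(x) = y_i \mid y_1, \dots, y_{i-1}]$ across $x$ and $x'$ then yields the same $e^{\sum_i \epsilon_i}$ bound, completing the proof.
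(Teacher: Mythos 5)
The paper states this sequential composition theorem as a standard preliminary (citing the composition literature) and does not include a proof of its own, so there is nothing to compare against line by line. Your argument — factorizing the joint output probability via the chain rule, bounding each conditional factor by $e^{\epsilon_i}$ using the per-mechanism LDP guarantee (which holds for any fixed prefix, covering the adaptive case), and multiplying to obtain $e^{\sum_i \epsilon_i}$ — is the standard textbook proof and is correct.
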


\begin{theorem}
	\textbf{(Parallel Composition)} Each of the $k$ mechanisms provides $\epsilon_i$-local differential privacy and operates on a disjoint subset of the entire dataset, then the union of these mechanisms provides $\epsilon_{\text{par}}$-local differential privacy, where
	\begin{equation*}
		\epsilon_{\text{par}} = \max_i\{\epsilon_i\}.
	\end{equation*}
\end{theorem}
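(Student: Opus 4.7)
The plan is to unwind the definition of $\epsilon$-LDP directly, exploiting the disjointness assumption to reduce the ratio of output probabilities to a single factor. Fix an arbitrary user whose data value is either $x$ or $x'$, and fix any composite output $y = (y_1, \ldots, y_k)$ where $y_i \subseteq \text{Range}(\mathcal{A}_i)$. Because the $k$ subsets of the dataset are disjoint, this user's value is touched by exactly one mechanism, say $\mathcal{A}_j$; the remaining mechanisms $\mathcal{A}_i$ for $i \neq j$ receive inputs that do not depend on whether the user reported $x$ or $x'$.

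Next, I would use the independence of the randomness across the $k$ mechanisms to factorize the joint output probability as $P[\mathcal{A}(x) \in y] = \prod_{i=1}^{k} P[\mathcal{A}_i(\text{input}_i) \in y_i]$, and similarly for $x'$. Taking the ratio, every factor with $i \neq j$ cancels because the corresponding input is identical under $x$ and $x'$, leaving
\begin{equation*}
\frac{P[\mathcal{A}(x) \in y]}{P[\mathcal{A}(x') \in y]} = \frac{P[\mathcal{A}_j(x) \in y_j]}{P[\mathcal{A}_j(x') \in y_j]} \leq e^{\epsilon_j} \leq e^{\max_i \epsilon_i},
\end{equation*}
where the first inequality applies the $\epsilon_j$-LDP guarantee of $\mathcal{A}_j$. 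This gives the desired bound with $\epsilon_{\text{par}} = \max_i \{\epsilon_i\}$.

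The main obstacle is conceptual rather than computational: one must be careful about what "disjoint subset" means in the local model, since each user only holds their own record. The clean interpretation is that the users (or equivalently, their records) are partitioned and each subset is handled by one mechanism, so that any single user's change affects at most one $\mathcal{A}_j$. Once this is made precise, the independence-based factorization and cancellation step becomes routine, and the bound follows immediately from the per-mechanism guarantee.
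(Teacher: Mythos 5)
Your argument is correct and is the standard proof of parallel composition: disjointness ensures a single user's change perturbs the input of exactly one mechanism $\mathcal{A}_j$, independence lets the joint output probability factor, the unaffected factors cancel in the ratio, and the surviving factor is bounded by $e^{\epsilon_j}\leq e^{\max_i\epsilon_i}$. The paper states this theorem as a known preliminary without supplying any proof, so there is nothing to compare against; your proof fills that in correctly, and your remark about interpreting ``disjoint subsets'' as a partition of users in the local model is exactly the right clarification. The only cosmetic caveat is that the product factorization as written applies to rectangle output sets $y_1\times\cdots\times y_k$; extending to arbitrary measurable outputs is routine (bound the density ratio pointwise and integrate) and does not affect the conclusion.
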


These theorems are useful in theoretical analysis, as they allow the calculation of overall privacy loss in complex LDP scenarios and help in designing privacy-preserving algorithms by managing the privacy budget. The sequential composition theorem is particularly important for iterative algorithms or when a user's data is used multiple times, whereas the parallel composition theorem is applied when the dataset can be partitioned and separate analyses are run on each partition.

\subsection{$w$-event Privacy}
\label{wevent}
$w$-event privacy~\cite{kellaris2014differentially} is an LDP model tailed for data streams, which quantifies and limits privacy exposure within sliding windows containing $w$ events. Specifically, this privacy model is defined on $w$-neighboring streams as follows. 

\begin{definition}
	(\textbf{$w$-neighboring Streams}) Two streams $S=\{S_1, ..., S_t\}$ and $S'=\{S'_1, ..., S'_t\}$ of length $t$ are $w$-neighboring streams if they have at most $w$ consecutive different elements. Formally, for any $i, j \in \{1,\dots,t\}$ and $i \leq j$, if $S_{i} \neq S'_{i}$ and $S_{j} \neq S'_{j}$, then $j- i + 1 \leq w$.
\end{definition}

Then we have the following definition of $w$-event privacy.

\begin{definition}
	(\textbf{$w$-event Privacy}) Let $M$ be a mechanism that takes as input a stream of arbitrary size $t$ and produces $O$ as its output. $M$ satisfies $w$-event $\epsilon$-DP (or simply, $w$-event privacy) if for any two $w$-neighboring streams $S_t$ and $S'_{t}$, the following inequality holds
	\begin{equation*}
		\setlength{\abovedisplayskip}{1pt}
		\setlength{\belowdisplayskip}{0pt}
		\text{Pr}[M(S) \in O] \leq e^{\epsilon} \cdot \text{Pr}[M(S') \in O].
	\end{equation*}
\end{definition}

A mechanism satisfying $w$-event privacy can provide $\epsilon$-LDP guarantee in any sliding window of size $w$. In other words, for any mechanism with $w$-event privacy, $\epsilon$ can be viewed as the total available privacy budget in any sliding window of size $w$. 

\subsection{Square Wave Mechanism}
\label{SW_algorithm}
Square Wave (SW)~\cite{li2020estimating} is a typical LDP mechanism for estimating numerical value distribution. Each user processes a numerical value in $[0,1]$, and generates a sanitized value in $[-b, 1+b]$, where $b=\frac{\epsilon e^{\epsilon}-e^{\epsilon}+1}{2e^{\epsilon}(e^{\epsilon}-\epsilon-1)}$. Given an input value $v$, the randomized output can be expressed as following: 
$$Pr[SW(v)=v']=\left\{ \begin{array}{rcl} p, & & if|v-v'|\leq b,\\ q, & & otherwise,\\ \end{array} \right. $$
where $p=\frac{e^\epsilon}{2be^\epsilon+1}$ and $q=\frac{1}{2be^\epsilon+1}$. Upon receiving the perturbed data, the data collector aggregates the original distribution by using the Maximum Likelihood Estimation (MLE)~\cite{rossi2018mathematical}, and reconstructs the distribution of original values.

\section{Problem Definition and Baseline}
\label{problemdefinition}

In this section, we elaborate on our problem setting, and then introduce a baseline algorithm.

\subsection{Data Collection Framework}
Our problem setting involves a data collector and a group of distributed users, as shown in Figure \ref{sysframe}. In Step \ding{172}, each user owns a continuous data stream. Since the data collector is untrusted, in Step \ding{173} the users utilize LDP mechanisms (e.g., SW~\cite{li2020estimating}) to perturb their data and report the sanitized version. Upon receiving the perturbed data streams from the users, the data collector aggregates the inputs, with the goal of reconstructing an estimated data stream closely approximating the original one in Step \ding{174}. Finally, the data collector releases the aggregated values, e.g., mean or trends.

\begin{figure}
	\includegraphics[width=0.45\textwidth]{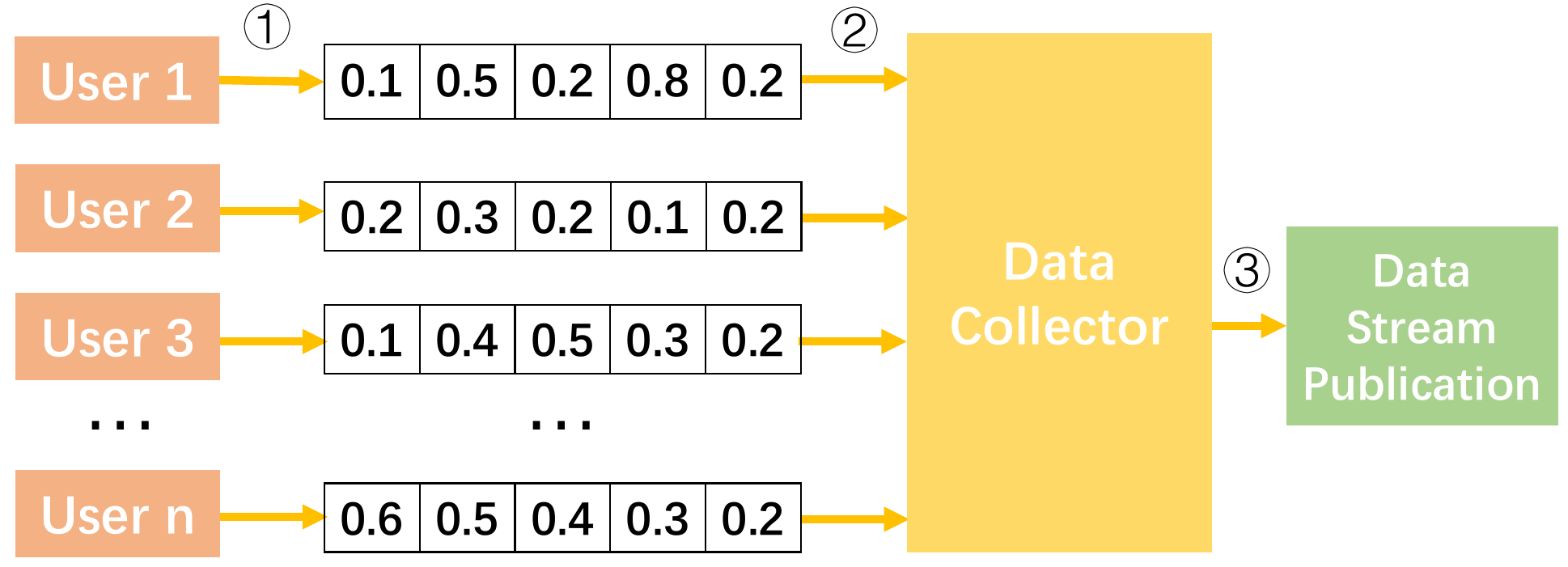}
	\caption{Illustration of the data streams collection framework}
	\label{sysframe}
\end{figure}

\subsection{Problem Definition}
The objective of the data collector is to estimate subsequences of users' data while ensuring $w$-event privacy (shown in Section \ref{wevent}). Analyzing subsequences of data streams reveals localized meaningful patterns, which greatly aids data streams modeling and forecasting. To clarify, our discussion will primarily focus on a single data stream from one individual user, unless specified otherwise. Though this paper focuses on the single user scenario, the principles and algorithms discussed can be readily extended to accommodate multiple users.

The data collector aims to estimate a user's subsequence, denoted as $X_{(i,j)}$, which represents a continuous segment of the data stream spanning from the $i$-th time slot to the $j$-th time slot. Formally, it is defined as:
\begin{equation*}
	\setlength{\abovedisplayskip}{1pt}
	\setlength{\belowdisplayskip}{1pt}
	X_{(i,j)} = \{x_i, x_{i+1}, \ldots, x_j\}.
\end{equation*}

To protect their data, users will perturb the data using an LDP mechanism. In this paper, we employ the SW mechanism (shown in Section \ref{SW_algorithm}), which is considered as state-of-the-art method for collecting numerical data. We suppose the subsequence perturbed by SW is:
\begin{equation*}
	\setlength{\abovedisplayskip}{1pt}
	\setlength{\belowdisplayskip}{1pt}
	X'_{(i,j)} =\{x'_i, x'_{i+1}, \ldots, x'_j\}.
\end{equation*}

The data collector will then reconstruct a subsequence $\hat{X}_{(i,j)}$ by aggregating the perturbed data streams collected over that period as
\begin{equation*}
	\setlength{\abovedisplayskip}{1pt}
	\setlength{\belowdisplayskip}{1pt}
	\hat{X}_{(i,j)} =\{\hat{x}_i, \hat{x}_{i+1}, \ldots, \hat{x}_j\}.
\end{equation*}

Directly releasing $\hat{X}_{(i,j)}$ is referred to as \textbf{stream data publication}. Aside from directly releasing the stream data, the data collector may also perform statistical analysis on $\hat{X}_{(i,j)}$ and publish the corresponding statistical results, such as calculating and publishing the mean or identifying trends. Specifically, the mean estimation for $\hat{X}_{(i,j)}$ is denoted as:
\begin{equation*}
	\setlength{\abovedisplayskip}{1pt}
	\setlength{\belowdisplayskip}{1pt}
	\hat{M}_{(i,j)}=\frac{\sum_{t=i}^{j}\hat{x}_t}{j-i+1}.
\end{equation*} 

\subsection{Iterative Perturbation Parameterization (IPP)}
\label{onelookback}

In this section, we present a baseline algorithm named Iterative Perturbation Parameterization (IPP) for stream data publication, which iteratively adjusts input values based on perturbation results. The core idea of IPP is to integrate last perturbation deviation into the current value's perturbation process to mitigate errors. 
Specifically, let $x_t$ denote the original value at the $t$-th time slot, and $x'_t$ the corresponding perturbed value. Instead of directly perturbing the original value $x_t$, the user will calculate the deviation between $x_{t-1}$ and $x'_{t-1}$ and add it to $x_t$ as the input value in order to partially correct the error caused by the perturbation in $x_{t-1}$. We give the details of the IPP algorithm as follows.

\textbf{The procedure of IPP.} Let $x^I_t$ denote the input value at time slot $t$. As shown in Figure \ref{demo}, our method is explained as follows. For the original value $x_1=0.01$ of the first time slot, since no data has been uploaded previously, the deviation is zero, and we have input value $x^I_1=x_1$. After perturbing by SW, we obtain $x'_1=0$. Since the user knows the original value and the perturbed one, the user can calculate the deviation $d_1 = x_1 - {x}'_1 = 0.01$, which will be corrected in the next time slot when perturbing $x_2$. The specific approach is to add this deviation to the second original value, and we obtain the input value $x^I_2=d_1+x_2=0.16$. We then continue to perturb ${x}^I_2$ and get a new perturbed value $x'_2=0.19$. From this, we obtain a new deviation $d_2=x_2 - {x}'_2=-0.04$, which is then added to $x_3$. IPP repeats this process until all values are collected. 

\begin{figure}
	\includegraphics[width=0.45\textwidth]{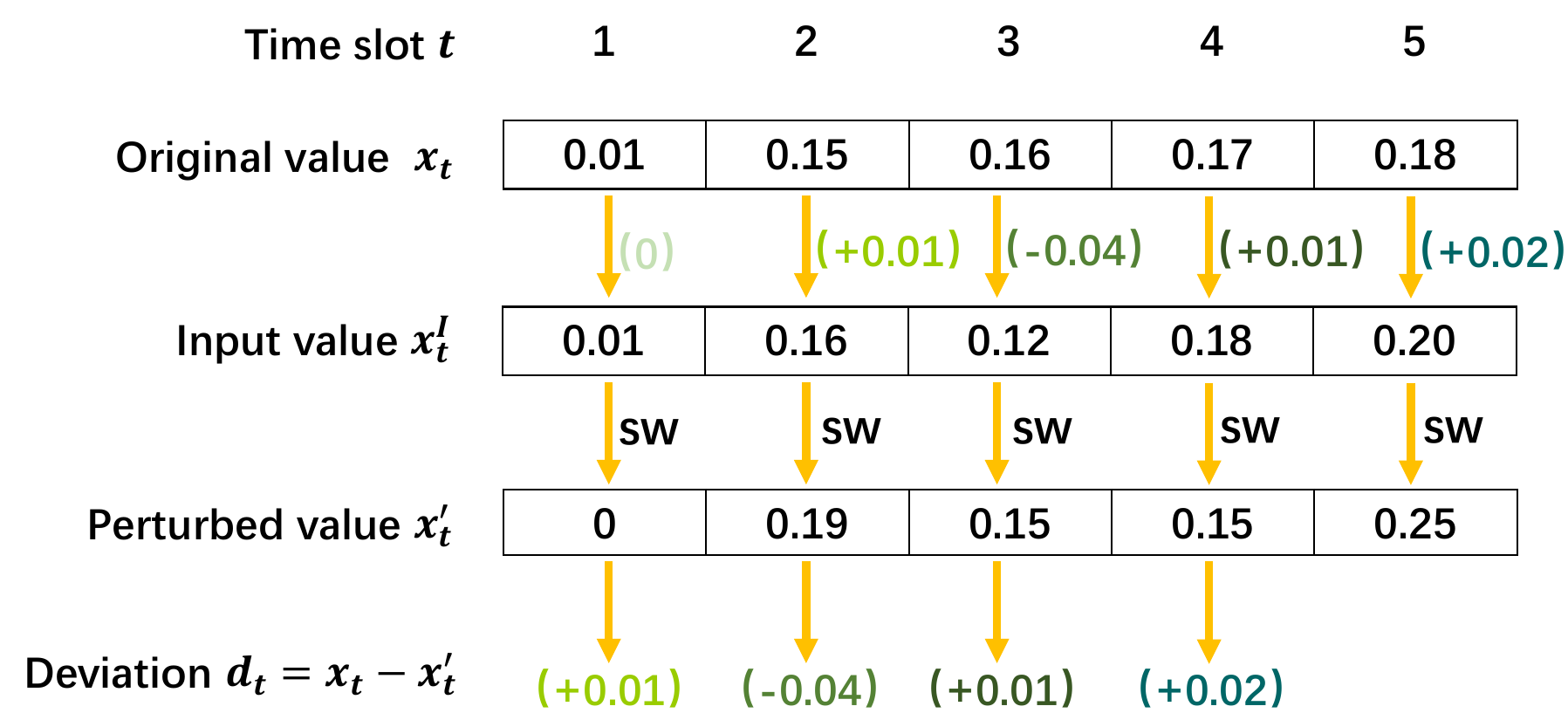}
	\caption{The procedure of IPP}
	\label{demo}
\end{figure}

It is important to note that when the original values are in the range of $[0,1]$, adding the deviation may result in the range of the input value exceeding this range. To address this issue, we simply clip $x^I_t$ to $[0,1]$. Specifically, if $x^I_t<0$, we set $x^I_t=0$, and if $x^I_t>1$, we set $x^I_t=1$.

This algorithm achieves better utility, as proven by Lemma \ref{LDP_1lookback}. It is worth noting that the $p$, $q$, and $b$ that appear in the subsequent proofs are parameters of SW mechanism, as illustrated in Section \ref{SW_algorithm}.

\begin{lemma}Given data stream values $X=\{x_1, x_2,\dots,x_n\}$, let $MD(M)=\sum \frac{M(x_i)}{n}-\sum \frac{x_i}{n}$ denote the mean deviation for algorithm $M$. We have $MD(IPP) < MD(SW)$.
	\label{LDP_1lookback}
\end{lemma}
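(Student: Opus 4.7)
My plan is to compare the random variable $\sum_t(x'_t - x_t)$ under SW and under IPP, since $MD(M) = \frac{1}{n}\sum_t(M(x_t) - x_t)$. As both mechanisms produce mean-zero cumulative deviations, I interpret the informal inequality $MD(IPP) < MD(SW)$ as a comparison of expected magnitudes (equivalently, variances), and aim to show IPP's variance is roughly half of SW's. The key insight is that IPP's feedback induces a simple one-step recurrence on the residuals $d_t := x_t - x'_t$, which makes neighboring noise terms cancel pairwise.

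Concretely, I would set $\eta_t := x'_t - x^I_t$ to be the SW noise at step $t$. Under plain SW we have $x^I_t = x_t$, so $x'_t - x_t = \eta_t$; under IPP we have $x^I_t = x_t + d_{t-1}$, so $x'_t - x_t = d_{t-1} + \eta_t$ and therefore
\begin{equation*}
d_t \;=\; -\,d_{t-1} - \eta_t, \qquad d_0 = 0.
\end{equation*}
Solving the recurrence yields $d_t = \sum_{k=1}^t (-1)^{t-k+1}\eta_k$. Summing $x'_t - x_t$ over $t$ and collecting by parity, the alternating signs in $d_t$ induce pairwise cancellation, giving
\begin{equation*}
\sum_{t=1}^n (x'_t - x_t) \;=\; \sum_{k \,\equiv\, n \;(\mathrm{mod}\,2)} \eta_k,
\end{equation*}
so only $\lceil n/2 \rceil$ of the $n$ independent noise terms survive. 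Writing $\sigma^2$ for the variance of each $\eta_t$ (computable from the SW density in Section~\ref{SW_algorithm}), this gives $\operatorname{Var}(MD(IPP)) \approx \sigma^2/(2n)$ versus $\operatorname{Var}(MD(SW)) = \sigma^2/n$, so $E[MD(IPP)^2] < E[MD(SW)^2]$, from which the desired inequality follows.

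The main obstacle is the clipping $x^I_t \in [0,1]$ mentioned after Figure~\ref{demo}, which breaks the exact linearity of the recurrence whenever $x_t + d_{t-1}$ would exit $[0,1]$. I would argue that the clip operator is a contraction toward $x_t$, so it can only shrink $|d_{t-1}|$ before it feeds into step $t$; hence the paired cancellation displayed above remains valid as an upper bound on the cumulative deviation of IPP, and the strict inequality is preserved. A minor secondary subtlety is that SW's noise is not exactly mean-zero near the boundaries of $[0,1]$, but this contributes only lower-order bias relative to the $\sigma^2/n$ versus $\sigma^2/(2n)$ comparison. Once both effects are addressed, the claim follows directly from the closed-form cancellation identity.
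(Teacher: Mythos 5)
Your proposal is essentially correct but takes a genuinely different --- and more quantitative --- route than the paper. The paper's proof expands $MD(SW)$ and $MD(IPP)$ over consecutive pairs $(t-1,t)$, writes $SW(x_{t-1})=x_{t-1}-d_{t-1}$, and then asserts qualitatively that $x_{t-1}-d'_{t-1}+SW(x_t+d'_{t-1})$ ``better approximates'' $x_{t-1}+x_t$ than $x_{t-1}-d_{t-1}+SW(x_t)$ does, because the latter carries the uncorrected residual $d_{t-1}$; it never specifies in what sense one random quantity is smaller than another, nor how much is gained. You instead exploit the exact recurrence $d_t=-d_{t-1}-\eta_t$ to obtain the closed-form alternating sum $d_t=\sum_{k\le t}(-1)^{t-k+1}\eta_k$, telescope the cumulative deviation down to $\sum_{k\equiv n\ (\mathrm{mod}\ 2)}\eta_k$, and conclude that only about half the noise terms survive, i.e.\ $\operatorname{Var}(MD(IPP))\approx\tfrac{1}{2}\operatorname{Var}(MD(SW))$. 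This makes the cancellation mechanism explicit, gives a checkable factor-of-two improvement, and supplies the precise sense (expected squared magnitude) in which the lemma's informal inequality should be read --- all of which the paper leaves implicit. Two caveats, both of which the paper's own proof also glosses over: (i) your contraction argument for clipping is not yet a proof, since replacing $d_{t-1}$ by its clipped version $c_{t-1}$ changes the recurrence to $d_t=-c_{t-1}-\eta_t$ and the pairwise cancellation no longer telescopes exactly, so you would need to show the interpolated sum's variance still sits below the SW variance; and (ii) under IPP the noise $\eta_t$ is drawn conditionally on the random input $x^I_t$, so the ``independent with common variance $\sigma^2$'' step should be phrased as a martingale-difference argument with a uniform variance bound over inputs in $[0,1]$. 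Neither caveat undermines the approach, and your argument is the more rigorous of the two.
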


\begin{proof}
	Let $d_1,\dots,d_n$ denote the deviations obtained from the SW mechanism, where the mean deviation $MD(SW)$ is given by:
	\begin{equation}
		\begin{split}
			\setlength{\abovedisplayskip}{1pt}
			\setlength{\belowdisplayskip}{1pt}
			MD(SW)  \!= \!  \frac{SW(x_1)+\dots+SW(x_n)-\!\sum x_i}{n}.
		\end{split}
		\label{SW1}
	\end{equation}

	Let $d'_1,\dots,d'_n$ denote the deviations obtained from the IPP algorithm, where the mean deviation for IPP, $MD(IPP)$, can be calculated as:
	\begin{equation}
		\begin{split}
			MD(IPP) = \frac{SW(x_1^I)\!+\!\dots+SW(x_n^I)\!-\!\sum x_i}{n}.
		\end{split}
		\label{IPP1}
	\end{equation}
	Consider any two consecutive time slots $t$ and $t-1$ in Equation \ref{SW1}. By expanding $SW(x_{t-1})$, we have:
	\begin{equation*}
		T_{SW}(t-1,t)=\frac{x_{t-1}-d_{t-1}+SW(x_t)-(x_{t-1}+x_t)}{n}.
	\end{equation*}
	Consider any two consecutive time slots $t$ and $t-1$ in Equation \ref{IPP1}. By expanding $SW(x^I_{t-1})$, we have:
	\begin{equation*}
		\begin{split}
			& T_{IPP}(t-1,t)\!=\!\frac{x_{t-1}\!-\!d'_{t-1}\!+\!SW(x_t^I)\!-\!(x_{t-1}+x_t)}{n}\\ \!&=\! \frac{x_{t-1}\!-\!d'_{t-1}\!+\!SW(x_t+d'_{t-1})\!-\!(x_{t-1}+x_t)}{n}.
		\end{split}
	\end{equation*}
	By design, the $SW$ mechanism generates outputs that closely approximate its input $x$. Consequently, the term $x_{t-1}-d'_{t-1}+SW(x_t+d'_{t-1})$ yields results that better approximate the true value $x_{t-1}+x_t$, whereas the term $x_{t-1}-d_{t-1}+SW(x_t)$ introduces an additional error component $d_{t-1}$, resulting in larger deviations from the true value. Thus, we have
	\begin{equation}
		T_{IPP}(t-1,t)<T_{SW}(t-1,t).
		\label{SW_IPP}
	\end{equation}
	Substituting Equation \ref{SW_IPP} into Equations \ref{SW1} and \ref{IPP1}, we obtain the following result:
	\begin{equation*}
		\begin{split}
			&MD(IPP) \!-\! MD(SW)\! =\\ &\! \frac{1}{2}(\sum_{t=2}^nT_{IPP}(t-1,t)\!-\! \sum_{t=2}^n T_{SW}(t-1,t))<0,
		\end{split}
	\end{equation*}
	which means IPP can always achieve lower mean deviation compared to directly perturbing stream data with the SW mechanism.
\end{proof}

\section{CAPP: Clipped Accumulated Perturbation Parameterization}
\label{ASMs}

Despite the benefits achieved by IPP in mitigating errors of the most recent perturbed value, it is limited to addressing a single perturbed value. A natural idea is extend our perturbation parameterization to encompass previous values, thereby enabling error correction from earlier perturbations. 
However, continuously accumulating deviations will lead to an unbounded increase in the input value range. Therefore, we need to constrain the range of input values reasonably. Based on these insights, we develop an advanced and effective algorithm called Clipped Accumulated Perturbation Parameterization (CAPP). To facilitate understanding, we first present Accumulated Perturbation Parameterization (APP) as an introductory solution, which considers all deviations caused by previously collected values and gives an effective post-processing method.

\subsection{Accumulated Perturbation Parameterization (APP)}
\label{Initialization}
To account for accumulated perturbation errors in stream data collection, we maintain an accumulated deviation $D$ of all perturbation-induced deviations from previously collected values. The key idea of APP is to adjust each new stream value by incorporating the deviation $D$ as part of the input. The procedure of APP is presented in Algorithm \ref{alg:APP}. 
\begin{algorithm}[H]
	\caption{ Accumulated Perturbation Parameterization}
	\begin{flushleft}
		{\bf Input:}
		Privacy budget $\epsilon$,	window size $w$, original stream $\{x_i,\ldots,x_j\}$ \\
		{\bf Output:}
		Collected data stream $\{x'_i,\ldots,x'_j\}$
	\end{flushleft}
	\begin{algorithmic}[1]
		\State $\epsilon_w = \epsilon / w$
		\State Initialize accumulated deviation $D=0$
		\For{each time slot $t$}
		\State $x^I_t = truncate(x_t + D, [0,1])$
		\State $x'_t=SW(x^I_t)$ with $\epsilon_w$
		\State Calculate deviation $d_t=x_t-x'_t$
		\State $D=D+d_t$
		\EndFor
		\State \Return $\{x'_i,\ldots,x'_j\}$  
	\end{algorithmic}
	\label{alg:APP}
\end{algorithm}

The algorithm begins by calculating the privacy budget for each stream value, denoted by $\epsilon_w$, to ensure the algorithm satisfies $w$-event privacy (line 1). It then sets an initial accumulated deviation $D$ to zero since no value has been collected yet (line 2). We then compute $x^I_t$ by adding $D$ to the current value $x_t$, then clip $x^I_t$ to [0,1] (line 4). The next step is the perturbation process, yielding the perturbed output $x'_t$ (line 5). The algorithm then obtains the new deviation $d_t$ (line 6) and adds it to $D$ to update the accumulated deviation for future stream values (line 7). Finally, the data collector collects all the perturbed data (line 9). 

\textbf{Post-processing.}
After collecting the perturbed data, we proceed a smoothing step. The reason for including a smoothing step is that the perturbation process of the SW can introduce random positive or negative deviations. Smoothing allows positive and negative deviations to counteract, thereby further reducing errors.

For smoothing, we employ the simplest method, the Simple Moving Average (SMA)~\cite{booth2006hydrologic}. The smoothing value $\text{SMA}_{x'_t}$ at time slot $t$, with a smoothing window size of $2k+1$, is as follows:
\begin{equation*}
	\setlength{\abovedisplayskip}{0pt}
	\setlength{\belowdisplayskip}{0pt}
	\text{SMA}_{x'_t} = \frac{1}{2k+1} \sum_{r=t-k}^{t+k} x'_{r}.
	\label{SMA}
\end{equation*}
When dealing with boundary windows where the number of available values is less than $2k+1$, we simply average the available values. After smoothing, we can publish the stream data as follows:
\begin{equation*}
	\setlength{\abovedisplayskip}{1pt}
	\setlength{\belowdisplayskip}{1pt}
	\hat{X}_{(i,j)} = \{SMA_{x'_i},\dots,SMA_{x'_t},\dots,SMA_{x'_j}\}.
\end{equation*}

Smoothing has no impact on the mean of the results but can significantly improve the characterization of the data stream. The benefits of using SMA as post-processing of APP can be demonstrated by the following theoretical justifications:

\begin{lemma}
	Given a data stream $X = \{x_i, \ldots, x_{t}, \ldots, x_j\}$ and its APP-perturbed version $X' = \{x'_i, \ldots, x'_{t}, \ldots, x'_j\}$, define $Y = \{y_i, \ldots, y_{t}, \ldots, y_j\}$ as the smoothed stream obtained by applying simple moving average with window size $s$ ($s>1$) to $X'$. Then for any time point t, we have $Var(y_t) < Var(x'_t)$.
	\label{lemmaforsmoothing}
\end{lemma}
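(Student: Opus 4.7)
The plan is to exploit the variance-reduction property of averaging, together with the key fact that the SW noise drawn at each time slot is independent of the noise at other time slots. First, I would write $y_t = \tfrac{1}{s}\sum_{r=t-k}^{t+k} x'_r$ with $s = 2k+1$, and decompose each perturbed value as $x'_r = x^I_r + \eta_r$, where $\eta_r$ is the SW-induced noise at step $r$. Because SW is invoked freshly at each step with the same privacy budget $\epsilon_w$, the $\eta_r$ are mutually independent with common variance $\sigma^2_\eta$ determined by the SW parameters $p$, $q$, $b$ from Section \ref{SW_algorithm}.

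Next, I would apply this decomposition to both sides of the target inequality. Using independence of $\eta_t$ from $x^I_t$, the right-hand side becomes $Var(x'_t) = Var(x^I_t) + \sigma^2_\eta$. Expanding the SMA on the left and exploiting the pairwise independence of the noise terms yields
\[
Var(y_t) = Var\!\left(\tfrac{1}{s}\sum_{r=t-k}^{t+k} x^I_r\right) + \frac{\sigma^2_\eta}{s},
\]
so the noise contribution shrinks by a factor of $s$. To close the argument I would bound the averaged-input variance by $Var(x^I_t) + (1 - 1/s)\sigma^2_\eta$. Since the truncation step (line 4 of Algorithm \ref{alg:APP}) clips each $x^I_r$ to $[0,1]$, the input variance stays uniformly bounded even under APP's feedback loop, so the strict positivity of the slack $\sigma^2_\eta(1 - 1/s)$ for $s > 1$ yields $Var(y_t) < Var(x'_t)$.

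The hardest step is characterizing the joint law of the inputs $\{x^I_r\}$, because APP's accumulated deviation $D$ recycles past noise into future inputs, creating non-trivial time dependencies across the smoothing window. My strategy sidesteps this difficulty by first peeling off the independent noise terms $\eta_r$ via the decomposition above, which reduces the task to controlling a single aggregate variance of a bounded-range average rather than directly untangling the full dependence structure.
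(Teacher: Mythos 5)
Your decomposition $x'_r = x^I_r + \eta_r$ is a reasonable starting point, but the argument has a genuine gap exactly where you need it most. First, the claimed identity $Var(y_t) = Var\bigl(\tfrac{1}{s}\sum_r x^I_r\bigr) + \tfrac{\sigma^2_\eta}{s}$ silently discards the cross-covariance $2\,Cov\bigl(\tfrac{1}{s}\sum_r x^I_r,\ \tfrac{1}{s}\sum_r \eta_r\bigr)$. Under APP this term is not zero: each $x^I_{r'}$ is built from the accumulated deviation $D$, which contains $-\eta_r$ for every $r<r'$, so the inputs are (negatively) correlated with earlier noise. You acknowledge this dependence in your closing paragraph, but the decomposition does not actually ``peel it off''; it relocates it into a term you never account for. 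Second, and more seriously, the finishing inequality $Var\bigl(\tfrac{1}{s}\sum_r x^I_r\bigr) \le Var(x^I_t) + (1-1/s)\sigma^2_\eta$ is asserted with no derivation. The justification offered --- that truncation keeps each $x^I_r$ in $[0,1]$, so the input variance is uniformly bounded --- gives at best $Var(x^I_r)\le 1/4$, which bears no relation to that specific right-hand side; for instance at the first time slot $x^I_i = x_i$ is deterministic, so $Var(x^I_i)=0$, and nothing you have shown controls the variance of the averaged inputs by $(1-1/s)\sigma^2_\eta$. Without that inequality the proof does not close.

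For comparison, the paper's own proof takes a much blunter route: it treats the window values $x'_{t-k},\dots,x'_{t+k}$ as independent with a common variance equal to $Var(x'_t)$, so that $Var(y_t)=Var(x'_t)/(2k+1)<Var(x'_t)$ follows immediately from $s=2k+1>1$. That argument also glosses over the APP-induced dependence across the window (it simply states the noise is i.i.d.\ and the variances are equal), but it reaches the conclusion under its stated idealization. Your instinct to confront the dependence structure is sounder than the paper's, but to make it work you would either have to adopt the independence/equal-variance idealization explicitly, or genuinely bound the covariance structure of $\{x'_r\}$ induced by the accumulated deviation; the current write-up does neither.
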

\begin{proof}
	Let $d_t = x_t - x'_t$ denote the deviation of $x_t$ from its perturbed counterpart $x'_t$. The smoothed data point $y_t$, computed with a window size of $s=2k+1$, is given by:
	\begin{equation*}
		\setlength{\abovedisplayskip}{1pt}
		\setlength{\belowdisplayskip}{1pt}
		\begin{split}
			y_t  &= \frac{x'_{t-k}+...+x'_t+...+x'_{t+k}}{2k+1}\\
			&= \frac{x_{t-k} - d_{t-1} + ...+ x_t - d_t + ...+ x_{t+k} - d_{t+k}}{2k+1}. \\
		\end{split}
	\end{equation*}
	The expected value of $y_t$ is formulated as:
	\begin{equation*}
		\setlength{\abovedisplayskip}{1pt}
		\setlength{\belowdisplayskip}{1pt}
		\begin{split}
			E[y_t]& = E\left[\frac{x_{t-k}...+ x_{t}+...+x_{t+k}}{2k+1}\right] \\ &- E\left[\frac{d_{t-k}...+ d_{t}+...+d_{t+k}}{2k+1}\right].
		\end{split}
	\end{equation*}
	Given that the deviations are bidirectional (both positive and negative) and exhibit compensatory behavior, where positive and negative deviations tend to counterbalance each other. Therefore, smoothing effectively mitigates the impact of perturbation noise for characterization. Next, we consider the variance of $y_i$. Since the noise is i.i.d., we have:
	\begin{equation*}
		\setlength{\abovedisplayskip}{1pt}
		\setlength{\belowdisplayskip}{1pt}
		Var(y_t) = Var\left(\frac{x'_{t-k}+... + x'_t + ...+x'_{t+k}}{2k+1}\right).
	\end{equation*}
	
	Due to the linearity of variance, we can express this as:
	\begin{equation*}
		\begin{split}
			\setlength{\abovedisplayskip}{1pt}
			\setlength{\belowdisplayskip}{1pt}
			&Var(y_t) = \frac{1}{({2k+1})^2}Var(x'_{t-k} +\dots+ x'_t +\dots+ x'_{t+k})\\
			&\!= \!\frac{1}{({2k+1})^2}(Var(x'_{t-k})\! +\!\dots\!+\! Var(x'_t) \!+\!\dots\! Var(x'_{t+k})).
		\end{split}
	\end{equation*}
	Because we do not know the specific results of the perturbed data, we consider the variance of all values to be the same as $Var(x'_t)$, from which we have:
	\begin{equation*}
		\setlength{\abovedisplayskip}{1pt}
		\setlength{\belowdisplayskip}{1pt}
		Var(y_t) = \frac{(2k+1)Var{(x'_t)}}{{(2k+1)}^2}= \frac{Var(x'_t)}{2k+1}.
	\end{equation*}
	
	Thus, the variance of the smoothed value is smaller than that of the originally perturbed value. 
\end{proof}

\textbf{Theoretical analysis.} The APP algorithm is designed to guarantee $w$-event privacy while achieving enhanced utility, as proven in Theorem \ref{LDPproof22}, Lemma \ref{lemmaforFLB} and Lemma \ref{lemma3}. 

\color{black}
\begin{theorem}
	Given any two $w$-neighboring streams $X$ and $Y$ and an arbitrary output $S$ from the APP algorithm, if each stream value is allocated a privacy budget of $\frac{\epsilon}{w}$, then the APP algorithm satisfies $w$-event privacy.
	\label{LDPproof22}
\end{theorem}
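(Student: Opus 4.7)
The plan is to factor the joint output distribution across time slots and reduce the statement to the per-slot LDP guarantee of SW, exploiting that $w$-neighboring streams only disagree on at most $w$ consecutive slots. First I would fix two $w$-neighboring streams $X$ and $Y$ and an arbitrary output sequence $S=(s_1,\ldots,s_n)$. By definition, the slots where $x_\tau\neq y_\tau$ lie in some window $[i,j]$ with $j-i+1\leq w$. Since the noise injected at each call of SW is independent, and the input $x_\tau^I=\mathrm{truncate}(x_\tau+D_{\tau-1},[0,1])$ is a deterministic function of $(x_1,\ldots,x_\tau)$ and the prior outputs $(s_1,\ldots,s_{\tau-1})$, the joint density factors as
$$\Pr[\mathrm{APP}(X)=S]=\prod_{\tau=1}^{n}\Pr\bigl[SW_{\epsilon/w}(x_\tau^I)=s_\tau\bigr],$$
and analogously for $Y$.

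Next I would bound the ratio slot-by-slot using that SW with budget $\epsilon/w$ is $(\epsilon/w)$-LDP on $[0,1]$, so that each factor contributes at most $e^{\epsilon/w}$ when its inputs differ between the $X$- and $Y$-executions, and exactly $1$ when they coincide. For slots $\tau<i$ the two stream prefixes agree, hence $D_{\tau-1}$ and therefore $x_\tau^I$ agree, and the corresponding factor is $1$. For the at most $w$ slots $\tau\in[i,j]$, each contributes at most $e^{\epsilon/w}$, so sequential composition across the window bounds their product by $e^{w\cdot(\epsilon/w)}=e^\epsilon$.

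The main obstacle is the tail effect at slots $\tau>j$: although $x_\tau=y_\tau$, the running deviation $D_{\tau-1}$ retains a shift equal to $\sum_{s=i}^{j}(x_s-y_s)$ carried over from the differing window, so $x_\tau^I$ under $X$ versus $Y$ can still differ and would naively accrue further $e^{\epsilon/w}$ factors, blowing the bound past $e^\epsilon$. I would handle this by treating the state update $D\leftarrow D+d_\tau$ as a post-processing of the publicly observed outputs together with the user's local memory: once $\tau$ exits $[i,j]$, no fresh private information is being exposed, so the residual offset in $D$ should not be charged beyond what was already accounted for during the differing window. Making this rigorous — likely via an output-conditional coupling argument or by reinterpreting the shifted SW call as a post-processed instance of the original call — is the crux of the proof and the step I expect to be the most delicate; completing it would allow concluding that the overall density ratio is at most $e^\epsilon$ and hence that APP satisfies $w$-event $\epsilon$-DP.
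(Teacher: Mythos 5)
Your factorization of the joint output distribution across time slots, with each factor bounded by $p/q=e^{\epsilon/w}$, is exactly the decomposition the paper uses: the paper's proof reverses the sequence, writes the ratio as $\prod_t P(s_t\mid x^I_t)/P(s_t\mid y^I_t)$, bounds each of the $w$ factors by $e^{\epsilon/w}$, and concludes. The important difference is that the paper only ever analyzes the $w$ outputs inside the differing window --- it never considers the outputs at timestamps after the window --- so the ``tail effect'' you single out as the crux is simply not addressed there at all. Your instinct that this is where the real difficulty lies is correct.

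However, the resolution you sketch does not go through. The residual offset $\Delta=\sum_{s=i}^{j}(x_s-y_s)$ carried inside $D$ after the window closes is a function of the \emph{private} differing values, not of the published outputs $s_1,\dots,s_j$: conditioned on the same output prefix, the two executions deterministically hold accumulated deviations differing by $\Delta$, so $x^I_\tau$ and $y^I_\tau$ differ at every later slot $\tau>j$ even though $x_\tau=y_\tau$. Feeding $x_\tau+D_{\tau-1}$ into a fresh SW call is therefore a new adaptive query against the differing window, not post-processing of already-released randomness, and each such call can multiply the likelihood ratio by another factor up to $e^{\epsilon/w}$; over a long stream the product is unbounded. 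No output-conditional coupling can remove this, precisely because the discrepancy in $D$ persists deterministically under any coupling of the outputs. To complete a proof along your lines you would need either to modify the mechanism (e.g., reset the accumulated deviation at window boundaries) or to weaken the claim to a bound on the likelihood ratio of the $w$ outputs within a single window --- the latter being what the paper's argument actually establishes, and all it establishes.
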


\begin{proof}
	Let's start with a straightforward scenario: a stream consisting of only two values, denoted as $X=\{x_1, x_2\}$ and $Y=\{y_1, y_2\}$, with a window size of $w=2$. Let $S=\{s_1, s_2\}$ represent a possible output. Our APP algorithm begins to calculate the input values $x^I_2$ and $y^I_2$ only after receiving $s_1$. Applying Bayes' theorem \cite{joyce2003bayes}, we arrive at the following equation:
	\begin{equation}
		\frac{P\{APP(x_1,x_2)={s_1,s_2}\}}{P\{APP(y_1,y_2)={s_1,s_2}\}}
		= \frac{P(s_1, s_2|x_1, x_2+x_1-s_1)}{P(s_1, s_2|y_1, y_2+y_1-s_1)}.
		\label{equ_ori}
	\end{equation}
	We will demonstrate the calculation of the joint distribution $P(s_1, s_2|x_1, $\ $x_2+x_1-s_1)$. Using the chain rule of probability, we can break down this joint distribution:
	\begin{equation}
		\begin{split}
			&P(s_1, s_2|x_1, x_2+x_1-s_1) \\&= P(s_1|x_1, x_2+x_1-s_1)P(s_2|s_1, x_1, x_2+x_1-s_1).
		\end{split}
		\label{hj1}
	\end{equation}
	
	Firstly, $P(s_1|x_1, x_2+x_1-s_1)$ simplifies to $P(s_1|x_1)$ since $s_1$ is generated solely based on $x_1$ through the SW mechanism. For $P(s_2|s_1, x_1, x_2+x_1-s_1)$, it simplifies to $P(s_2|x_2+x_1-s_1)$, as $s_1$ and $x_1$ become redundant given $x_2+x_1-s_1$. Therefore, Equation \ref{equ_ori} becomes:
	\begin{equation}
		\frac{P(s_1, s_2|x_1, x_2+x_1-s_1)}{P(s_1, s_2|y_1, y_2+y_1-s_1)}=\frac{P(s_1|x_1) P(s_2|x_2+x_1-s_1)}{P(s_1|y_1) P(s_2|y_2+y_1-s_1)}.
		\label{equen0}
	\end{equation}
	
	Since the APP algorithm primarily utilizes the SW mechanism, and each value is allocated a privacy budget of $\epsilon/2$ in this case, for any output $u$ and any output $o$, the probability bounds can be expressed as $q \leq P(\text{APP}(u)=o) \leq p$. From this, we deduce:
	\begin{equation}
		\setlength{\abovedisplayskip}{1pt}
		\setlength{\belowdisplayskip}{1pt}
		\frac{P(s_1|x_1)}{P(s_1|y_1)}
		\leq\frac{ p}{q} = e^{\frac{\epsilon}{2}}.
		\label{equen1}
	\end{equation}
	
	At the second time stamp, the user knows both $x_1$ and $x_2$. Moreover, $s_1$ is known to both the user and the data collector: the user knows it because the perturbation is executed locally, while the data collector receives $s_1$ during the first time step. Consequently, $x_2 + x_1 - s_1$ is a constant for user. Similar to Equation \ref{equen1}, we obtain:
	\begin{equation}
		\setlength{\abovedisplayskip}{1pt}
		\setlength{\belowdisplayskip}{1pt}
		\frac{P(s_2|x_2+x_1-s_1)}{P(s_2|y_2+y_1-s_1)}
		\leq\frac{ p}{q} = e^{\frac{\epsilon}{2}}.
		\label{equen2}
	\end{equation}
	While the sensitivity of $x_i$ in SW is 1, for $x_2 + x_1 - s_1$ it increases to 2. We apply clipping to restrict $x_2 + x_1 - s_1$ to [0,1], which reduces noise scale but without privacy leakage. Therefore, the clipping operation still maintains $\epsilon$-differential privacy. Substituting equations \eqref{equen1} and \eqref{equen2} into equation \eqref{equen0}, we obtain:
	\begin{equation*}
		\setlength{\abovedisplayskip}{1pt}
		\setlength{\belowdisplayskip}{1pt}
		\frac{P(s_1, s_2|x_1, x_2+x_1-s_1)}{P(s_1, s_2|y_1, y_2+y_1-s_1)}\leq e^{\frac{\epsilon}{2}}e^{\frac{\epsilon}{2}}=e^{{\epsilon}}.
		\label{equen3}
	\end{equation*}
	
	To demonstrate $w$-event privacy for the APP algorithm, we need to show that any perturbed subsequence of length $w$ satisfies $\epsilon$-differential privacy. Consider subsequences $\{x_1, \ldots, x_w\}$ from $X$ and $\{y_1, \ldots, y_w\}$ from $Y$. With a corresponding outcome subsequence $S=\{s_1, \ldots, s_w\}$ and its reverse $S^r$, we derive:
	\begin{equation}
		\setlength{\abovedisplayskip}{1pt}
		\setlength{\belowdisplayskip}{1pt}
		\begin{aligned}
			&\frac{P\{APP(x_1,x_2,\!...\!,x_w)\!=\!S\}}{P\{APP(y_1,y_2,\!...\!,y_w)\!=\!S\}} 
			\\	=& \frac{P\{APP(x_w,x_{w-1},\!...\!,x_1)\!=\!S^r\}}{P\{APP(y_w,y_{w-1},\!...\!,y_1)\!=\!S^r\}} \\
			=	&\frac{P(s_w|x^I_w)\dots P(s_2|x^I_2)P(s_1|x_1)}{P(s_w|y^I_w)\dots P(s_2|y^I_2)P(s_1|y_1)} ,
		\end{aligned}
		\label{pf1}
	\end{equation}
	where $x^I_t=x_t+D$. Since $x_t$ and $D$ are known constants for users, when we use $x_t^I$ as the input of SW, we have $\frac{P(s_t|x^I_t)}{P(s_t|y^I_t)}<\frac{p}{q}=e^{\epsilon/w}$. Therefore, we conclude that:
	\begin{equation}
		\setlength{\abovedisplayskip}{1pt}
		\setlength{\belowdisplayskip}{1pt}
		\begin{aligned}
			&\frac{P\{APP(x_1,x_2,\!...\!,x_w)\!=\!S\}}{P\{APP(y_1,y_2,\!...\!,y_w)\!=\!S\}} 
			\leq (\frac{p}{q})^w
			\leq (e^{\epsilon/w})^w
			\leq e^{\epsilon}.
		\end{aligned}
		\label{pf2}
	\end{equation}
	\label{proofof1LB}
\end{proof}
\color{black}
\begin{lemma}
	Given a subsequence ${x_i,\dots,x_t}$ with corresponding deviations ${b_i,\dots,b_t}$ and an accumulated deviation represented by $D$, let $ME(D)$ denote the deviation between the estimated mean and ground turth mean for the corresponding accumulated deviation $D$. We present the following inequality:
	\begin{equation*}
		\setlength{\abovedisplayskip}{1pt}
		\setlength{\belowdisplayskip}{1pt}
		ME({d_i,\dots,d_t}) <ME({d_{i+1},\dots,d_t}) < \dots < ME({d_t}).
	\end{equation*}
	\label{lemmaforFLB}
\end{lemma}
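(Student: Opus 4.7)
The strategy is to unfold the definition of $ME$, exploit the recursive self-calibration of APP to reduce the numerator of each mean error to a difference of just two SW noise samples, and then conclude that the mean error shrinks as the averaging window lengthens.

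First I would rewrite $ME(\{d_k,\dots,d_t\}) = \frac{|\sum_{\ell=k}^{t} d_\ell|}{t-k+1}$, which is simply the absolute mean over the subsequence of deviations, and pair this with the dynamics of APP. Writing $x'_\ell = x^I_\ell + n_\ell$, where $n_\ell$ is the SW noise and $x^I_\ell = x_\ell + D_{\ell-1}$ (ignoring clipping for the moment), one obtains $d_\ell = x_\ell - x'_\ell = -D_{\ell-1} - n_\ell$. Substituting into the update rule $D_\ell = D_{\ell-1} + d_\ell$ collapses the recursion to the striking identity $D_\ell = -n_\ell$ for every $\ell \geq 1$: each fresh perturbation absorbs the entire accumulated drift and only the most recent noise sample survives.

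Second, telescoping the deviations yields
\begin{equation*}
ME(\{d_k,\dots,d_t\})\;=\;\frac{|D_t - D_{k-1}|}{t-k+1}\;=\;\frac{|n_{k-1}-n_t|}{t-k+1}
\end{equation*}
for $k \geq 2$ (with the analogous expression $|n_t|/t$ when $k = 1$). Thus the numerator of every window depends on only two SW noise terms and is bounded by $2(1+b)$, since $|n_\ell| \leq 1+b$ whenever $x^I_\ell \in [0,1]$ and $x'_\ell \in [-b, 1+b]$. Because $\{n_\ell\}$ are i.i.d.\ by construction of SW, all the pairs $(n_{k-1}, n_t)$ induce numerators with a common distribution and hence a common expectation.

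Third, each consecutive inequality in the chain $ME(\{d_i,\dots,d_t\}) < \dots < ME(\{d_t\})$ therefore reduces to comparing $\frac{|n_{k-1}-n_t|}{t-k+1}$ with $\frac{|n_k-n_t|}{t-k}$. Taking expectations, the numerators agree while the denominators differ by one, so the expected mean error strictly decreases as the window starts earlier, and the desired monotonicity follows.

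The main obstacle is the clipping step $x^I_\ell \gets \text{truncate}(x_\ell + D, [0,1])$, which breaks the clean identity $d_\ell = -D_{\ell-1} - n_\ell$ whenever the raw input leaves $[0,1]$. I would handle this by arguing that clipping can only push $|D_\ell|$ nearer to zero (it never amplifies the drift), so the telescoping bound $|\sum_{\ell=k}^{t} d_\ell| \leq |D_t| + |D_{k-1}|$ remains $O(1)$ uniformly in $k$; this bounded-numerator property is the only feature required for the monotonic decay in $t-k+1$ to carry over to the clipped regime.
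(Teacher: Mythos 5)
Your proof is correct and follows essentially the same route as the paper's: both exploit the telescoping cancellation built into APP (your identity $D_\ell=-n_\ell$ is exactly the paper's observation that the estimated mean is exact when the final input is left unperturbed) to collapse the window's total deviation to a bounded residual, so that the mean error decays as $1/(t-k+1)$. Your write-up is somewhat more explicit than the paper's — you note that the strict chain can only hold in expectation and that clipping requires an argument — but your two supporting claims there (that the SW noises $n_\ell$ are i.i.d., and that clipping ``never amplifies the drift'') are asserted rather than proved; the first is false in general since the SW noise distribution depends on the input, and the second overlooks the overshoot term that clipping leaves inside $D_\ell$, though only the uniform boundedness of the numerator is actually needed and that survives.
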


\begin{proof}
	We prove this result inductively by analyzing streams of increasing length. Starting with a basic case of two consecutive values ($x_{t-1}$ and $x_t$), we extend the analysis to three values, and finally generalize to an arbitrary stream length of $t-i+1$ values. We consider the case when the $x^I_t$ is not perturbed and the estimated mean is then:
	\begin{equation*}
		\setlength{\abovedisplayskip}{1pt}
		\setlength{\belowdisplayskip}{1pt}
		\frac{x'_{t-1} \!+\! x^I_t }{2}\!=\!\frac{x'_{t-1} + x_t + x_{t-1} - x'_{t-1}}{2} \!=\! \frac{x_{t-1} + x_t}{2},
	\end{equation*}
	which is the true mean of $\{x_{t-1}, x_t\}$. Then we introduce $x_{t-2}$, we consider the case when the $x^I_{t}$ is not perturbed similarly. The estimated mean is then:
	\begin{equation*}
		\setlength{\abovedisplayskip}{1pt}
		\setlength{\belowdisplayskip}{1pt}
		\begin{split}
			&\frac{x'_{t-2} \!+\!x'_{t-1} \!+\! x^I_t }{3}\!\\&=\!\frac{x'_{t-2} \!+\!x'_{t-1} \!+\! x_t \!+\! x_{t-1} \!-\! x'_{t-1}\!+\! x_{t-2} \!-\! x'_{t-2}}{3} \\ \!&=\! \frac{x_{t-2} +x_{t-1} + x_t}{3}.
		\end{split}
	\end{equation*}
	
	Similar to the case $\{x_{t-2},x_{t-1},x_{t}\}$, we can generalize this to $t$ values: when the $t$-th input value is not perturbed, the estimated mean is:
	\begin{equation*}
		\begin{split}
			\setlength{\abovedisplayskip}{0pt}
			\setlength{\belowdisplayskip}{0pt}
			\frac{x_i + x_2 + \dots+x_t}{t-i+1},
		\end{split}
	\end{equation*}
	which is the same as the ground truth. Assume the perturbation result for $x^I_t$ is $R$. The range of $R$ is $[-b, 1+b]$. As $\epsilon$ approaches 0, $b$ approaches $\frac{1}{2}$. Therefore, the maximum range of $R$ is $\left[-\frac{1}{2}, 1+\frac{1}{2}\right]$. The final estimated mean deviation depends only on the perturbation $R$ of the previous item. And we have:
	\begin{equation*}
		\setlength{\abovedisplayskip}{1pt}
		\setlength{\belowdisplayskip}{1pt}
		ME(d_r,\dots,d_t) = \frac{R}{t-r+1}.
	\end{equation*}
	
	When $t - r \leq 1$, the influence of $R$ on the numerator is less than that on the denominator, and we can conclude that the bias error increases as $r$ becomes larger.
	\label{initializationalgorithm}
\end{proof}

\color{black}
\begin{lemma}
	\label{lemma3}
	Let $\{x_1, x_2, \dots, x_n\}$ be the true time series, $\{x'_1, x'_2, \dots, x'_n\}$ be the directly perturbed time series without APP, and $\{y_1, y_2, \dots, y_n\}$ be the time series using APP and smoothing. Define the cosine similarities between the true time series and the two perturbed time series as $\cos(\theta{\mathrm{Direct}})$ and $\cos(\theta{\mathrm{APP}})$. Then, the following inequality holds:
	\begin{equation*}
		\setlength{\abovedisplayskip}{1pt}
		\setlength{\belowdisplayskip}{1pt}
		\mathbb{E}[\cos(\theta{\mathrm{APP}})] > \mathbb{E}[\cos(\theta_{\mathrm{Direct}})]
	\end{equation*}
\end{lemma}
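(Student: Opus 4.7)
The plan is to reduce the cosine-similarity comparison to the variance comparison already established in Lemma~\ref{lemmaforsmoothing}. Writing the residuals $x'_i = x_i + e_i^{D}$ for direct SW and $y_i = x_i + e_i^{A}$ for APP plus smoothing, the cosine similarity between the true stream and a perturbed copy becomes
\[
\cos(\theta) \;=\; \frac{\|x\|^{2} + \langle x, e\rangle}{\|x\|\,\sqrt{\|x\|^{2} + 2\langle x, e\rangle + \|e\|^{2}}}.
\]
The strategy is to show that, after taking expectation, $\cos(\theta)$ is monotonically decreasing in the expected residual energy $\mathbb{E}[\|e\|^{2}]$ (with the bias term controlled separately), and then to invoke Lemma~\ref{lemmaforsmoothing} to supply the strict variance gap between the two mechanisms.

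Concretely, I would first argue that the expected cross term $\mathbb{E}[\langle x, e\rangle]$ is of the same small order in both mechanisms, so it nearly cancels in the comparison. Because the SW density is symmetric over the high-probability window $[v-b, v+b]$, the per-coordinate bias is driven only by the tail mass $q$; smoothing is linear and preserves means, so the APP-plus-smoothing residual inherits the same small bias. Next I would Taylor-expand $\cos(\theta)$ around $e=0$ to second order, obtaining
\[
\mathbb{E}[\cos(\theta)] \;=\; 1 \;-\; \frac{\mathbb{E}[\|e\|^{2}]}{2\|x\|^{2}} \;+\; \frac{\mathbb{E}[\langle x, e\rangle^{2}]}{2\|x\|^{4}} \;+\; R,
\]
where $R$ collects third-order and higher remainders. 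Lemma~\ref{lemmaforsmoothing} shows that smoothing with window $2k+1$ divides each coordinate's variance by $2k+1$, so $\mathbb{E}[\|e^{A}\|^{2}] < \mathbb{E}[\|e^{D}\|^{2}]$; the same averaging effect shrinks $\mathbb{E}[\langle x, e^{A}\rangle^{2}]$ because that inner product is a weighted sum of independent residual coordinates. Substituting into the expansion yields the claimed $\mathbb{E}[\cos(\theta_{\mathrm{APP}})] > \mathbb{E}[\cos(\theta_{\mathrm{Direct}})]$.

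The main obstacle is making the Taylor expansion rigorous, since the SW residual is $O(1)$ rather than infinitesimal. I would address this using the uniform bound $\|e\|_{\infty} \le 1+b$ (from the SW output range $[-b, 1+b]$) together with the fact that $\|x\|^{2}$ grows linearly in $n$; after normalization by $\|x\|^{2}$, the third-order remainder $R$ becomes $O(n^{-1/2})$ relative to the leading variance term and can therefore be absorbed for long enough streams. A cleaner fallback is a direct monotonicity argument: fix any realization of the other coordinates and observe that $\cos(\theta)$ is decreasing in residual energy along directions transverse to $x$, so the mean-squared-error inequality supplied by Lemma~\ref{lemmaforsmoothing} transfers to an inequality on the expected cosine similarity without invoking any Taylor expansion. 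Either route reduces the lemma to the already-established per-coordinate variance gap.
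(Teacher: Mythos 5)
Your proposal follows essentially the same route as the paper's own proof: decompose each perturbed series as signal plus noise, observe that the expected cosine similarity is (to second order) a decreasing function of the expected residual energy $\mathbb{E}[\|e\|^{2}]$ relative to $\|x\|^{2}$, and supply the gap via the variance reduction from smoothing --- the paper's $\mathbb{E}[\cos\theta]\approx 1/\sqrt{1+n\sigma^{2}/\|x\|^{2}}$ is just the unexpanded form of your Taylor series, and your remainder/monotonicity discussion is if anything more careful than the paper's. The only caveat is that Lemma~\ref{lemmaforsmoothing} compares smoothed APP against \emph{unsmoothed APP}, not against direct SW, so the step $\mathbb{E}[\|e^{A}\|^{2}]<\mathbb{E}[\|e^{D}\|^{2}]$ additionally needs the (asserted, in both your proof and the paper's) fact that the per-coordinate APP residual variance does not exceed that of direct SW before smoothing is applied.
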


\begin{proof}
	We demonstrate that APP combined with smoothing achieves higher cosine similarity with the original time series than direct perturbation. We define the cosine similarities as:
	\begin{equation*}
		\setlength{\abovedisplayskip}{1pt}
		\setlength{\belowdisplayskip}{1pt}
		\cos(\theta_{\mathrm{Direct}}) = \frac{\langle x, x' \rangle}{|x| |x'|} = \frac{\sum_{i=1}^n x_i x'i}{\sqrt{\sum{i=1}^n x_i^2} \sqrt{\sum_{i=1}^n (x'i)^2}}
	\end{equation*}
	\begin{equation*}
		\setlength{\abovedisplayskip}{1pt}
		\setlength{\belowdisplayskip}{1pt}
		\cos(\theta{\mathrm{APP}}) = \frac{\langle x, y \rangle}{|x| |y|} = \frac{\sum_{i=1}^n x_i y_i}{\sqrt{\sum_{i=1}^n x_i^2} \sqrt{\sum_{i=1}^n y_i^2}}
	\end{equation*}
	where $x'_i = x_i + \beta_i$ with $\beta_i$ being random noise, and $y_i$ is the APP-perturbed and smoothed time series. For direct perturbation, $\beta_i$ is independent noise with $\mathbb{E}[\beta_i] = 0$ and $\mathrm{Var}(\beta_i) = \sigma\beta^2$. This affects the cosine similarity in two ways. The numerator contains a noise term:
	\begin{equation*}
		\setlength{\abovedisplayskip}{1pt}
		\setlength{\belowdisplayskip}{1pt}
		\langle x, x' \rangle = \sum_{i=1}^n x_i(x_i + \beta_i) = |x|^2 + \sum_{i=1}^n x_i\beta_i
	\end{equation*}
	While $\mathbb{E}[\sum_{i=1}^n x_i\beta_i] = 0$, the variance of this term is:
	\begin{equation*}
		\setlength{\abovedisplayskip}{1pt}
		\setlength{\belowdisplayskip}{1pt}
		\mathrm{Var}\left(\sum_{i=1}^n x_i\beta_i\right) = \sum_{i=1}^n x_i^2\mathrm{Var}(\beta_i) = \sigma_\beta^2|x|^2
	\end{equation*}
	The denominator is inflated by noise, with expected squared norm:
	\begin{equation*}
		\begin{split}
			\setlength{\abovedisplayskip}{0pt}
			\setlength{\belowdisplayskip}{0pt}
			\mathbb{E}[|x'|^2]  &= \mathbb{E}\left[\sum_{i=1}^n (x_i + \beta_i)^2\right] \\&= \sum_{i=1}^n(x_i^2 + 2x_i\mathbb{E}[\beta_i] + \mathbb{E}[\beta_i^2]) = |x|^2 + n\sigma_\beta^2
		\end{split}
	\end{equation*}
	
	In contrast, APP leverages prior knowledge to generate adaptive noise $\eta_i$ with lower variance $\mathrm{Var}(\eta_i) = \sigma_\eta^2 < \sigma_\beta^2$ and adaptively generated to align better with the original signal structure. The APP-perturbed and smoothed time series is defined as:
	\begin{equation*}
		\setlength{\abovedisplayskip}{0pt}
		\setlength{\belowdisplayskip}{0pt}
		y_i = \frac{1}{s} \sum_{j=i-\lfloor s/2 \rfloor}^{i+\lfloor s/2 \rfloor} (x_j + \eta_j)
	\end{equation*}
	where $s$ is the smoothing window size. The smoothing operation further reduces noise variance. Assuming independent noise, the variance of smoothed noise becomes:
	\begin{equation*}
		\setlength{\abovedisplayskip}{0pt}
		\setlength{\belowdisplayskip}{0pt}
		\mathrm{Var}\left(\frac{1}{s}\sum_{j=i-\lfloor s/2 \rfloor}^{i+\lfloor s/2 \rfloor} \eta_j\right) = \frac{1}{s^2} \sum_{j=i-\lfloor s/2 \rfloor}^{i+\lfloor s/2 \rfloor} \mathrm{Var}(\eta_j) = \frac{\sigma_\eta^2}{s}
	\end{equation*}
	This dual advantage affects the cosine similarity: the numerator's noise component has reduced variance $\mathrm{Var}\left(\langle x, y \rangle - |x|^2\right) < \mathrm{Var}\left(\langle x, x' \rangle - |x|^2\right)$, and the denominator experiences less inflation:
	\begin{equation*}
		\setlength{\abovedisplayskip}{0pt}
		\setlength{\belowdisplayskip}{0pt}
		\mathbb{E}[|y|^2] \approx |x|^2 + \frac{n\sigma_\eta^2}{s}
	\end{equation*}
	
	Comparing the simplified expressions for expected cosine similarities:
	\begin{equation*}
		\setlength{\abovedisplayskip}{0pt}
		\setlength{\belowdisplayskip}{0pt}
		\mathbb{E}[\cos(\theta_{\mathrm{Direct}})] \approx \frac{1}{\sqrt{1 + \frac{n\sigma_\beta^2}{|x|^2}}}
	\end{equation*}
	versus
	\begin{equation*}
		\setlength{\abovedisplayskip}{0pt}
		\setlength{\belowdisplayskip}{0pt}
		\mathbb{E}[\cos(\theta_{\mathrm{APP}})] \approx \frac{1}{\sqrt{1 + \frac{n\sigma_\eta^2}{s|x|^2}}}
	\end{equation*}
	Since $\sigma_\eta^2 < \sigma_\beta^2$ and $s > 1$, we definitively have $\frac{n\sigma_\eta^2}{s|x|^2} < \frac{n\sigma_\beta^2}{|x|^2}$. Therefore $\mathbb{E}[\cos(\theta_{\mathrm{APP}})] > \mathbb{E}[\cos(\theta_{\mathrm{Direct}})]$. 
\end{proof}
\color{black}

\subsection{Clipped Accumulated Perturbation Parameterization (CAPP)} 
In the IPP and APP algorithms, input values are directly clipped to [0,1] before applying SW perturbation, which is a simplistic approach that does not consider the privacy budget or the impact of accumulated deviation on utility. In CAPP, we first clip the accumulated values to a range $[l,u]$, then normalize them to $[0,1]$ for SW perturbation, and finally denormalize back to $[l,u]$. This approach provides more flexibility in handling accumulated deviations while maintaining privacy guarantees. The main content of this subsection introduces the CAPP algorithm and demonstrates how to determine optimal $[l,u]$ ranges to achieve better utility.

\begin{algorithm}
	\caption{Clipped Accumulated Perturbation Parameterization}
	\begin{flushleft}
		{\bf Input:}
		Privacy budget $\epsilon$, windows size $w$ \\
		{\bf Output:} 
		$\hat{X}_{(i,j)}=\{\hat{x_i},\dots,\hat{x}_t,\dots,\hat{x}_j\}$
	\end{flushleft}
	\begin{algorithmic}[1]
		\State Determine $l$ and $u$;
		\State $\epsilon_w =\epsilon/w$;
		\State Initialize accumulated deviation $D=0$
		\For{each time slot $t$}
		\State Update input value: $x^I_t=x_t+D$
		\State Clipping: $$x^I_t=\begin{cases}
			l, & \text{if } x^I_t < l,\\
			u, & \text{if } x^I_t > u.
		\end{cases}$$
		\State Normalization: $x^I_t=\frac{x^I_t-l}{u-l}$
		\State Perturbation: $x'_t=SW(x^I_t)$
		\State Denormalization: $x'_t=x'_t(u-l)+l$
		\State Calculate deviation: $d_t=x_t-x'_t$
		\State $D=D+d_t$
		\EndFor
		\State $\hat{X}_{(i,j)}=SMA(X'_{(i,j)})$;
		\State \textbf{return} $\hat{X}_{(i,j)}=\{\hat{x_i},\dots,\hat{x}_t,\dots,\hat{x}_j\}$
	\end{algorithmic}
	\label{algorithm_CFBL}
\end{algorithm}

\textbf{The procedure of CAPP.} The Algorithm \ref{algorithm_CFBL} initiates by determining the lower bound $l$ and upper bound $u$ for input values (line 1). The privacy budget per time slot $\epsilon_w$ is calculated (line 2), and an initial accumulated deviation $D$ is set to zero (line 3). For each $x_t$ within the stream, the algorithm first retrieves an input value based on $D$. It then ensures $x^I_t$ remains within the $[l, u]$ bounds by clipping: if $x^I_t$ falls below the lower bound $l$, it is raised to $l$, and if it exceeds the upper bound $u$, it is reduced to $u$ (line 6). Next, the algorithm normalizes the clipped input value to the range [0,1]. This normalization satisfies the input range requirement of the SW algorithm (lines 7-8). Following this, the algorithm employs the SW to perturb $x^I_t$, and the perturbed value $x'_t$ is then denormalized to the original range $[l, u]$ (line 9). It computes the deviation $d_t$ (line 10) and the accumulated deviation $D$ (line 11). Finally, the algorithm executes a smoothing procedure on the perturbed values and returns the estimated stream data (lines 13-14). 

According to Theorem \ref{haha}, CAPP satisfies $\epsilon$-LDP while applying a clipping operation to the input values. This clipping operation reduces the noise scale without compromising privacy guarantees.
\color{black}
\begin{theorem}
	\label{haha}
	Given any two $w$-neighboring streams $X$ and $Y$ and an arbitrary output $S$ from the CAPP algorithm with any clip range $[l,u]$, if each stream value is allocated a privacy budget of $\frac{\epsilon}{w}$, then the CAPP algorithm satisfies $w$-event privacy.
\end{theorem}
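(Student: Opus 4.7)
The plan is to mirror the chain-rule decomposition used in the proof of Theorem \ref{LDPproof22} and reduce the $w$-event guarantee of CAPP to a per-time-slot LDP bound on its modified SW step. Given two $w$-neighboring streams $X$ and $Y$, a window of slots $\{t, t+1, \ldots, t+w-1\}$, and an output prefix $S = \{s_t, \ldots, s_{t+w-1}\}$, I would first expand $\Pr[\mathrm{CAPP}(X) = S]/\Pr[\mathrm{CAPP}(Y) = S]$ via Bayes' rule. Conditioning on all earlier outputs, the accumulated deviation $D$ at each slot becomes a fixed constant for the user, so the ratio factorizes into $w$ per-slot factors of the form $P(s_{t+k}\mid x^I_{t+k})/P(s_{t+k}\mid y^I_{t+k})$, matching the structure of Equation \eqref{pf1}.

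Next I would bound each per-slot factor by $e^{\epsilon/w}$. The single-slot CAPP step decomposes into four operations: (i) deterministic clipping $x^I_t \mapsto \max(l,\min(u,x^I_t))$, (ii) deterministic affine normalization into $[0,1]$, (iii) SW perturbation with budget $\epsilon_w = \epsilon/w$, and (iv) deterministic affine denormalization back to $[l,u]$. Operations (i), (ii), and (iv) use no fresh randomness and are therefore pre- or post-processing that cannot degrade the LDP guarantee of (iii). Because SW provides $(\epsilon/w)$-LDP uniformly over input pairs in $[0,1]$, its output-probability ratio is at most $p/q = e^{\epsilon/w}$ for any two normalized inputs, and denormalization, being a bijection, preserves this ratio on the observed outputs in $[l,u]$.

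Multiplying the $w$ per-slot bounds across the window then yields $\Pr[\mathrm{CAPP}(X)=S]/\Pr[\mathrm{CAPP}(Y)=S] \leq (e^{\epsilon/w})^w = e^{\epsilon}$, which is exactly the $w$-event privacy statement. The main obstacle is conceptual rather than computational: I must argue carefully that the clipping boundaries $[l,u]$ do not themselves leak information about the private stream. Since line 1 of Algorithm \ref{algorithm_CFBL} fixes $l$ and $u$ before data collection begins, they are independent of any particular realization of $X$ or $Y$ and can be treated as public constants; once this is made explicit, the reduction to the per-slot SW guarantee proceeds exactly as in Theorem \ref{LDPproof22}, with the normalization/denormalization pair simply relocating the input and output domains without touching the privacy analysis.
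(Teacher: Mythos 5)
Your proposal is correct and follows essentially the same route as the paper: reduce to the per-slot SW bound of Theorem \ref{LDPproof22} and observe that clipping, normalization, and denormalization are deterministic (data-independent) pre-/post-processing steps that neither add randomness nor consume budget. You are somewhat more explicit than the paper -- in particular about denormalization being a bijective post-processing map and about $[l,u]$ being fixed before collection and hence public -- but these are elaborations of the same argument rather than a different one.
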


\begin{proof}
	For Equation \ref{equen2}, although $x_2 + x_1 - s_1$ is clipped to $[l, u]$, the normalization transformation ($x_{\text{norm}} = \frac{x - l}{u - l}$), which ensures the input is always restricted to $[0,1]$, preserves the privacy guarantees. Since both clipping and normalization are deterministic operations, they neither introduce additional randomness nor alter the privacy budget of the original mechanism. Therefore, the privacy guarantee remains:
	\begin{equation*}
		\frac{P(s_2|x_2 + x_1 - s_1)}{P(s_2|y_2 + y_1 - s_1)} \leq e^{\epsilon / 2}.
	\end{equation*}
	The remaining proof follows the same steps as in Theorem \ref{LDPproof22}.
	
\end{proof}
\color{black}

\textbf{The choice of $l$ and $u$.} The Algorithm \ref{algorithm_CFBL} involves choosing $l$ and $u$, which affect the utility of our method through different types of errors. As more stream values are collected, the range of $x^I_t$ expands. Clipping $x^I_t$ into a fixed range $[l,u]$ presents a trade-off: while a wider range preserves extreme values, it leads to higher sensitivity, requiring larger noise introduction and thus increasing sensitivity error $e_s$; conversely, a narrow range reduces data sensitivity and requires less noise for the same privacy guarantee, but excessive narrowing discards some information, introducing significant discarding error $e_d$.

The specific processes for calculating $e_s$ and $e_d$ are given as follows. We define an error function $T(e_s,e_d)$ to determine the clipping bound:
\begin{equation}
	\setlength{\abovedisplayskip}{1pt}
	\setlength{\belowdisplayskip}{1pt}
	\begin{split}
		T(e_s,e_d)&=e_s-e_d.\\
	\end{split}
	\label{Teped}
\end{equation}

The calculation of $[l,u]$ is based on $T(e_s,e_d)$:
$$\left\{
\begin{array}{rcl}
	l    = 0-T(e_s,e_d),\\
	u =1+T(e_s,e_d).  \\
\end{array} \right. 
$$


\textit{Sensitivity error $e_s$.} This error quantifies the deviation introduced by clipping. The calculation is defined as:
\begin{equation*}
	\setlength{\abovedisplayskip}{2pt}
	\setlength{\belowdisplayskip}{2pt}
	e_s=e^{x-E(SW(x))}-1,
\end{equation*}
where $x-E(SW(x))$ measures the deviation between the original and expected perturbed values. Using $e^{x-E(SW(x))}-1$ instead of simply $x-E(SW(x))$ to measure error has two advantages. First, it ensures $e_s$ approaches 0 for large $\epsilon$, where sensitivity reduction becomes unnecessary. Second, the exponential function mapping in $e^{x-E(SW(x))}-1$ can amplify even small differences between the original value and perturbed value. Given the unknown distribution of original data, we consider the worst-case scenario where $x = 1$.

\textit{Discarding error $e_d$.} To characterize the discarding error $e_d$, we first establish the probability density function $D_x=x-SW(x)$ to analyze the deviation distribution. Based on this, we define $e_d$ using the standard deviation of $D_x$. Formally:
\begin{equation*}
	\setlength{\abovedisplayskip}{2pt}
	\setlength{\belowdisplayskip}{2pt}
	\begin{split}
		e_d = \sqrt{Var(D_x)}.
	\end{split}
\end{equation*}
Smaller $\epsilon$ leads to larger $Var(D_x)$, indicating more significant deviations, while clipping within a narrow range would result in excessive information loss. Next, we give the calculation process of $Var(D_x)$. First, the probability density function of $G(D_x)$ is:
$$G(D_x)=\left\{
\begin{array}{rcl}
	q,      &      &   \text{if}\quad D_x\in[-1-b+x,-b],\\
	p  &      & \text{if}\quad D_x\in(-b,b),\\
	q  &      & \text{if} \quad D_x\in[b,b+x].\\
\end{array} \right. 
$$
The expectation and variance of this distribution can be calculated, with the results as follows. The expected value of $D_x$ is given by:
\begin{equation*}
	\setlength{\abovedisplayskip}{2pt}
	\setlength{\belowdisplayskip}{2pt}
	E(D_x)=q((1+2b)x-(b+\frac{1}{2})).
\end{equation*}
The expectation of $G(D_x^2) $ is:
\begin{equation*}
	\setlength{\abovedisplayskip}{2pt}
	\setlength{\belowdisplayskip}{2pt}
	E(D_x^2)\! = q\!\frac{\!-\!3b^2\!+\!6bx^2\!-\!6bx\!+\!3b\!+\!3x^2\!-\!3x\!+\!1}{3}\!+\! \frac{2pb^3}{3}.
\end{equation*}
For $e_d$ computation, we similarly consider the worst-case scenario with $x=1$ due to the unknown ground-truth. Then, the variance of $D_x$ is:
\begin{equation*}
	\setlength{\abovedisplayskip}{2pt}
	\setlength{\belowdisplayskip}{2pt}
	\begin{split}
		&Var(D_x) = E(D_x^2) - (E(D_x))^2 \\
		&= \frac{2b^3p}{3} -b^2q^2+b^2q-bq^2+bq-\frac{q^2}{4}+\frac{q}{3}.
	\end{split}
\end{equation*}
\color{black}
\subsection{Discussion on Generalizability}
\textbf{Crowd-level statistics.}
We extend our analysis from individual-level to crowd-level statistics by leveraging the collective information from independently and identically distributed (i.i.d.) users. Specifically, we first estimate individual statistics over a subsequence for each user, then analyze the distribution of these statistics across the population. For example, we estimate the mean values ${M'_1, \dots, M'_n}$ for each user's subsequence and characterize the distribution $D$ that these values follow. We aim to demonstrate that more accurate estimations of $\{M'_1, \dots, M'_n\}$ can lead to a more precise characterization of $D$. Formally, our theoretical framework establishes that accurate individual-level estimations inherently result in accurate crowd-level statistical inferences, as formalized in Theorem \ref{theorem_p}.

\begin{theorem}
	\label{theorem_p}
	Given any user's true feature value $A$ and estimated feature value $\hat{A}$, with the estimation error bounded by $|\hat{A} - A| \leq \beta$ (where $\beta$ is a fixed constant), when the sample size $N$ is sufficiently large, the maximum difference between the empirical distribution function $F_N(x)$ based on the estimated values $\hat{A}_i$ and the true distribution function $F(x)$ does not exceed $\eta$ with probability at least $1-\delta$, i.e.,
	\begin{equation*}
		\setlength{\abovedisplayskip}{2pt}
		\setlength{\belowdisplayskip}{2pt}
		P\left(\sup_x |F_N(x) - F(x)| \leq \eta \right) \geq 1 - \delta,
	\end{equation*}
	where $\eta > \beta$ is the given error bound and $\delta$ is the confidence parameter.
\end{theorem}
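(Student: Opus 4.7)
The plan is to decompose $\sup_x|F_N(x)-F(x)|$ into a \emph{sampling error} between the unobservable oracle empirical CDF $F_N^*(x)=\frac{1}{N}\sum_{i=1}^N \mathbb{1}\{A_i\le x\}$ and the true $F$, plus a \emph{perturbation error} between $F_N$ (built from the estimates $\hat A_i$) and $F_N^*$. The triangle inequality gives
\begin{equation*}
\sup_x|F_N(x)-F(x)| \;\le\; \sup_x|F_N(x)-F_N^*(x)| \;+\; \sup_x|F_N^*(x)-F(x)|,
\end{equation*}
so it suffices to control each piece separately and then use the slack $\eta-\beta>0$ provided by the hypothesis.

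For the sampling term, I would invoke the Dvoretzky--Kiefer--Wolfowitz inequality: since the $A_i$ are i.i.d.\ with CDF $F$, one has $P(\sup_x|F_N^*(x)-F(x)|>t)\le 2\exp(-2Nt^2)$ for every $t>0$. Choosing $t=\sqrt{\tfrac{1}{2N}\ln(2/\delta)}$ makes this probability at most $\delta$, and for $N$ sufficiently large we can force $t$ to be arbitrarily small, in particular smaller than $\eta-\beta$. This step uses only standard concentration and no assumption beyond i.i.d.\ sampling.

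For the perturbation term I would argue deterministically. The bound $|\hat A_i-A_i|\le\beta$ gives the inclusions $\{A_i\le x-\beta\}\subseteq\{\hat A_i\le x\}\subseteq\{A_i\le x+\beta\}$, which sum to the pointwise sandwich $F_N^*(x-\beta)\le F_N(x)\le F_N^*(x+\beta)$. Combining with the DKW bound yields, with probability at least $1-\delta$,
\begin{equation*}
F(x-\beta)-t \;\le\; F_N(x) \;\le\; F(x+\beta)+t,
\end{equation*}
so that $|F_N(x)-F(x)|\le \max\{F(x+\beta)-F(x),\,F(x)-F(x-\beta)\}+t$ uniformly in $x$.

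The main obstacle is closing this last inequality to $\eta$. The modulus of continuity $\max\{F(x+\beta)-F(x),F(x)-F(x-\beta)\}$ is bounded in terms of $\beta$ only under some regularity of $F$, e.g.\ a bounded density so that $F$ is Lipschitz. I would therefore make such a hypothesis explicit (or weaken the claim to $\eta > \omega_F(\beta)+t$, where $\omega_F$ is the modulus of continuity of $F$): under a 1-Lipschitz assumption on $F$, the modulus term is at most $\beta$, and choosing $N$ large enough that $t\le \eta-\beta$ gives $\sup_x|F_N(x)-F(x)|\le \beta+t\le \eta$ with probability at least $1-\delta$. This is exactly where the assumption $\eta>\beta$ enters: it provides the slack needed to absorb the estimation bias, while the remaining slack $\eta-\beta-\omega_F(\beta)$ is filled by concentration of the oracle empirical distribution.
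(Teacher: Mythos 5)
Your decomposition is the same as the paper's: a triangle inequality splitting $\sup_x|F_N(x)-F(x)|$ into the perturbation term $\sup_x|F_N(x)-F_N^{*}(x)|$ and the sampling term $\sup_x|F_N^{*}(x)-F(x)|$, with DKW and the choice $N \ge \ln(2/\delta)/(2(\eta-\beta)^2)$ handling the latter, and the inclusions $\{A_i\le x-\beta\}\subseteq\{\hat A_i\le x\}\subseteq\{A_i\le x+\beta\}$ giving the sandwich $F_N^{*}(x-\beta)\le F_N(x)\le F_N^{*}(x+\beta)$. Where you diverge is in how the sandwich is closed, and your route is the more defensible one. The paper concludes directly that $\sup_x|F_N(x)-F_N^{*}(x)|\le\beta$ on the grounds that ``the growth rate of an empirical distribution function does not exceed 1''; but an empirical CDF is a step function, and $F_N^{*}(x+\beta)-F_N^{*}(x)$ equals the fraction of samples falling in $(x,x+\beta]$, which can be as large as $1$ regardless of $\beta$ (e.g., if many $A_i$ cluster at a single point). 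So the paper's bound on the perturbation term is not justified as written. You instead push the sandwich through DKW to $F(x\pm\beta)\pm t$ and pay the modulus of continuity $\omega_F(\beta)$, correctly observing that one must add a regularity hypothesis (bounded density, or $1$-Lipschitz $F$, or restate the conclusion with $\omega_F(\beta)$ in place of $\beta$) for the final bound $\omega_F(\beta)+t\le\eta$ to hold. That added assumption is not a defect of your argument --- it is precisely what the theorem needs and what the paper's proof silently (and incorrectly) assumes away. One minor bookkeeping point: the slack consumed is $\eta-\omega_F(\beta)$, all of which is available to absorb $t$; your phrase ``remaining slack $\eta-\beta-\omega_F(\beta)$'' double-counts $\beta$, but this does not affect the validity of the argument under the Lipschitz hypothesis.
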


\begin{proof}
	We need to analyze the difference between the empirical distribution function $F_N(x) = \frac{1}{N} \sum_{i=1}^N I(\hat{A}_i \leq x)$ constructed from the estimated values $\hat{A}_i$ and the true distribution function $F(x) = P(A \leq x)$.
	
	To facilitate the analysis, we introduce the empirical distribution function based on the true values $A_i$, defined as $F_N^{\text{true}}(x) = \frac{1}{N} \sum_{i=1}^N I(A_i \leq x)$, and apply the triangle inequality to decompose the total error into two parts:
	\begin{equation*}
		\setlength{\abovedisplayskip}{0pt}
		\setlength{\belowdisplayskip}{0pt}
		\begin{split}
			\sup_x |F_N(x) - F(x)| \leq \sup_x |F_N(x) -  \\ F_N^{\text{true}}(x)| + \sup_x |F_N^{\text{true}}(x) - F(x)|.
		\end{split}
	\end{equation*}
	
	We first analyze the estimation error $\sup_x |F_N(x) - F_N^{\text{true}}(x)|$. Since $|\hat{A}_i - A_i| \leq \beta$, for any value of $x$, we have:
	\begin{equation*}
		\setlength{\abovedisplayskip}{0pt}
		\setlength{\belowdisplayskip}{0pt}
		A_i \leq x \Rightarrow \hat{A}_i \leq x + \beta
	\end{equation*}
	and
	\begin{equation*}
		\setlength{\abovedisplayskip}{0pt}
		\setlength{\belowdisplayskip}{0pt}
		\hat{A}_i \leq x \Rightarrow A_i \leq x + \beta
	\end{equation*}
	
	This implies that, for each point $x$:
	\begin{equation*}
		\setlength{\abovedisplayskip}{0pt}
		\setlength{\belowdisplayskip}{0pt}
		F_N^{\text{true}}(x - \beta) \leq F_N(x) \leq F_N^{\text{true}}(x + \beta)
	\end{equation*}
	
	Therefore,
	\begin{align*}
		\setlength{\abovedisplayskip}{2pt}
		\setlength{\belowdisplayskip}{2pt}
		F_N(x) - F_N^{\text{true}}(x) &\leq F_N^{\text{true}}(x + \beta) - F_N^{\text{true}}(x)\\
		F_N^{\text{true}}(x) - F_N(x) &\leq F_N^{\text{true}}(x) - F_N^{\text{true}}(x - \beta)
	\end{align*}
	
	Since the growth rate of an empirical distribution function does not exceed 1, we have:
	\begin{equation*}
		\setlength{\abovedisplayskip}{2pt}
		\setlength{\belowdisplayskip}{2pt}
		\sup_x |F_N(x) - F_N^{\text{true}}(x)| \leq \beta
	\end{equation*}
	
	Next, we examine the convergence error $\sup_x |F_N^{\text{true}}(x) - F(x)|$. According to the Dvoretzky–Kiefer–Wolfowitz (DKW) inequality, for any $\epsilon > 0$:
	\begin{equation*}
		\setlength{\abovedisplayskip}{2pt}
		\setlength{\belowdisplayskip}{2pt}
		P\left(\sup_x |F_N^{\text{true}}(x) - F(x)| > \epsilon \right) \leq 2e^{-2N\epsilon^2}
	\end{equation*}
	
	Taking $\epsilon = \eta - \beta$ and setting $2e^{-2N(\eta-\beta)^2} \leq \delta$, we obtain:
	\begin{equation*}
		\setlength{\abovedisplayskip}{2pt}
		\setlength{\belowdisplayskip}{2pt}
		N \geq \frac{\ln(2/\delta)}{2(\eta-\beta)^2}
	\end{equation*}
	
	When the sample size satisfies the above condition, we have:
	\begin{align*}
		\setlength{\abovedisplayskip}{2pt}
		\setlength{\belowdisplayskip}{2pt}
		P\left(\sup_x |F_N^{\text{true}}(x) - F(x)| \leq \eta - \beta \right) &\geq 1 - \delta
	\end{align*}
	Combining these results, we obtain:
	\begin{align*}
		\setlength{\abovedisplayskip}{2pt}
		\setlength{\belowdisplayskip}{2pt}
		P\left(\sup_x |F_N(x) - F(x)| \leq \beta + (\eta - \beta) \right) &\geq 1 - \delta\\
		P\left(\sup_x |F_N(x) - F(x)| \leq \eta \right) &\geq 1 - \delta
	\end{align*}
	
	This proves that when the sample size $N$ is sufficiently large and satisfies $N \geq \frac{\ln(2/\delta)}{2(\eta-\beta)^2}$, the maximum difference between the empirical distribution function $F_N(x)$ constructed from the estimated values and the true distribution function $F(x)$ does not exceed $\eta$ with probability at least $1-\delta$.
\end{proof}

\textbf{Extension to other mechanisms.} Our methods extend beyond the SW mechanism to other numerical LDP mechanisms, including Laplace~\cite{dwork2006calibrating}, SR~\cite{duchi2018minimax}, and PM~\cite{wang2019collecting}. Applying our approach to these mechanisms requires two key modifications. First, we normalize the original data to $[-1,1]$ to accommodate the input requirements of these mechanisms. Under this setting, the Laplace mechanism adds noise from $Lap(2/\epsilon)$. Accordingly, the clipping process in IPP and APP must also be adjusted to $[-1,1]$. Second, in CAPP, different mechanisms require specific clip intervals $[l,u]$. Due to space limitations, we omit the detailed discussion of these settings.

However, these mechanisms show inferior performance compared to SW. This is primarily because SW provides bounded perturbation results within $(-1/2, 3/2)$, regardless of the privacy budget. In contrast, PM mechanism with privacy budget $\epsilon=0.01$ leads to perturbation in $[-400,400]$. Laplace mechanism generates perturbations well beyond $[-1,1]$ even with small noise. SR mechanism, limited to outputs $\{-1,1\}$, loses substantial temporal information. These limitations explain our focus on the SW mechanism as the primary perturbation mechanism in this paper.

\textbf{Extension to high-dimensional time series data}.
Our scheme can be extended to high-dimensional time series data, such as location information and trajectory data publication. We consider each dimensional time series independently and apply our methods separately. To satisfy the composition theorem of differential privacy, we employ privacy budget splitting and sampling techniques. Specifically, for $d$-dimensional time series data collection, we propose two approaches:
\begin{itemize}
	\item Budget-Split (BS): At any given time slot, users upload data for all dimensions, with each data point allocated a privacy budget of $\epsilon/(dw)$. While this approach involves uploading more data points, it results in higher noise perturbation for each point.
	
	\item Sample-Split (SS): At any given time, users upload information for only one dimension, with each data point allocated a privacy budget of $\epsilon/w$. In this approach, each window can only upload $w/d$ data points, but the noise perturbation for each point is lower.
\end{itemize}
\color{black}

\section{ Perturbation Parameterization Algorithms with Sampling}
\label{Delay}
In this section, we combine sampling and perturbation parameterization (PP) algorithms to help enhance the accuracy of mean statistics while ensuring stream data publication remains satisfactory. Sampling approaches usually select a few values for reporting to increase the per-value privacy budget. However, if the number of values uploaded is too small, it may degrade the utility of stream data publication. To address this, we provide a perturbation parameterization sampling method that determines the optimal number of samples.

\begin{figure}
	\centering
	\includegraphics[width=0.45\textwidth]{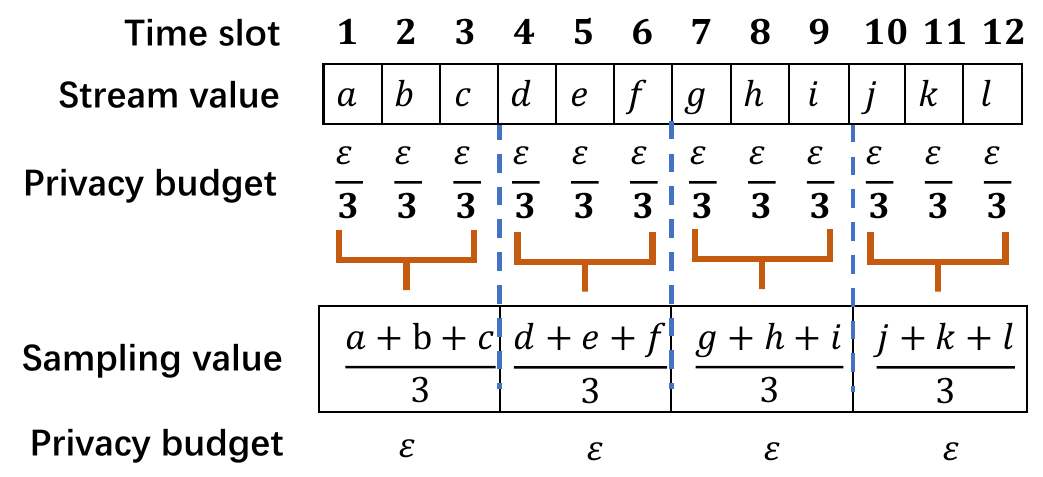}
	\caption{Illustration of the sampling}
	\label{sample}
\end{figure}

\textbf{Sampling Design.} 
Let $n_s$ represent the number of samples for query $X_{(i,j)}$. To ensure $w$-event privacy in the sliding window model: We divide the query interval $[i,j]$ into $n_s$ segments, each containing $\lfloor \frac{j-i+1}{n_s} \rfloor$ time slots\footnote{When $(j-i+1)$ is not divisible by $n_s$, the remaining time slots are assigned to the last segment.}. For each segment, we sample one value at a predetermined position. The sampling positions remain consistent across different windows to maintain privacy guarantees.

A naive approach of simply uploading the value at the predetermined position to the data collector would result in information loss. To maximize the uploaded information while maintaining $w$-event local differential privacy, we upload the mean value of each segment. The user's original sampled values are then:
\begin{equation*}
	\setlength{\abovedisplayskip}{1pt}
	\setlength{\belowdisplayskip}{1pt}
	S=\{s_1,\dots,s_r,\dots,s_{n_s}\}.
\end{equation*}
where $s_r$ is the mean value for the $r$-th segment.	

Let us illustrate the sampling process with an example in Figure \ref{sample}, where $w=3$. Without sampling, each value must be uploaded individually with a privacy budget of $\epsilon/3$. In our sampling approach, we upload one aggregated value for every three consecutive values by computing their mean. This strategy allows us to increase the privacy budget for each uploaded value from $\epsilon/3$ to $\epsilon$. Users then generate and perturb these sampled mean values using our parameterized perturbation algorithms before transmission to the data collector.

After going through our perturbation parameterization algorithms, the perturbed values are:
\begin{equation*}
	\setlength{\abovedisplayskip}{3pt}
	\setlength{\belowdisplayskip}{3pt}
	S'=\{s'_1,\dots,s'_r,\dots,s'_{n_s}\}.
\end{equation*}
To restore the perturbed value for all time slots, we have:
\begin{equation*}
	X'=\{\overbrace{s'_1,\dots,s'_1}^{\lfloor \frac{j-i+1}{n_s} \rfloor \text{ times}},\dots,\overbrace{s'_{n_s},\dots,s'_{n_s}}^{\lfloor \frac{j-i+1}{n_s} \rfloor \text{ times}}\}.
\end{equation*}

\color{black}
\begin{algorithm}
	\caption{Perturbation Parameterization Sampling}
	\begin{flushleft}
		{\bf Input:}
		Privacy budget $\epsilon$, window size $w$, time interval $[i,j]$ \\
		{\bf Output:} 
		$\hat{X}_{(i,j)}=\{\hat{x}_i,\dots,\hat{x}_t,\dots,\hat{x}_j\}$
	\end{flushleft}
	\begin{algorithmic}[1]
		\State Divide the $[i,j]$ into $n_s$ segments
		\State $\gamma$=min$\{\lfloor \frac{j-i+1}{n_s} \rfloor,w\}$, $\epsilon_w =\epsilon/\gamma$, $X'=\emptyset$
		\For{each segment $r$}
		\State Set $x^I_r$ as the mean of current segment
		\State $x'_r=PP(x^I_r)$  \Comment{Same as PP, APP, CAPP}
		\State $X'= X' \sqcup \{x'_r\}^{\lfloor\frac{j-i+1}{n_s}\rfloor}$
		\EndFor
		\State \textbf{return} ${X'}_{(i,j)}=\{{x'}_i,\dots{x'}_t,\dots,{x'}_j\}$
	\end{algorithmic}
	\label{algorithm_sampling}
\end{algorithm}

\textbf{The procedure of PP-S.} By integrating sampling techniques into Perturbation Parameterization methods (IPP, APP, CAPP), we propose the Perturbation Parameterization Sampling (PP-S) algorithm. The Algorithm \ref{algorithm_sampling} begins by dividing the time interval $[i,j]$ into $n_s$ segments (line 1). It then determines the privacy budget per segment $\epsilon_w$ and initializes an empty set $X'$ (line 2). For each segment $r$, the algorithm first calculates the mean value $x^I_r$ of the current segment (line 4), and applies the Perturbation Parameterization (PP) method to perturb $x^I_r$ (line 5). The perturbed value is then replicated to match the segment size and added to set $X'$ (line 6). Finally, the algorithm returns the estimated stream data ${X'}_{(i,j)}$ (line 8).

\color{black}


\textbf{The choice of $n_s$.}
As the number of samples $n_s$ decreases, each uploaded value gets a larger privacy budget, allowing more accurate mean estimation. However, a smaller $n_s$ compromises our ability to capture the stream characteristics effectively. Our goal is to determine an optimal $n_s$ that balances accurate mean estimation with effective stream data publication. 

Specifically, we will then utilize the distribution of the variance \cite{pranklin1974introduction} of these $n_s$ values to help determine the final size of $n_s$. Given $w$ with a total privacy budget of $\epsilon$, let $Var(n_s, \epsilon)$ denote the variance for the sample variance after sampling $n_s$ times. The reason for using the \textbf{variance of the sample variance}, rather than the \textbf{sample variance} itself, is that it better reflects the variability and instability of the sample variance statistic under repeated random sampling. By multiplying the additional factor $n_s$, we amplify the impact of sampling size, and thus our objective function is:
\begin{equation}
	\setlength{\abovedisplayskip}{1pt}
	\setlength{\belowdisplayskip}{1pt}
	\operatorname{argmin}_{n_s} n_s Var(n_s,\epsilon).
	\label{var}
\end{equation}

Since $n_s\in\{1,\dots,j-i+1\}$ is an integer, we can only enumerate all possible values of $n_s$ and find the one that minimizes the objective function.

Next, we detail the calculation of $Var(n_s,\epsilon)$. According to \cite{pranklin1974introduction}, $SW(x)$ is independent and identically distributed, then we have:
\begin{equation}
	\setlength{\abovedisplayskip}{1pt}
	\setlength{\belowdisplayskip}{1pt}
	Var(n_s,\epsilon)= \frac{1}{n_s}\left(\mu_4 -  \frac{\sigma^{2}(n_s-3)}{n_s-1}\right),
	\label{var_ga}
\end{equation}
where $\sigma^{2}$ is the corresponding variance, $\mu_4$ is the corresponding fourth central moment, which are all parameters related to $\epsilon$. Due to unknown stream data, we examine $SW(x)$ at $x=1$ (maximum variance case) and we obtain the expectation $\mu$ as follows:
\begin{equation*}
	\setlength{\abovedisplayskip}{1pt}
	\setlength{\belowdisplayskip}{1pt}
	\begin{split}
		\mu & = \int_{-b}^{1+b} SW(x) \, dx \\
		&= \frac{q(2b - 4bx + 1)}{2} + 2bpx = 2b(p-q)x+qb+\frac{q}{2}\\
		& =2bp - bq +\frac{q}{2}.
	\end{split}
\end{equation*}
Next, we derive the variance $\sigma^2$, which is computed by the following equation:
\begin{equation*}
	\setlength{\abovedisplayskip}{1pt}
	\setlength{\belowdisplayskip}{1pt}
	\begin{split}		
		\sigma^{2} &= \int_{-b}^{1+b} (x - \mu)^2 SW(x) \, dx  \\
		&= \frac{2b^3p}{3} - \frac{b^3q}{3} - 4b^2p^2 + 4b^2pq - b^2q^2 + b^2q \\&- 2bpq + 2bp + bq^2 - \frac{2bq}{3} - \frac{q^2}{4} + \frac{q}{3}.
	\end{split}  
\end{equation*}
Finally, we determine the fourth central moment \( \mu_4 \), given by the equation:
\begin{equation*}
	\setlength{\abovedisplayskip}{2pt}
	\setlength{\belowdisplayskip}{2pt}
	\begin{split}
		\mu_4 & = \int_{-b}^{1+b} (x - \mu)^4 SW(x) \, dx \\
		&=\frac{q}{5} + 2bp - bq - q\mu + 4b^3p + \frac{2b^5p}{5} + 2b^2q \\
		&- 2b^3q + b^4q + 2q\mu^2 - 2q\mu^3 + q\mu^4 + 12bp\mu^2 - 8bp\mu^3 \\
		&- 8b^3p\mu + 2bp\mu^4 - 6bq\mu^2 - 6b^2q\mu + 4bq\mu^3 + 4b^3q\mu \\&+ 4b^3p\mu^2 + 6b^2q\mu^2 - 8bp\mu + 4bq\mu.
	\end{split}
\end{equation*}
\color{black}

\textit{Guidelines for selecting the value of $n_s$}. We provide heuristic guidelines for choosing $n_s$ based on the distributional properties of the data. When $n_s$ increases monotonically in Equation \ref{var}, $Var(n_s,\epsilon)$ must eventually exhibit a decreasing pattern to achieve its minimum. According to the definition of sample variance in Equation \ref{var_ga}, for relatively small values of $n_s$, the objective function is primarily determined by the fourth moment $\mu_4$, as the convergence rate of the sample fourth moment substantially exceeds that of $1/n$.

For heavy-tailed distributions (e.g., Cauchy distribution), $Var(n_s,\epsilon)$ tends to grow without bound as $n_s$ increases. In such scenarios, selecting a relatively small $n_s$ is recommended to prevent the potential explosion of $Var(n_s,\epsilon)$.

For light-tailed distributions (e.g., normal and uniform distributions), extreme values have limited influence on the fourth moment. Consequently, the fourth moment typically exists and $Var(n_s,\epsilon)$ demonstrates rapid convergence with increasing $n_s$. Although the behavior of $Var(n_s,\epsilon)$ may be complex for small $n_s$, an optimal $n_s$ that minimizes Equation \ref{var_ga} typically occurs near the point where $Var(n_s,\epsilon)$ begins to decrease and stabilize. In these cases, selecting a moderate value of $n_s$ represents a robust choice.
\color{black}

Theorem \ref{LDPproof2} proves that the Perturbation Parameterization algorithm with sampling satisfies $w$-event LDP.
\begin{theorem}
	Let $X$ and $Y$ be any two $w$-neighboring subsequences, and let $S'$ be an arbitrary output from the Perturbation Parameterization algorithm with sampling (PP-S), where PP-S represents IPP, APP, or CAPP. For a total of $n_w$ sampled values in each window, with each sampled value allocated a privacy budget of $\frac{\epsilon}{n_w}$, the PP-S mechanism achieves $w$-event local differential privacy.
	\label{LDPproof2}
\end{theorem}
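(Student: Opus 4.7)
The plan is to reduce PP-S to the privacy guarantees already established for the underlying perturbation parameterization mechanisms (Theorem \ref{LDPproof22} for APP and Theorem \ref{haha} for CAPP; IPP is handled analogously as the single-lookback special case). The three structural facts I would exploit are: (i) the segment boundaries and the sampling positions within each segment are fixed in advance and independent of the data, hence they consume no privacy budget; (ii) averaging a segment to form $x^I_r$ is a deterministic function of the input stream, so under the local model it is user-side pre-processing and cannot enlarge the privacy loss; and (iii) inside any sliding window of length $w$, the algorithm releases at most $n_w$ perturbed sample values, each produced by a single call to the underlying PP mechanism with budget $\epsilon/n_w$.

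First, I would fix two $w$-neighboring streams $X$ and $Y$ and an output subsequence $S'=\{s'_1,\dots,s'_{n_w}\}$ inside a given window. Because $w$-neighboring streams agree outside a block of $w$ consecutive time slots, the segment means $s_r$ computed from $X$ and $Y$ coincide for every segment that does not intersect this block; the ratio of their PP output distributions is therefore identically $1$ and contributes nothing to the overall Bayes factor. This reduces the problem to bounding the ratio over the at most $n_w$ segments that do intersect the differing block.

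Second, for each of these affected segments I would reuse the chain-rule decomposition from the proof of Theorem \ref{LDPproof22}: conditioning on all previously released outputs $s'_1,\dots,s'_{r-1}$, the accumulated deviation $D$ is a known constant from both the user's and the collector's perspective, so the $r$-th perturbation reduces to a single call $SW(x^I_r)$ (with the deterministic clip/normalize/denormalize envelope of CAPP, which by Theorem \ref{haha} does not inflate the privacy cost). Each such call therefore satisfies $\frac{P(s'_r\mid x^I_r)}{P(s'_r\mid y^I_r)}\le p/q = e^{\epsilon/n_w}$. Multiplying these bounds over the at most $n_w$ affected segments yields $(e^{\epsilon/n_w})^{n_w}=e^{\epsilon}$, which is exactly the $w$-event $\epsilon$-LDP guarantee.

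The main obstacle, and the step that needs the most care, is justifying that the segment-mean pre-processing together with the CAPP clip-normalize-denormalize wrapper genuinely behaves as deterministic user-side computation and does not re-introduce data-dependent randomness across segments via the propagation of $D$. I would handle this exactly as in Theorem \ref{haha}: since $D$ is a deterministic function of the user's true values and the already-published outputs, and since both the user and the collector share the same value of $D$ at each step, conditioning on the past outputs eliminates $D$ from the likelihood ratio, leaving only the SW ratio bounded by $e^{\epsilon/n_w}$. Once this per-segment bound is in place, the overall $w$-event guarantee follows by the same sequential composition argument used in the earlier theorems.
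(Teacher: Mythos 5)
Your proposal is correct and follows essentially the same route as the paper's proof: both reduce the output ratio to a product of per-segment conditional ratios via the chain rule (conditioning on past outputs so the accumulated deviation $D$ and the deterministic segment-mean/clip/normalize steps are fixed constants), bound each factor by $p/q = e^{\epsilon/n_w}$, and multiply over the $n_w$ sampled values to obtain $e^{\epsilon}$. Your additional observations --- that unaffected segments contribute a ratio of exactly $1$ and that the sampling positions are data-independent --- are sound refinements the paper leaves implicit, but they do not change the argument.
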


\begin{proof}
	Define $F = (f_1, \dots, f_{n_w})$ as the sampling result from the stream $X = (x_1, x_2, \dots, x_w)$, and $G = (g_1, \dots, g_{n_w})$ as the sampling result from the stream $Y = (y_1, y_2, \dots, y_w)$. Let $S' = (s'_1, s'_2, \dots, s'_{n_w})$ denote the perturbed output sequence and $S^r$ its reverse sequence. Applying these definitions, Equation \ref{pf1} can be rewritten as:
	\begin{equation*}
		\begin{split}
			&\frac{P\{PP\!-\!S(f_1,f_2,...,f_{n_w})=S'\}}{P\{PP\!-\!S(g_1,g_2,...,g_{n_w})=S'\}}\\
			=&\frac{P\{PP\!-\!S(f_{n_w},f_{n_w-1},...,f_1)=S^r\}}{P\{PP\!-\!S(g_{n_w},g_{n_w-1},...,g_1)=S^r\}}\\
			=&\frac{P(s'_{n_w}|f^I_{n_w})\dots P(s'_2|f^I_2)P(s'_1|f_1)}{P(s'_{n_w}|g^I_{n_w})\dots P(s'_2|g^I_2)P(s'_1|g_1)}\\
			\leq&(\frac{p}{q})^{n_w}\leq (e^{\epsilon/{n_w}})^{n_w}
			\leq e^{\epsilon}.
		\end{split}
	\end{equation*}
	\label{proofof1LBl}
\end{proof}

\color{black}
\begin{figure*}[th]
	\hspace{0.25in}
	{
		\begin{minipage}{15cm}
			\centering
			\includegraphics[scale=0.8]{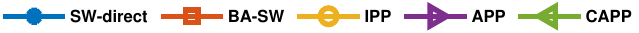}
		\end{minipage}
	}
	\\
	\vspace{-0.12in}
	\centering
	\subfigure[\textbf{C6H6},$w=10$.]{
		\begin{minipage}[t]{0.24\linewidth}
			\centering
			\includegraphics[width=1\textwidth]{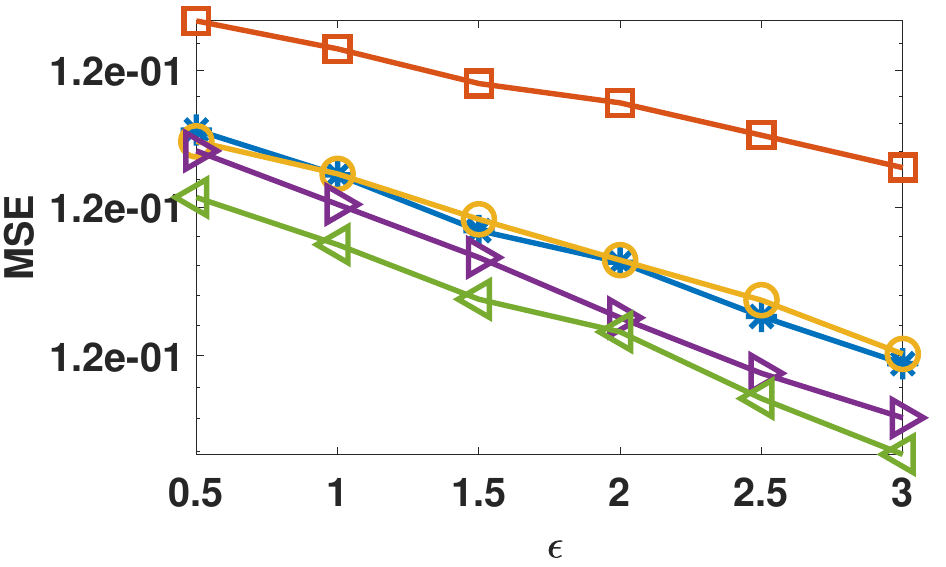}
		\end{minipage}%
	}%
	\subfigure[\textbf{Volume},$w=10$.]{
		\begin{minipage}[t]{0.24\linewidth}
			\centering
			\includegraphics[width=1\textwidth]{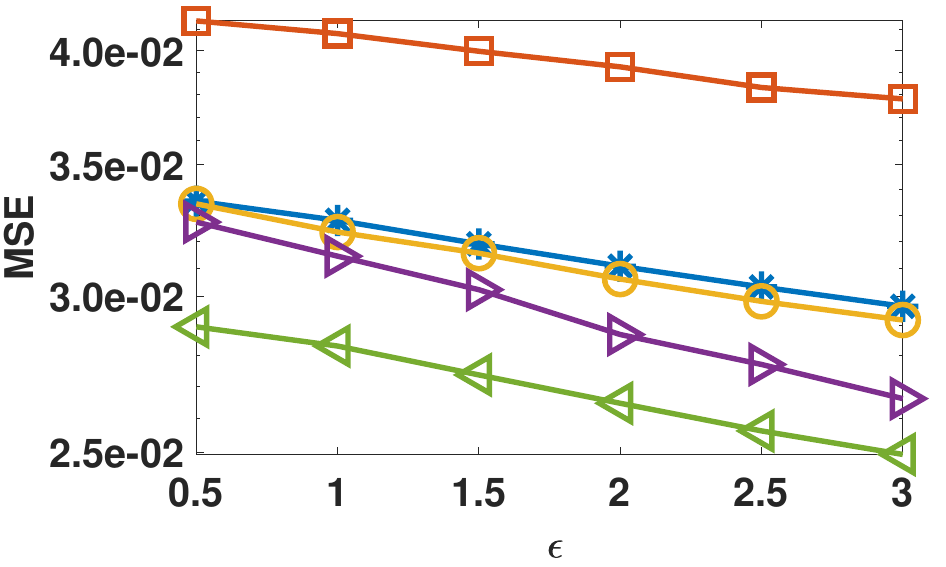}
		\end{minipage}%
	}%
	\subfigure[\textbf{Taxi},$w=10$.]{
		\begin{minipage}[t]{0.24\linewidth}
			\centering
			\includegraphics[width=1\textwidth]{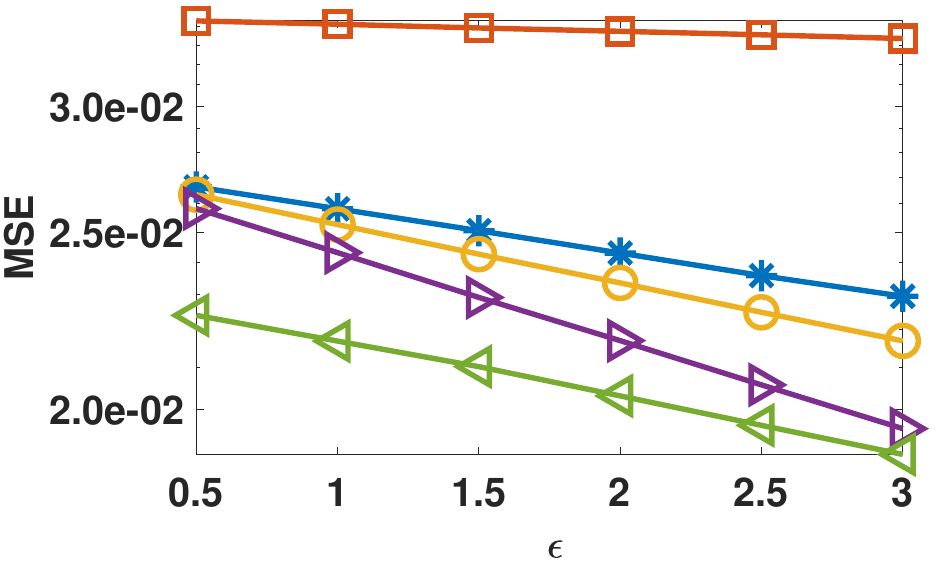}
		\end{minipage}
	}%
	\subfigure[\textbf{Power},$w=10$.]{
		\begin{minipage}[t]{0.24\linewidth}
			\centering
			\includegraphics[width=1\textwidth]{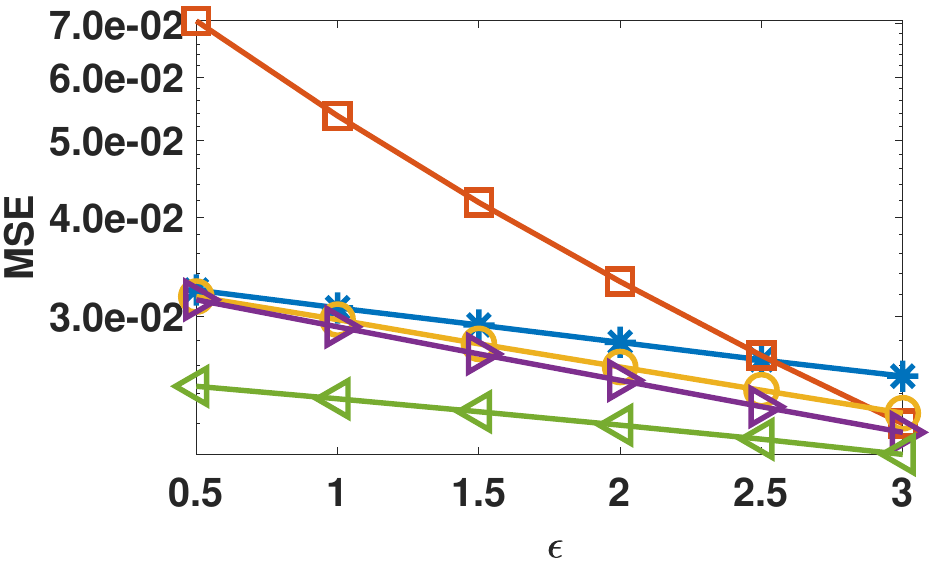}
		\end{minipage}
	}%
	\vspace{-0.12in}
	\centering
	\subfigure[\textbf{C6H6},$w=30$.]{
		\begin{minipage}[t]{0.24\linewidth}
			\centering
			\includegraphics[width=1\textwidth]{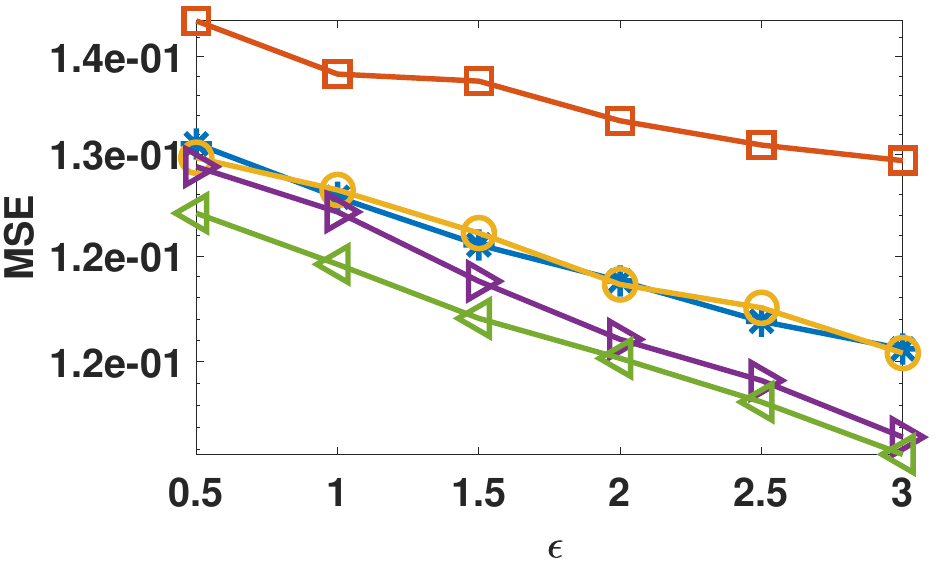}
		\end{minipage}%
	}%
	\subfigure[\textbf{Volume},$w=30$.]{
		\begin{minipage}[t]{0.24\linewidth}
			\centering
			\includegraphics[width=1\textwidth]{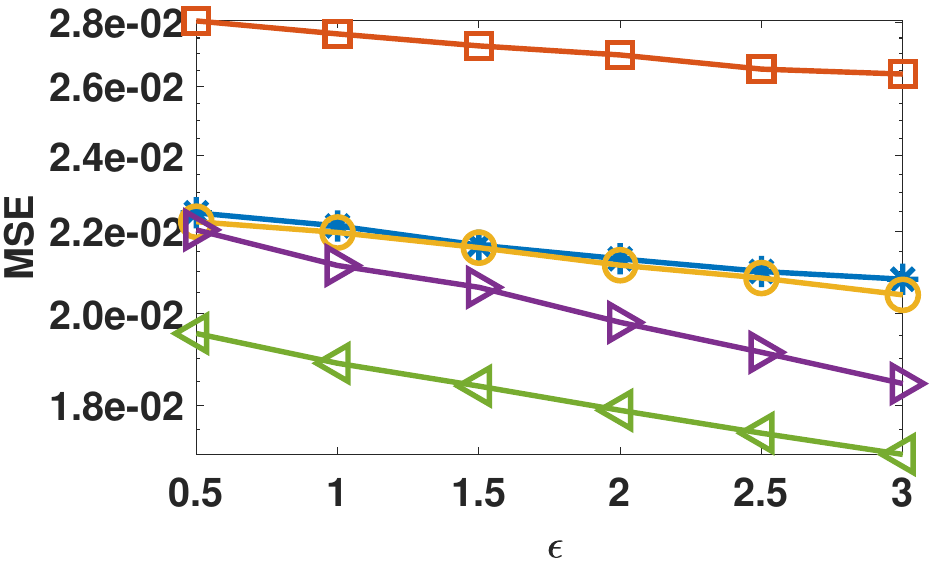}
		\end{minipage}%
	}%
	\subfigure[\textbf{Taxi},$w=30$.]{
		\begin{minipage}[t]{0.24\linewidth}
			\centering
			\includegraphics[width=1\textwidth]{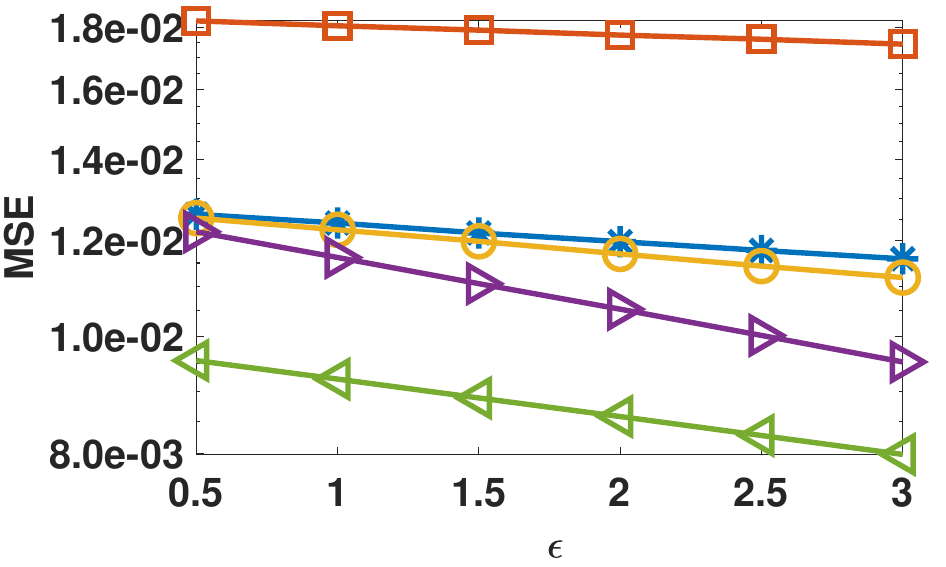}
		\end{minipage}
	}%
	\subfigure[\textbf{Power},$w=30$.]{
		\begin{minipage}[t]{0.24\linewidth}
			\centering
			\includegraphics[width=1\textwidth]{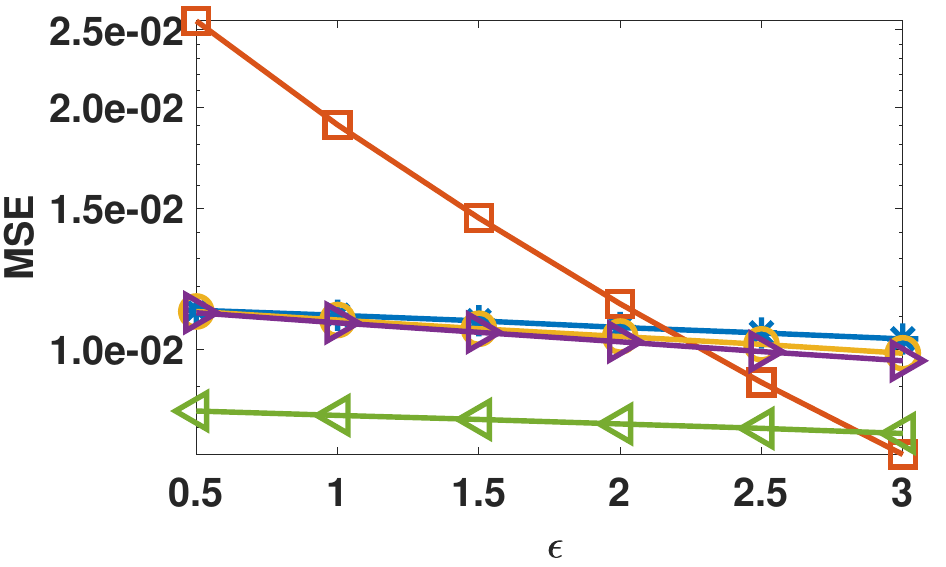}
		\end{minipage}
	}%
	\vspace{-0.12in}
	\centering
	\subfigure[\textbf{C6H6},$w=50$.]{
		\begin{minipage}[t]{0.24\linewidth}
			\centering
			\includegraphics[width=1\textwidth]{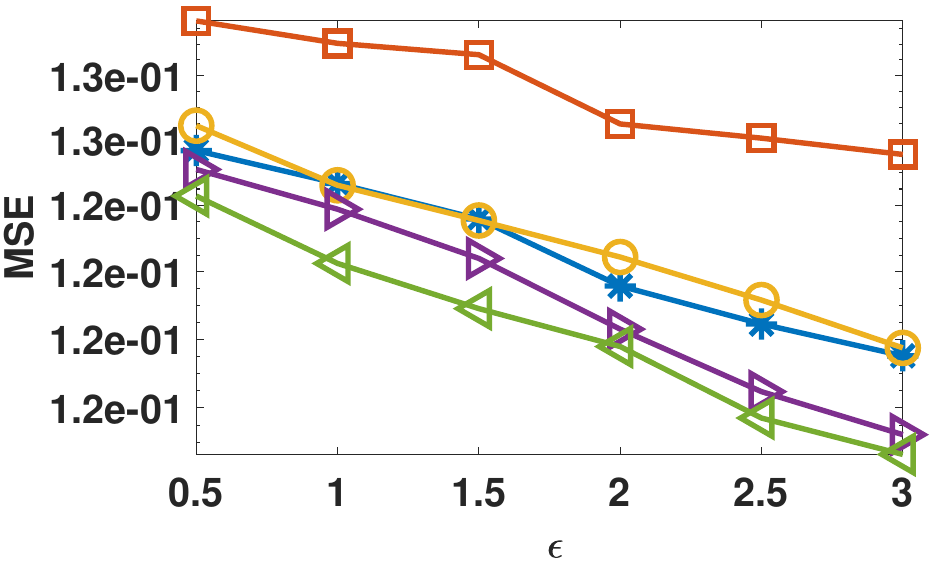}
		\end{minipage}%
	}%
	\subfigure[\textbf{Volume},$w=50$.]{
		\begin{minipage}[t]{0.24\linewidth}
			\centering
			\includegraphics[width=1\textwidth]{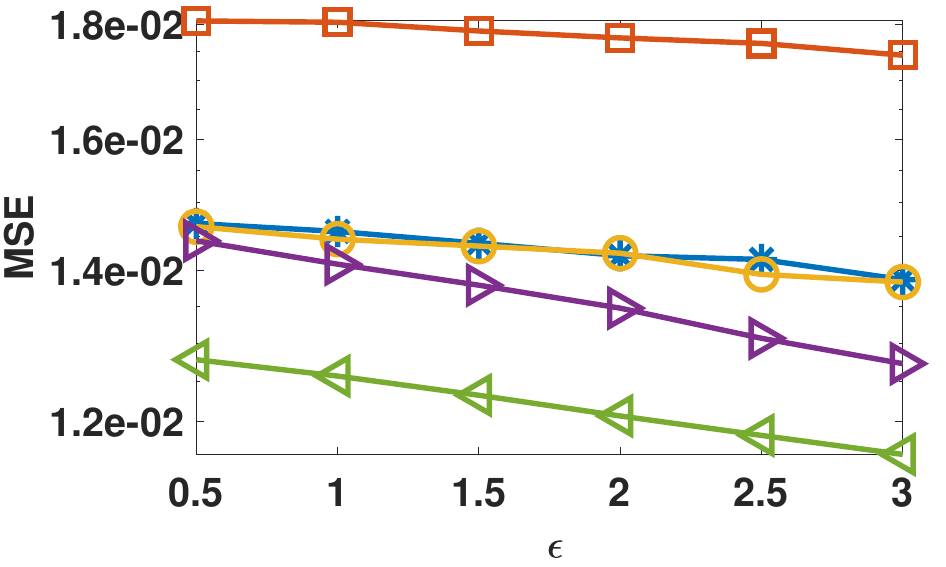}
		\end{minipage}%
	}%
	\subfigure[\textbf{Taxi},$w=50$.]{
		\begin{minipage}[t]{0.24\linewidth}
			\centering
			\includegraphics[width=1\textwidth]{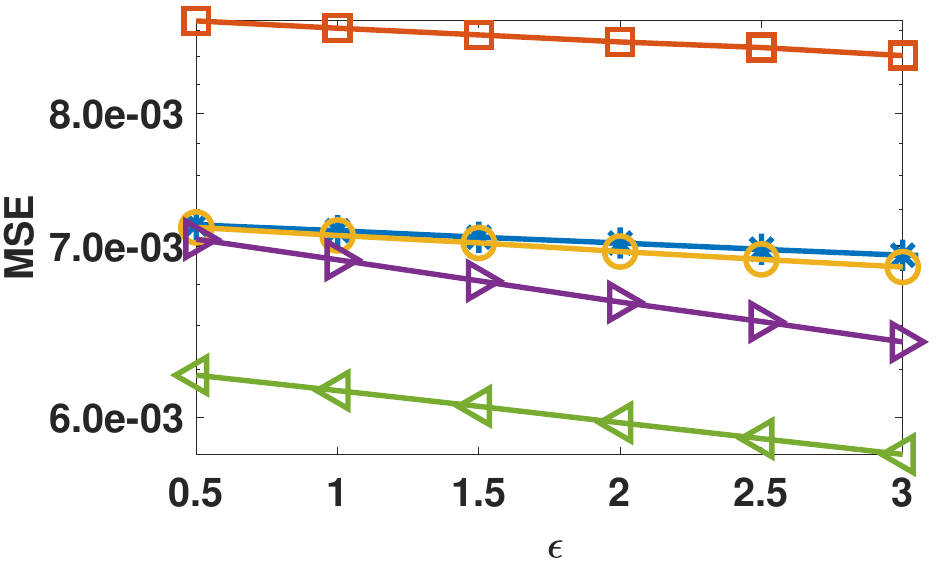}
		\end{minipage}
	}%
	\subfigure[\textbf{Power},$w=50$.]{
		\begin{minipage}[t]{0.24\linewidth}
			\centering
			\includegraphics[width=1\textwidth]{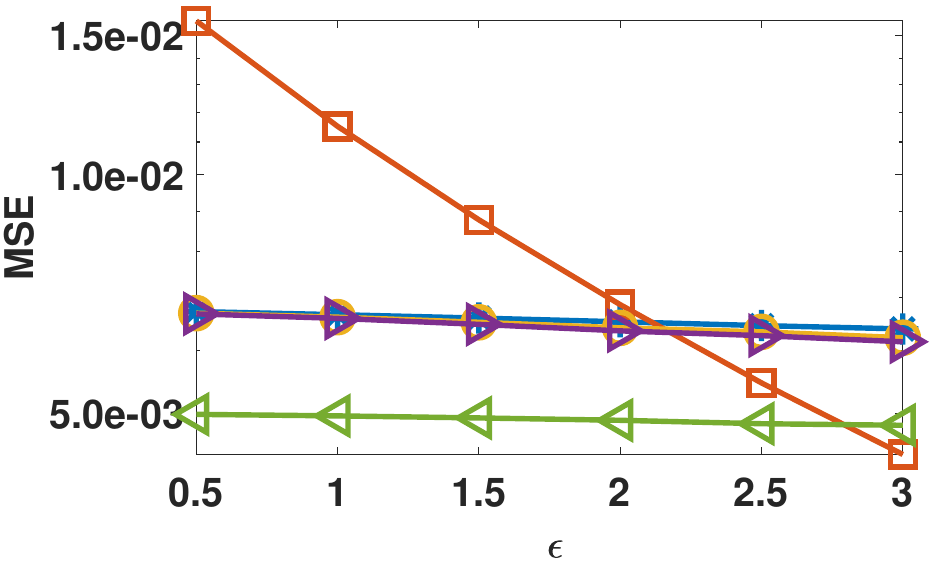}
		\end{minipage}
	}%
	\caption{MSE comparison w.r.t. $\epsilon$ for perturbation parameterization based algorithms vs SW-direct}
	\label{MeanEstimation1}
\end{figure*}

\section{Experimental Results}
\label{exp}
In this section, we conduct experiment to validate the effectiveness of our proposed solutions. 

\subsection{Experimental Setup}
The experiments are conducted on a PC equipped with an AMD Ryzen 7 2700X eight-core processor, 64GB RAM, and Windows 10, using MATLAB R2019b. The experiments are conducted 100 times and the results were subsequently averaged. All datasets and code are available online\footnote{https://github.com/RONGDUGithub/CAPP}.

\subsubsection{Comparison algorithms}
\textcolor{black}{We categorize the compared algorithms into two sets: those without sampling and those with sampling. For the first set, we analyze several approaches, including a naive algorithm (\textbf{SW-direct}) that directly applies the SW perturbation method to each value, and ToPL~\cite{wang2021continuous}, a state-of-the-art method for mean estimation from subsequences. We also compare BA-SW~\cite{ren2022ldp,kellaris2014differentially}, which integrates budget absorption with the SW mechanism to preserve the privacy budget by skipping the transmission of minimally changed data points. Our proposed algorithms in this set include IPP, APP, and CAPP. For the second set, involving sampling, we compare a naive approach (\textbf{Sampling}) that samples stream values before applying the SW mechanism, alongside our proposed methods APP-S and CAPP-S.}

As for the smoothing step, it is necessary to determine a smoothing window size. A larger window size offers advantages in estimating the mean; however, it may result in unfavorable outcomes for stream data publication. In our experimental setting, we choose the smoothing window size of 3.



\subsubsection{Utility metrics}
\color{black}
We classify our performance evaluations into three categories: (a) mean estimation with Mean Squared Error (MSE) assessment \cite{lehmann2006theory}, (b) direct stream data release evaluated by cosine distance \cite{singhal2001modern}, and (c) distribution analysis of subsequence means using Wasserstein Distance \cite{Ludger1985The}. 
\color{black}

\textbf{MSE.} The Mean Squared Error, denoted as $\text{MSE}$, is a commonly used metric to measure the average squared difference between predicted values $\hat{y}_i$ and true values $y_i$ in regression tasks. Given a set of $n$ pairs of predicted and true values, the MSE is calculated as:
\begin{equation*}
	MSE = \frac{1}{n} \sum_{i=1}^{n} (\hat{y}_i - y_i)^2.
\end{equation*}


\textbf{Cosine distance.} The cosine distance \cite{singhal2001modern}, denoted as $d_{\text{cos}}$, measures the dissimilarity between two vectors in a vector space. Given two vectors $\mathbf{u}$ and $\mathbf{v}$, the cosine distance is calculated as:
\begin{equation*}
	\setlength{\abovedisplayskip}{2pt}
	\setlength{\belowdisplayskip}{2pt}
	d_{\text{cos}}(\mathbf{u}, \mathbf{v}) = 1 - \frac{\mathbf{u} \cdot \mathbf{v}}{|\mathbf{u}| |\mathbf{v}|}.
\end{equation*}

where $d_{cos}\cdot$ represents the dot product of the vectors, and $|\cdot|$ denotes the Euclidean norm. If the cosine distance is high (e.g., close to 1), indicating that the vectors are dissimilar.

\color{black}
\textbf{Wasserstein Distance.} Also known as Earth Mover's Distance (EMD), it measures the minimum cost of transforming one probability distribution into another. The Wasserstein distance is computed as the sum of absolute differences between two empirical CDFs $F$ and $G$, as shown below:
\begin{equation*}
	\setlength{\abovedisplayskip}{2pt}
	\setlength{\belowdisplayskip}{2pt}
	W(F,G) = \sum_{i=1}^{n} |F_i - G_i|.
\end{equation*}
A smaller Wasserstein distance indicates higher similarity between the distributions.
\subsubsection{Datasets}
The experiments are conducted on four real-world datasets. The \textbf{Volume} and \textbf{C6H6} datasets each consist of a single data stream from a single user, while the \textbf{Taxi} and \textbf{Power} datasets contain multiple data streams from multiple users. For the latter two datasets, we perform both user-level and crowd-level statistical analysis.
\color{black}

\textbf{Volume\footnote{https://archive.ics.uci.edu/dataset/492/metro+interstate+traffic+volume}.} The dataset comprises hourly measurements of westbound traffic volume from the Minnesota Department of Transportation (MNDoT) Automatic Traffic Recorder (ATR) station 301, strategically situated midway between Minneapolis and St. Paul along Interstate 94. The data stream contains a total of 48204 valid entries. 

\textbf{C6H6\footnote{https://archive.ics.uci.edu/dataset/360/air+quality}.} The dataset encompasses 9,358 instances of hourly averaged data collected by a suite of five metal oxide chemical sensors integrated into an Air Quality Chemical Multisensor Device, spanning from March 2004 to February 2005. We focus on the subset of data related to the fluctuations in benzene concentration levels. 

\textbf{Taxi\footnote{https://www.microsoft.com/en-us/research/publication/t-drive-trajectory-data-sample/}.} The dataset captures the real-time trajectories of 10,357 taxis in Beijing, recorded from February 2 to February 8, 2008. We focused on extracting the latitude information for each taxi at 1307 specific timestamps from 1500 drivers.

\textbf{Power\footnote{https://www.cs.ucr.edu/\%20eamonn/time\%20series\%20data/}.} This dataset is sourced from the UCR Time Series Data Mining Archive. It contains the power usage data of 25,562 electrical devices, with each time series consisting of 96 stream values.

\subsection{Overall Results for Perturbation Parameterization Algorithms}
\subsubsection{The results for mean estimation} 
\mbox{}\par
\textbf{Comparison of SW-based algorithms and ToPL.}  In Table \ref{tableresult}, our comparative analysis primarily focuses on two categories of algorithms. The first category encompasses those based on the SW mechanism, which includes the SW-direct, IPP, and APP algorithms. The second category is the ToPL algorithm \cite{wang2021continuous}, which first uses the SW algorithm to remove outlier data and obtain a reasonable data range, then perturbs the data using the HM mechanism \cite{wang2019collecting}. 

In Table \ref{tableresult}, we find that the MSE of ToPL is more than 100 times larger than others for mean estimation. This is because the perturbation threshold mechanism of SW is much smaller than that of HM. For example, when the privacy budget per window is 1, and there are 20 time slots, the privacy budget allocated to each time slot is 0.05. The perturbation threshold for SW is mapped from [0,1] to [-0.4836, 1.4836]. In contrast, the HM perturbation threshold is mapped from [-1,1] to [-80,80]. Thus, it is evident that ToPL incurs a larger error with a smaller privacy budget. Furthermore, the order of magnitude of this error increases exponentially as epsilon decreases. Due to the significant discrepancy of ToPL's results compared to PP methods, we will not include them in our subsequent experimental result figures.

\begin{table}[]
	\caption{Results for ToPl with SW-based algorithms}
	\centering 
	\begin{tabular}{|c|c|c|c|c|c|}
		\hline
		MSE                                                                          & $w$ & \multicolumn{1}{c|}{SW-direct} & \multicolumn{1}{c|}{IPP} & \multicolumn{1}{c|}{APP} & \multicolumn{1}{c|}{ToPL} \\ \hline
		\multirow{3}{*}{\begin{tabular}[c]{@{}c@{}}C6H6\\ $\epsilon=1$\end{tabular}}  & 20  & 0.131                     & 0.131                   & 0.129                   & 25.214                    \\ \cline{2-6} 
		& 40  & 0.125                     & 0.126                   & 0.125                   & 51.613                    \\ \cline{2-6} 
		& 60  & 0.124                     & 0.124                   & 0.123                   & 80.070                    \\ \hline
		\multirow{3}{*}{\begin{tabular}[c]{@{}c@{}}Taxi\\ $\epsilon=1$\end{tabular}} & 20  & 1.28E-04                  & 1.25E-04                & 1.19E-04                & 0.043                     \\ \cline{2-6} 
		& 40  & 4.9E-05                   & 4.8E-05                 & 4.6E-05                 & 0.132                     \\ \cline{2-6} 
		& 60  & 3.7E-05                   & 3.6E-05                 & 3.5E-05                 & 0.221                     \\ \hline
	\end{tabular}
	\vspace{-0.22in}
	\label{tableresult} 
\end{table}
\begin{figure*}[ht]
	\hspace{0.25in}
	{
		\begin{minipage}{15cm}
			\centering
			\includegraphics[scale=0.7]{legend_nosample.pdf}
		\end{minipage}
	}
	\\
	\centering
	\subfigure[\textbf{C6H6},$w=10$.]{
		\begin{minipage}[t]{0.24\linewidth}
			\centering
			\includegraphics[width=1\textwidth]{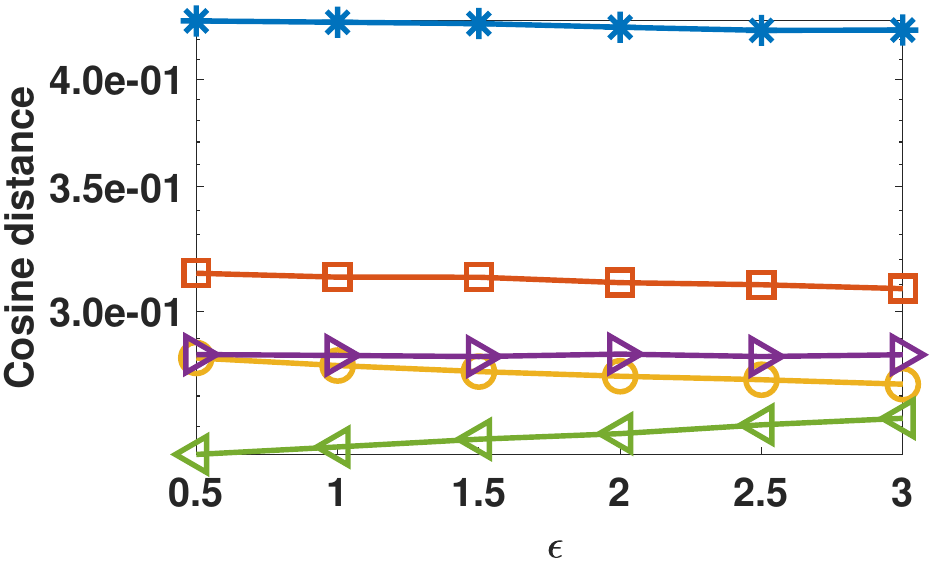}
		\end{minipage}%
	}%
	\subfigure[\textbf{Volume},$w=10$.]{
		\begin{minipage}[t]{0.24\linewidth}
			\centering
			\includegraphics[width=1\textwidth]{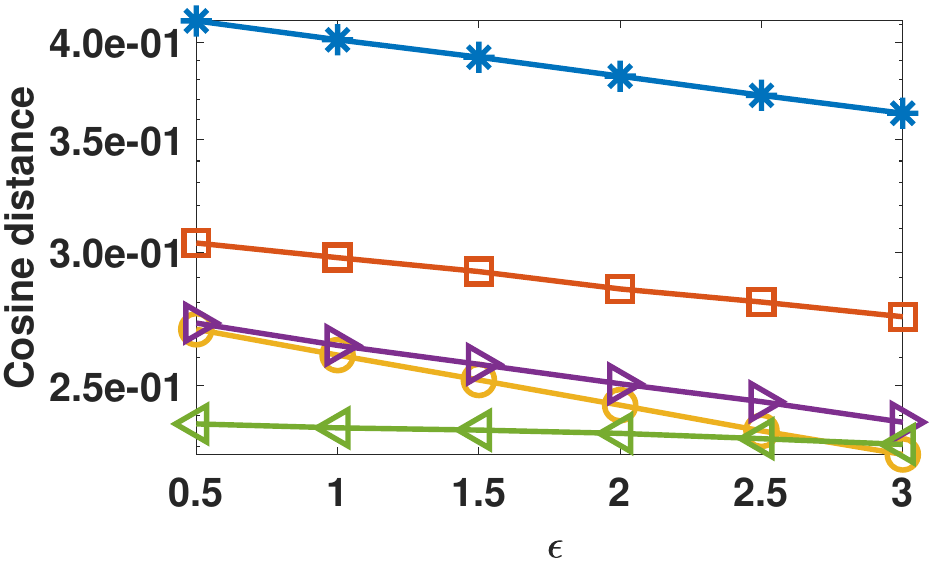}
		\end{minipage}%
	}%
	\subfigure[\textbf{Taxi},$w=10$.]{
		\begin{minipage}[t]{0.24\linewidth}
			\centering
			\includegraphics[width=1\textwidth]{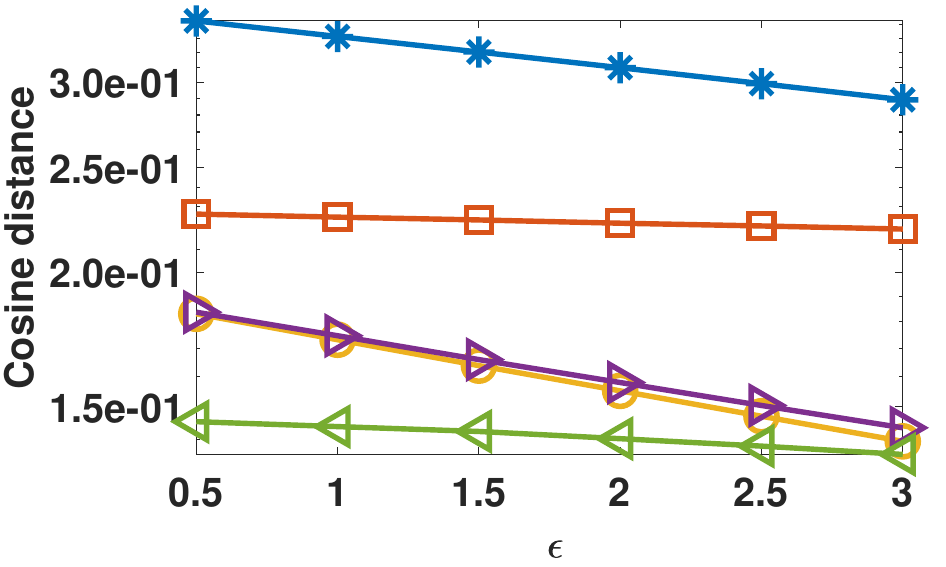}
		\end{minipage}
	}%
	\subfigure[\textbf{Power},$w=10$.]{
		\begin{minipage}[t]{0.24\linewidth}
			\centering
			\includegraphics[width=1\textwidth]{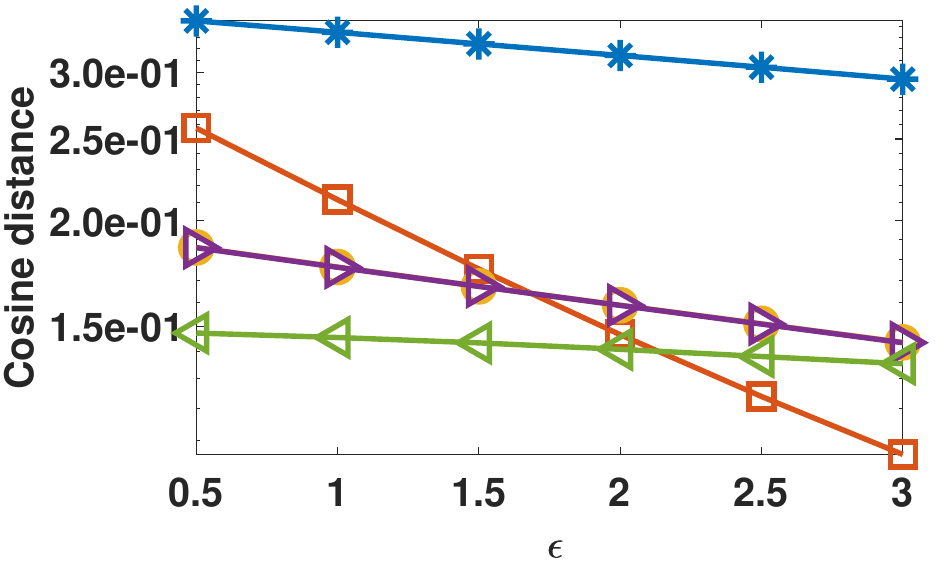}
		\end{minipage}
	}%
	\vspace{-0.06in}
	\centering
	\subfigure[\textbf{C6H6},$w=30$.]{
		\begin{minipage}[t]{0.24\linewidth}
			\centering
			\includegraphics[width=1\textwidth]{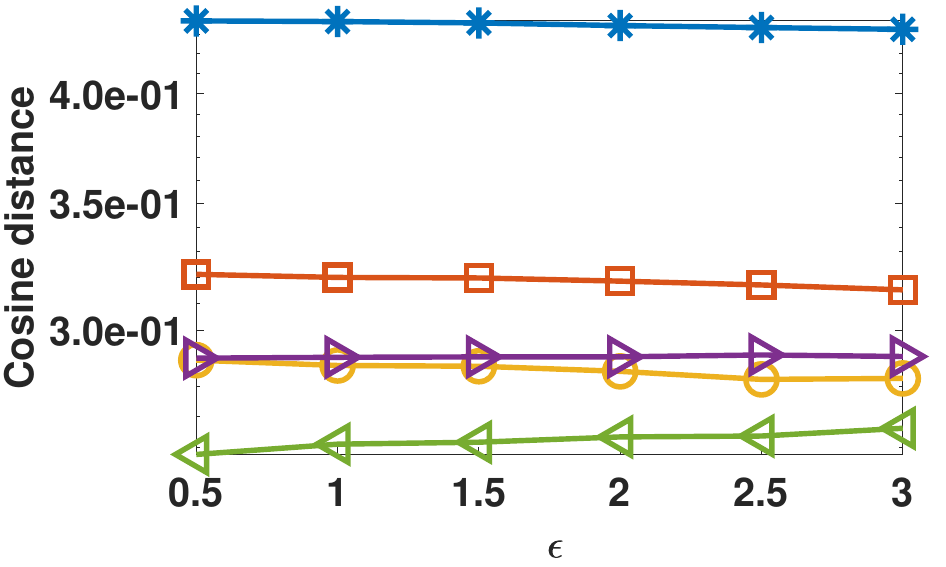}
		\end{minipage}%
	}%
	\subfigure[\textbf{Volume},$w=30$.]{
		\begin{minipage}[t]{0.24\linewidth}
			\centering
			\includegraphics[width=1\textwidth]{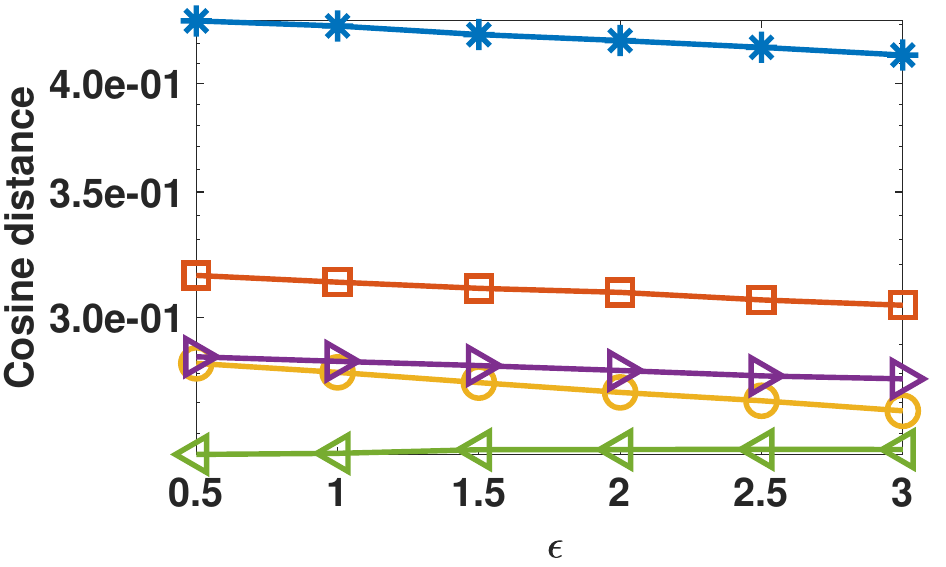}
		\end{minipage}%
	}%
	\subfigure[\textbf{Taxi},$w=30$.]{
		\begin{minipage}[t]{0.24\linewidth}
			\centering
			\includegraphics[width=1\textwidth]{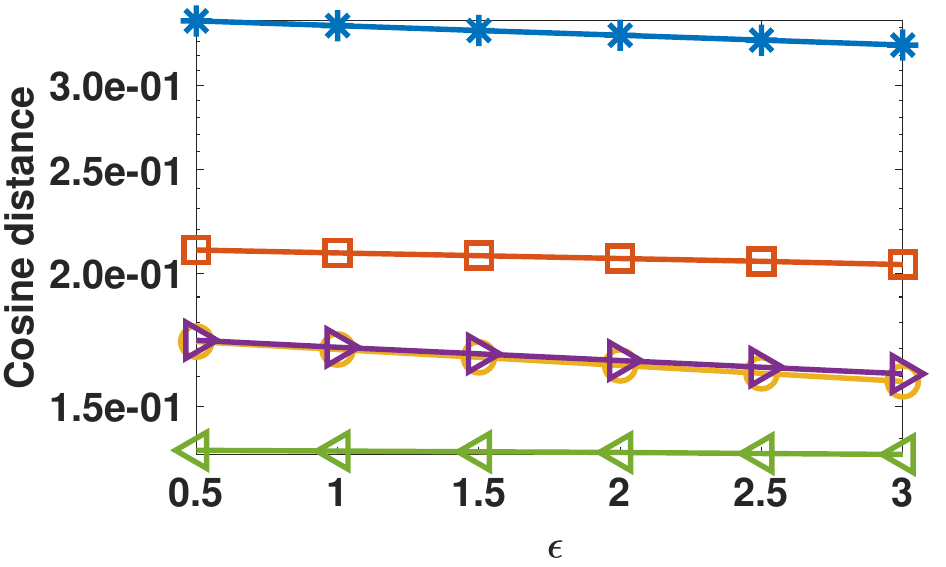}
		\end{minipage}
	}%
	\subfigure[\textbf{Power},$w=30$.]{
		\begin{minipage}[t]{0.24\linewidth}
			\centering
			\includegraphics[width=1\textwidth]{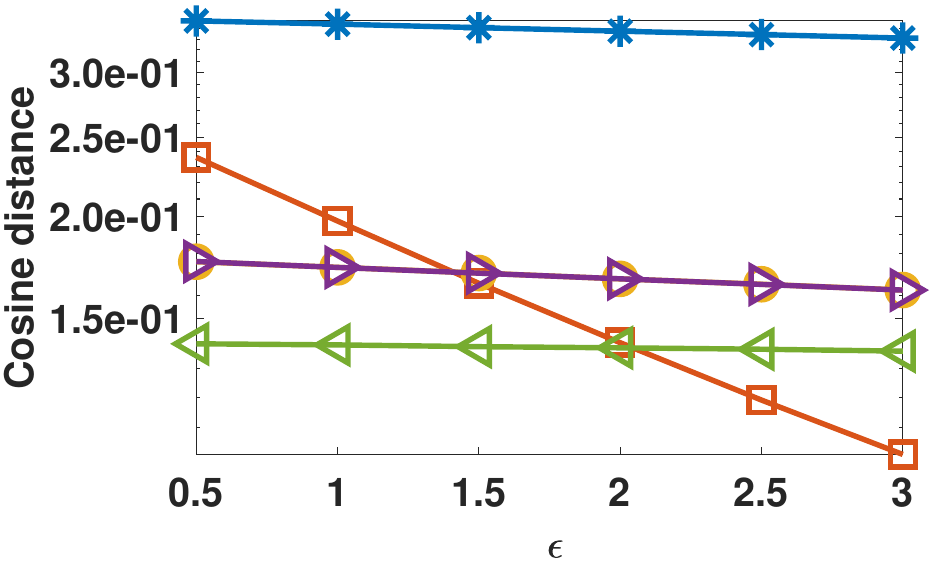}
		\end{minipage}
	}%
	\vspace{-0.06in}
	\centering
	\subfigure[\textbf{C6H6},$w=50$.]{
		\begin{minipage}[t]{0.24\linewidth}
			\centering
			\includegraphics[width=1\textwidth]{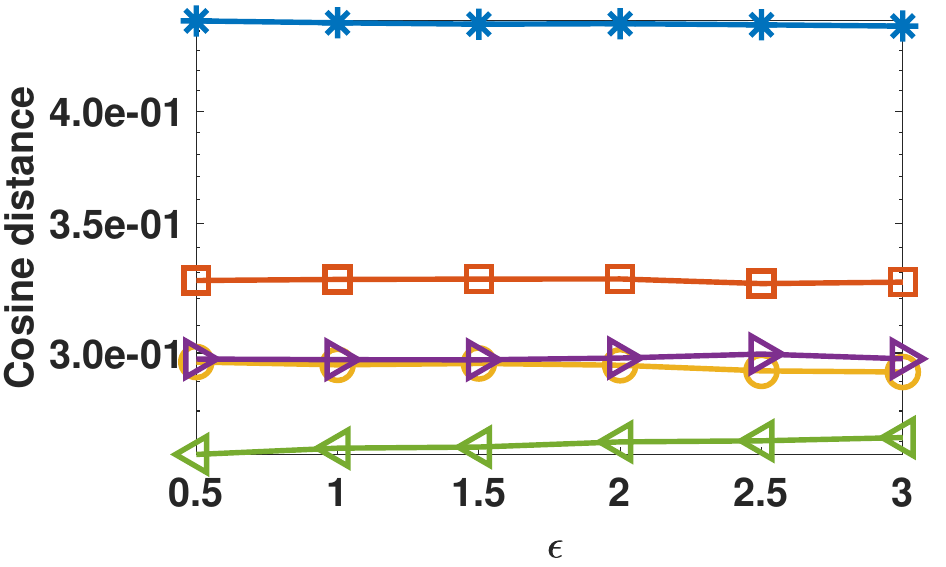}
		\end{minipage}%
	}%
	\subfigure[\textbf{Volume},$w=50$.]{
		\begin{minipage}[t]{0.24\linewidth}
			\centering
			\includegraphics[width=1\textwidth]{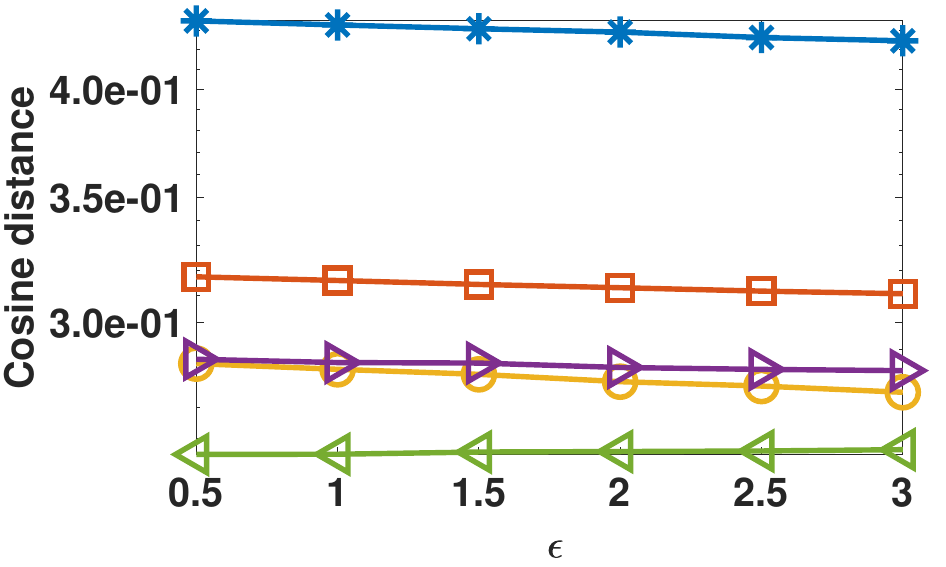}
		\end{minipage}%
	}%
	\subfigure[\textbf{Taxi},$w=50$.]{
		\begin{minipage}[t]{0.24\linewidth}
			\centering
			\includegraphics[width=1\textwidth]{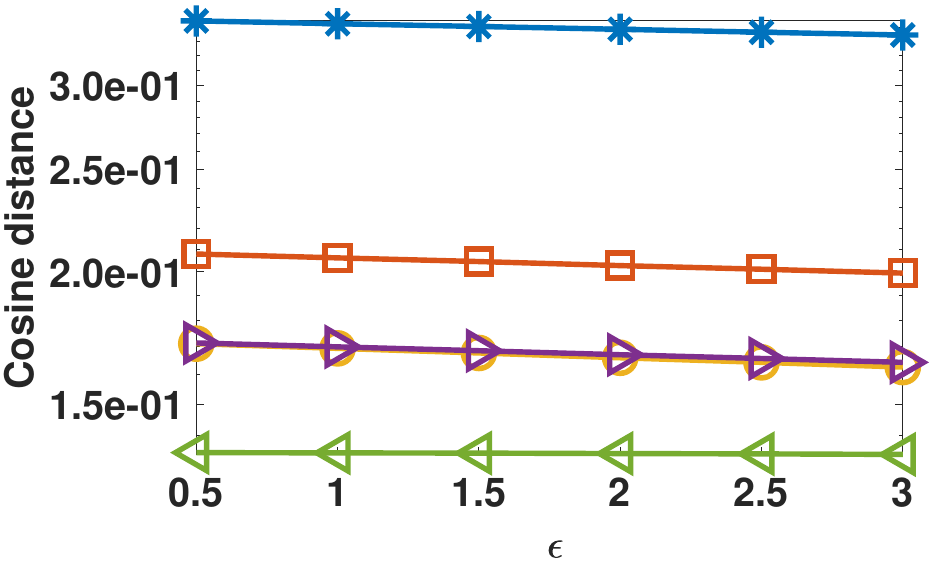}
		\end{minipage}
	}%
	\subfigure[\textbf{Power},$w=50$.]{
		\begin{minipage}[t]{0.24\linewidth}
			\centering
			\includegraphics[width=1\textwidth]{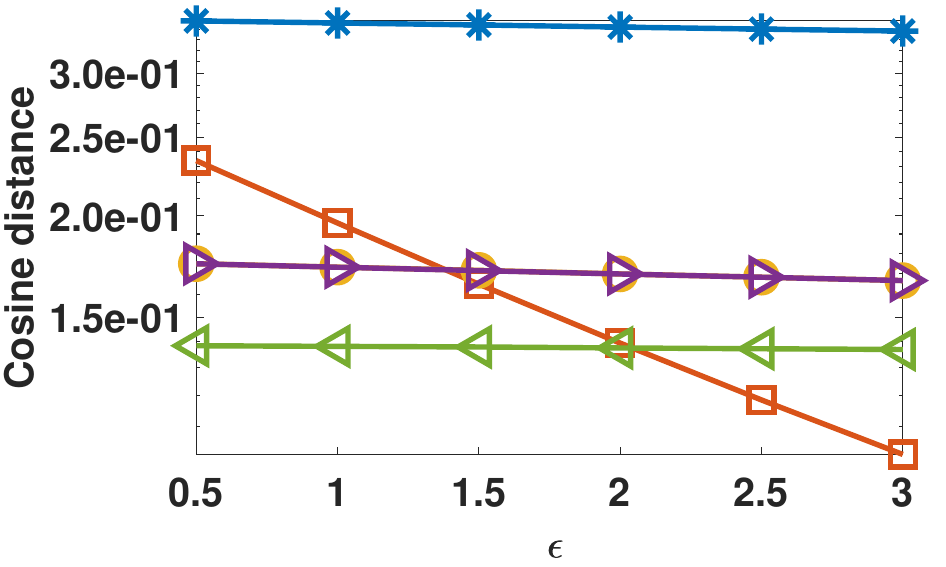}
		\end{minipage}
	}%
	\vspace{-0.12in}
	\caption{Cosine distance comparison w.r.t. $\epsilon$ for perturbation parameterization algorithms vs SW-direct}
	\label{FrequencyEstimation1}
	\vspace{-0.22in}
\end{figure*}
\color{black}
\textbf{Comparison on PP algorithms and SW-direct \& BA-SW.}
In Figure \ref{MeanEstimation1}, we compare MSE for mean estimation among the SW-direct, BA-SW and our PP algorithms (IPP, APP, and CAPP). We compute the average over 50 randomly sampled time subsequences with length $w$.

We observe that in most subfigures, the MSE of BA-SW is the largest compared to our perturbation parameterization algorithms, with SW-direct performing as the second worst.
\color{black}The utility of APP is better than that of IPP, as APP leverages more perturbation information from the stream values, resulting in smaller errors in mean estimation, as demonstrated by Lemma \ref{lemmaforFLB}. Notably, CAPP achieves the best experimental results, which can be attributed to selecting a more suitable range for the input values. Under extremely limited privacy budgets, we reduce the mechanism's sensitivity by clipping the input values to a smaller range and normalizing them, thereby producing perturbed values with better distinguishability.
This superior performance is consistent across different window sizes $w$, as shown in Figures \ref{MeanEstimation1} (b), (f), and (j).

\color{black}
It is worth noting that in Figures \ref{MeanEstimation1} (d), (h), and (l), the BA-SW algorithm performs best on the \textbf{Power} dataset when $\epsilon$ is relatively large. This is because many subsequences in the \textbf{Power} dataset are entirely composed of a unique constant value. BA-SW effectively identifies points with minimal variation and directly reuses previous values. By conserving the privacy budget, it introduces less noise, thereby improving the utility of subsequent uploads. 

\color{black}
\subsubsection{The results for stream data publication}
In Figure \ref{FrequencyEstimation1}, the basic setup is the same as that in Figure \ref{MeanEstimation1}. We then compare the cosine distance for different algorithms, where a smaller cosine distance indicates that the estimated data stream is closer to the ground truth.

We can see that the SW-direct's cosine distance is the largest compared to our perturbation parameterization algorithms, indicating the worst performance. This is because our algorithms utilize the perturbation of previous data to reduce errors. We can see that CAPP demonstrates the best performance among the perturbation parameterization algorithms. This is because the method uses perturbation information and sets an appropriate $[l,u]$, further expanding the advantage. It is noteworthy that for \textbf{C6H6}, the utility of CAPP increases when $\epsilon$ increases. This is because the design of ${l,u}$ considers the worst case scenario, which may lead to improper selection of $T_{(e_p,e_d)}$ in Equation \ref{Teped} for different datasets. However, we can still see that CAPP achieves better utility than IPP and APP. In addition, we observe that IPP is slightly better than APP for stream data publication. This is because APP is influenced by multiple data points, which may result in poor estimates of the true characteristics of the data stream. In fact, APP is more suitable for mean estimation, as can be seen in Figure \ref{MeanEstimation1}. In contrast, IPP preserves the information of individual data points better and thus achieves better utility than APP for stream data publication.

Similarly, our approaches maintain their superior performance across different datasets under the same parameter settings, as shown in Figures \ref{FrequencyEstimation1} (a)-(d), (e)-(h), and (i)-(l). Observing the varied $w$ values specific to each dataset, as illustrated in Figures \ref{FrequencyEstimation1} (a), (e), and (i), it becomes clear that our perturbation parameterization algorithms sustain commendable performance across different $w$ configurations. This demonstrates the robustness of our perturbation parameterization algorithms.

\color{black}

\subsection{Overall Results for Sampling}
\subsubsection{The results for mean estimation}
In this group of experiments, we compare different parameters across datasets, including the query length $q$, window size $w$, and privacy budget $\epsilon$. To ensure the stability of mean estimation results, we randomly select 50 subsequences with length $q$ and average them.

As shown in Figures \ref{samplemean}, when $q$ and $w$ are fixed, the MSE decreases for all algorithms as the privacy budget increases. Among them, sampling performs the worst because it only uploads information from sampled points. However, APP-S and CAPP-S outperform other non-sampling methods. This experiment demonstrates that our perturbation parameterization algorithms retain their advantage even with sampling, showcasing their superiority across different datasets and various $w$ and $q$ combinations.

\begin{figure*}[th]
	\hspace{0.25in}
	{
		\begin{minipage}{15cm}
			\centering
			\includegraphics[scale=0.7]{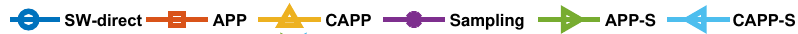}
		\end{minipage}
	}
	\\
	\centering
	\subfigure[\textbf{Volume},$w=20,q=10$.]{
		\begin{minipage}[t]{0.24\linewidth}
			\centering
			\includegraphics[width=1\textwidth]{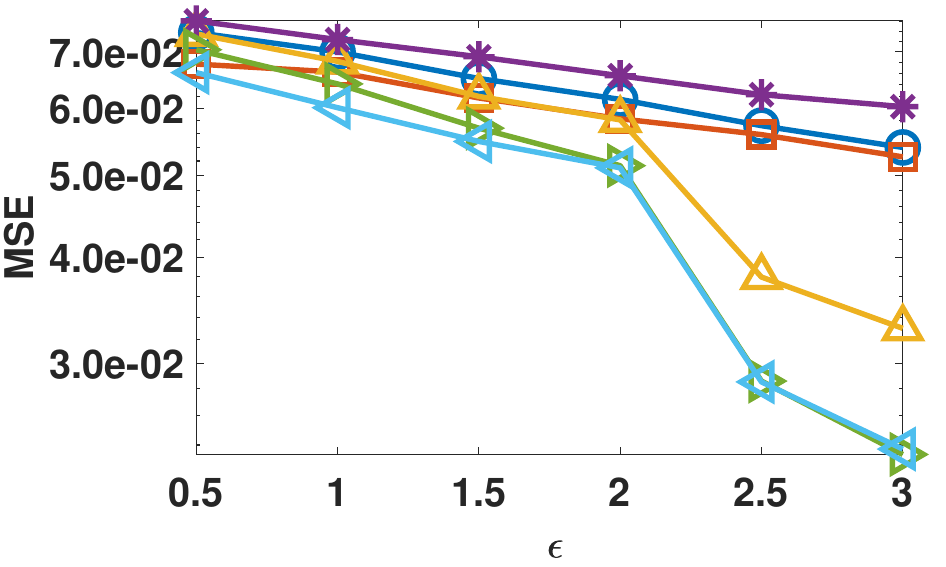}
		\end{minipage}%
	}%
	\subfigure[\textbf{Volume},$w=30,q=10$.]{
		\begin{minipage}[t]{0.24\linewidth}
			\centering
			\includegraphics[width=1\textwidth]{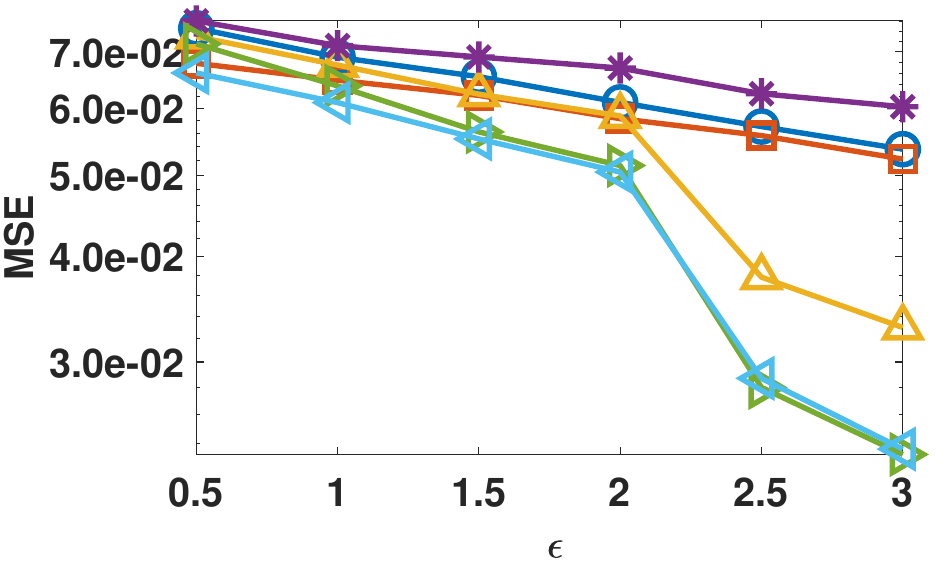}
		\end{minipage}%
	}%
	\subfigure[\textbf{Volume},$w=30,q=20$.]{
		\begin{minipage}[t]{0.24\linewidth}
			\centering
			\includegraphics[width=1\textwidth]{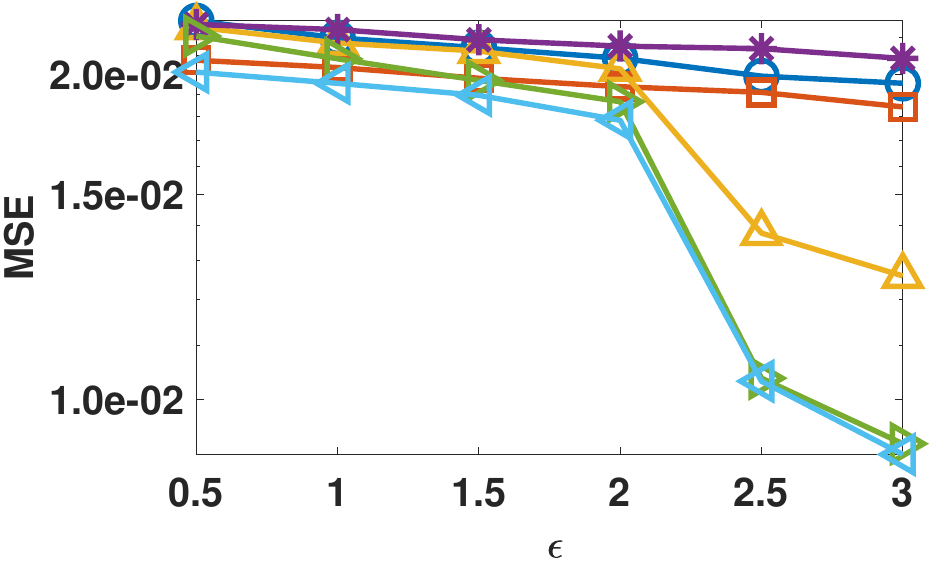}
		\end{minipage}
	}%
	\subfigure[\textbf{Volume},,$w=30,q=40$.]{
		\begin{minipage}[t]{0.24\linewidth}
			\centering
			\includegraphics[width=1\textwidth]{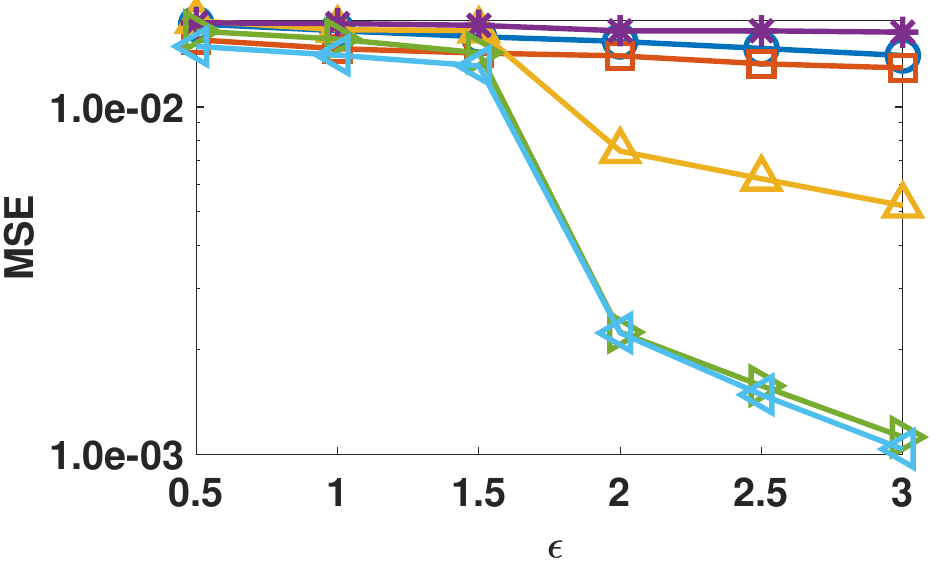}
		\end{minipage}
	}%
	\vspace{-0.05in}
	\\
	\subfigure[\textbf{Volume},$w=20,q=30$.]{
		\begin{minipage}[t]{0.24\linewidth}
			\centering
			\includegraphics[width=1\textwidth]{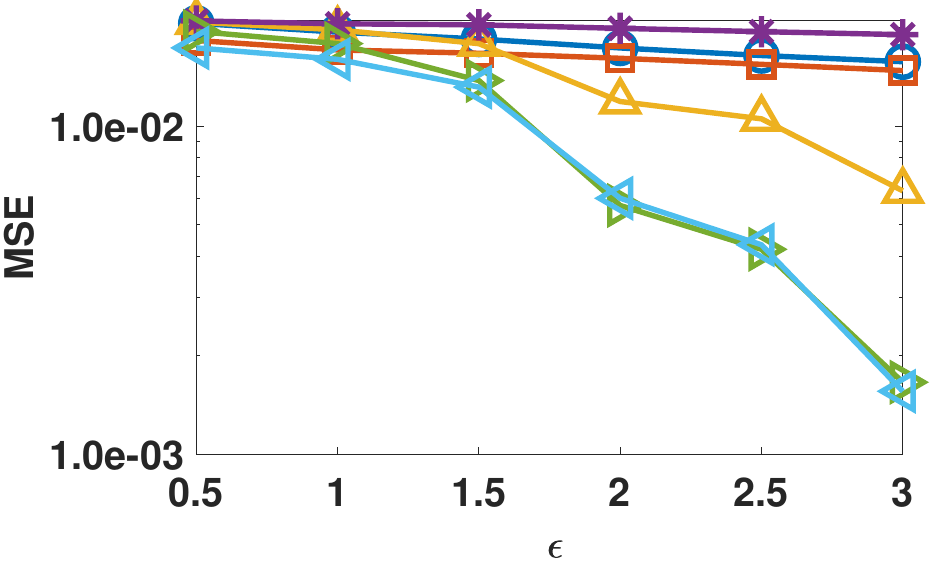}
		\end{minipage}%
	}%
	\subfigure[\textbf{C6H6},$w=20,q=30$.]{
		\begin{minipage}[t]{0.24\linewidth}
			\centering
			\includegraphics[width=1\textwidth]{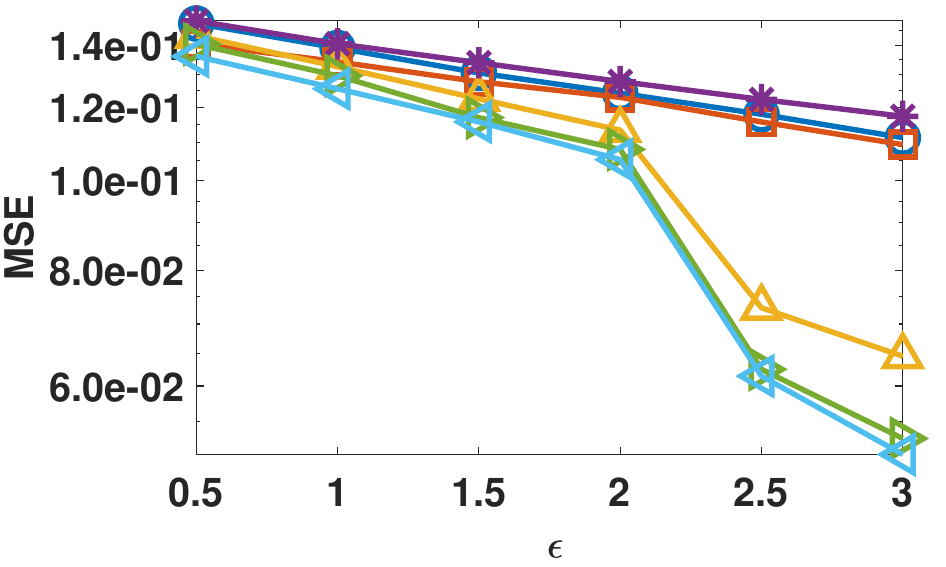}
		\end{minipage}%
	}%
	\subfigure[\textbf{Power},$w=20,q=30$.]{
		\begin{minipage}[t]{0.24\linewidth}
			\centering
			\includegraphics[width=1\textwidth]{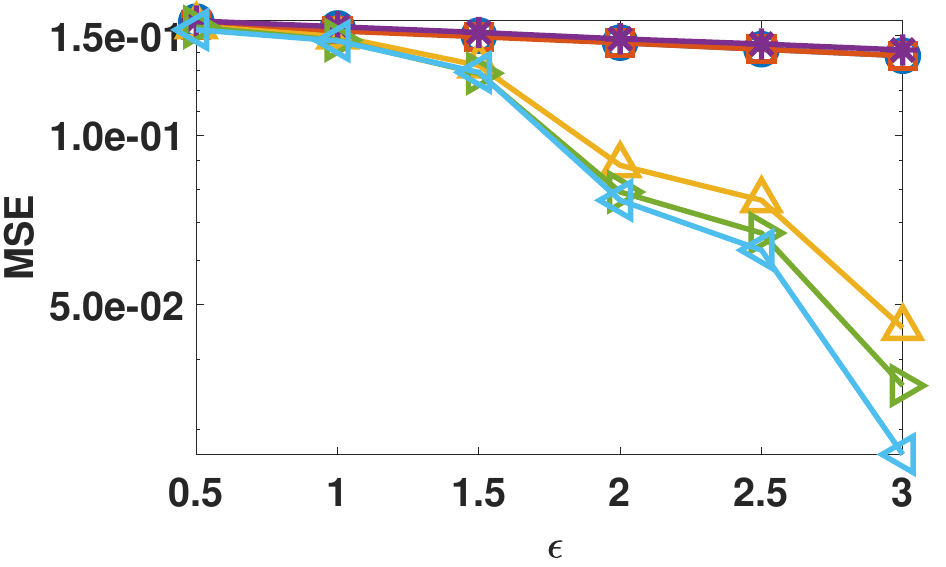}
		\end{minipage}
	}%
	\subfigure[\textbf{Taxi},$w=20,q=30$.]{
		\begin{minipage}[t]{0.24\linewidth}
			\centering
			\includegraphics[width=1\textwidth]{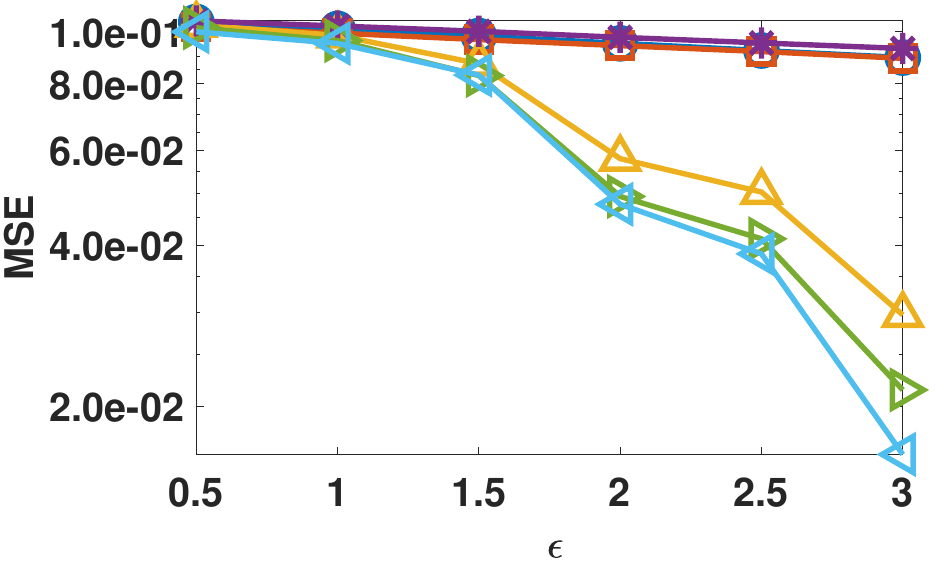}
		\end{minipage}
	}%
	\vspace{-0.05in}
	\caption{MSE comparison w.r.t. $\epsilon$ for sampling-based algorithms vs non-sampling algorithms}
	\label{samplemean}
	\vspace{-0.1in}
\end{figure*}

\subsubsection{The results for stream data publication}
In Figure \ref{samplefrequency}, we evaluate the performance of stream data publication by comparing the cosine distance between sampling and non-sampling algorithms. The results reveal different patterns from those observed in mean estimation (Figure \ref{samplemean}). While sampling algorithms demonstrate significant advantages in mean estimation, they achieve similar performance levels in stream data publication. This pattern difference can be attributed to the reduced number of collecting values in each window due to sampling, which affects the two tasks differently: it helps reduce noise in mean estimation while still maintaining sufficient information for stream data publication. As illustrated in both figures, sampling-based algorithms perform well in mean estimation. While their performance in stream data publication is not as strong as CAPP, they still outperform APP.

\begin{figure*}[t]
	\hspace{0.25in}
	{
		\begin{minipage}{10cm}
			\centering
			\includegraphics[scale=0.7]{legend_sampling.pdf}
		\end{minipage}
	}
	\\
	\centering
	\subfigure[\textbf{Volume},$w=20,q=10$.]{
		\begin{minipage}[t]{0.24\linewidth}
			\centering
			\includegraphics[width=1\textwidth]{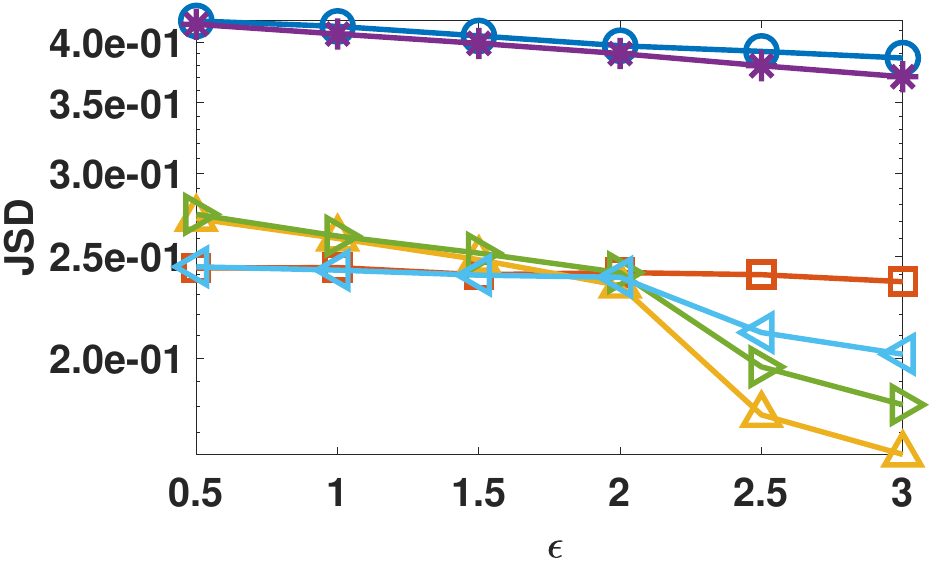}
		\end{minipage}%
	}%
	\subfigure[\textbf{Volume},$w=30,q=10$.]{
		\begin{minipage}[t]{0.24\linewidth}
			\centering
			\includegraphics[width=1\textwidth]{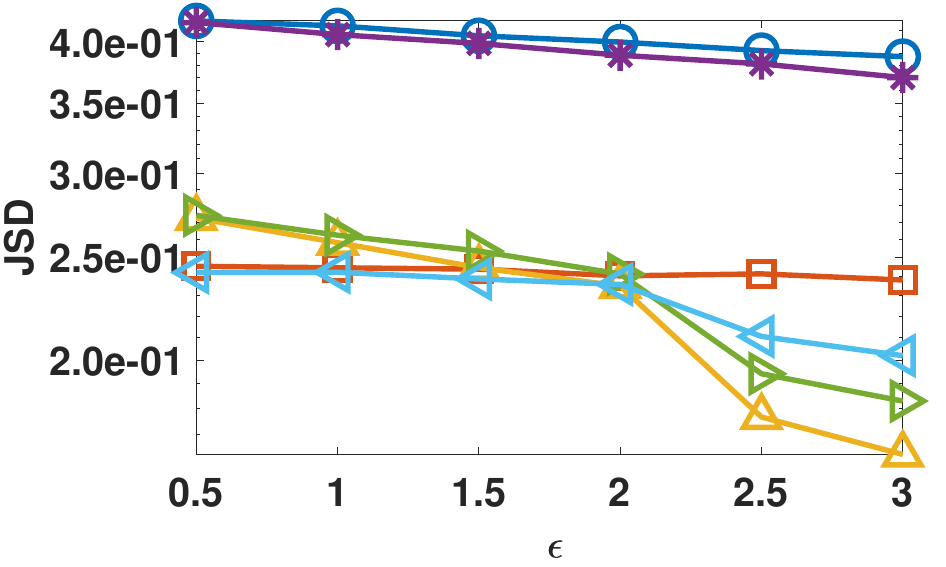}
		\end{minipage}%
	}%
	\subfigure[\textbf{Volume},$w=30,q=20$.]{
		\begin{minipage}[t]{0.24\linewidth}
			\centering
			\includegraphics[width=1\textwidth]{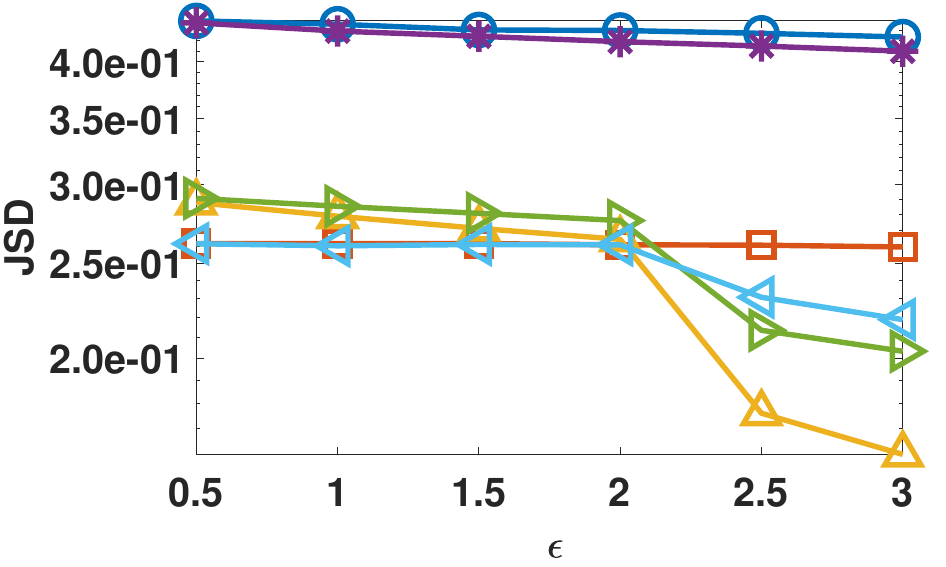}
		\end{minipage}
	}%
	\subfigure[\textbf{Volume},,$w=30,q=40$.]{
		\begin{minipage}[t]{0.24\linewidth}
			\centering
			\includegraphics[width=1\textwidth]{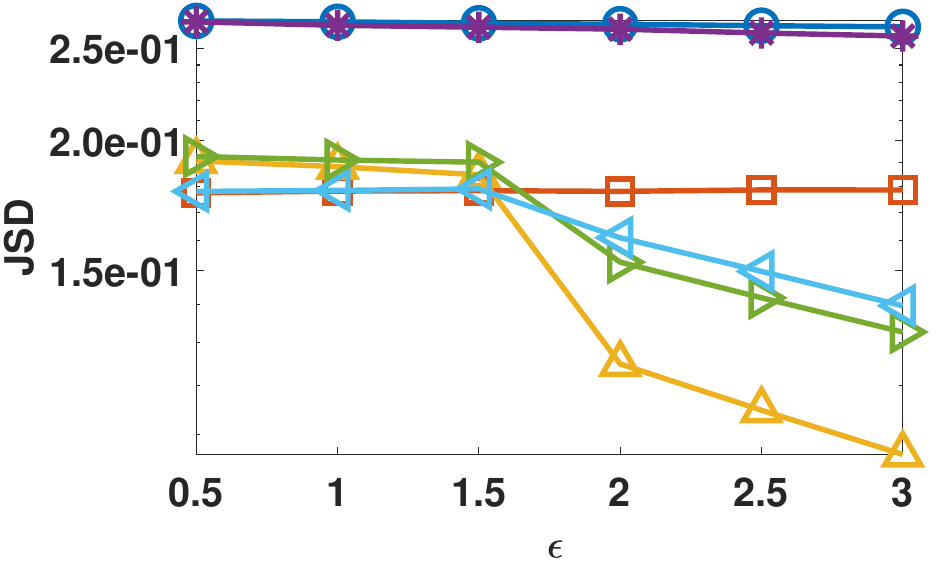}
		\end{minipage}
	}%
	\\
	\subfigure[\textbf{Volume},$w=20,q=30$.]{
		\begin{minipage}[t]{0.24\linewidth}
			\centering
			\includegraphics[width=1\textwidth]{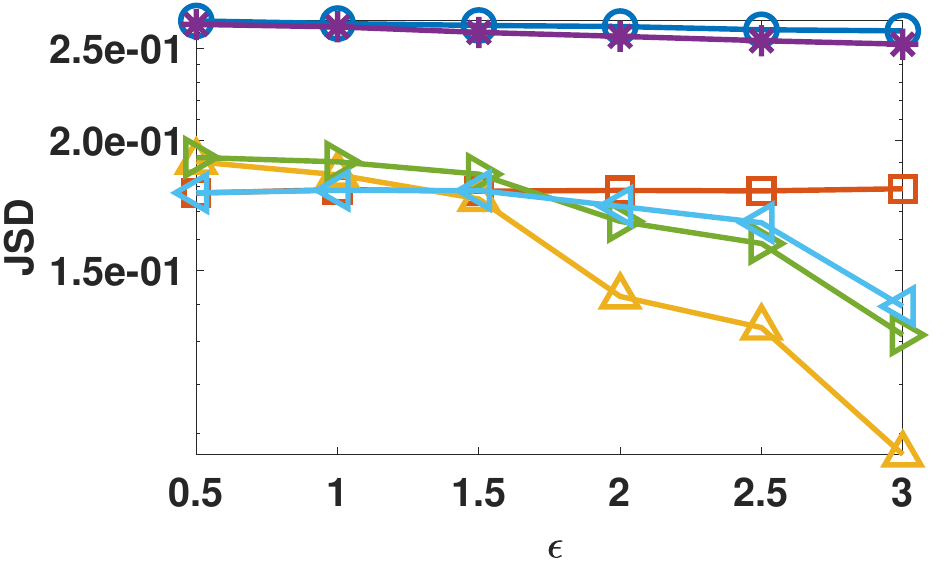}
		\end{minipage}%
	}%
	\subfigure[\textbf{C6H6},$w=20,q=30$.]{
		\begin{minipage}[t]{0.24\linewidth}
			\centering
			\includegraphics[width=1\textwidth]{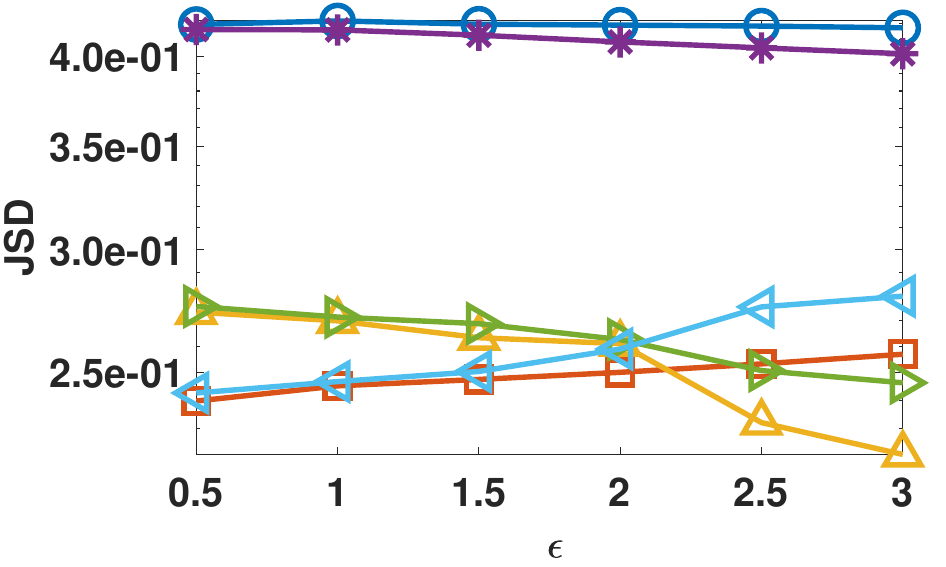}
		\end{minipage}%
	}%
	\subfigure[\textbf{Power},$w=20,q=30$.]{
		\begin{minipage}[t]{0.24\linewidth}
			\centering
			\includegraphics[width=1\textwidth]{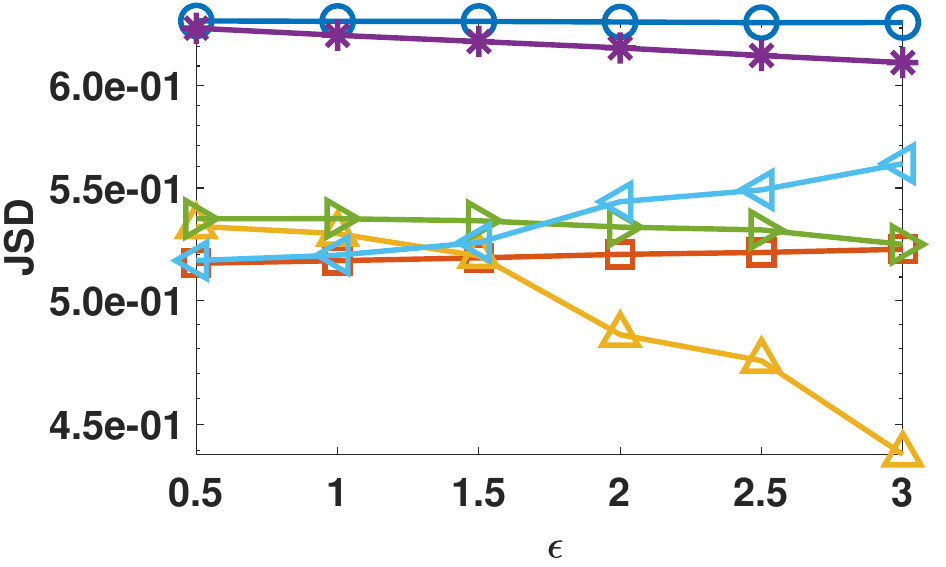}
		\end{minipage}
	}%
	\subfigure[\textbf{Taxi},$w=20,q=30$.]{
		\begin{minipage}[t]{0.24\linewidth}
			\centering
			\includegraphics[width=1\textwidth]{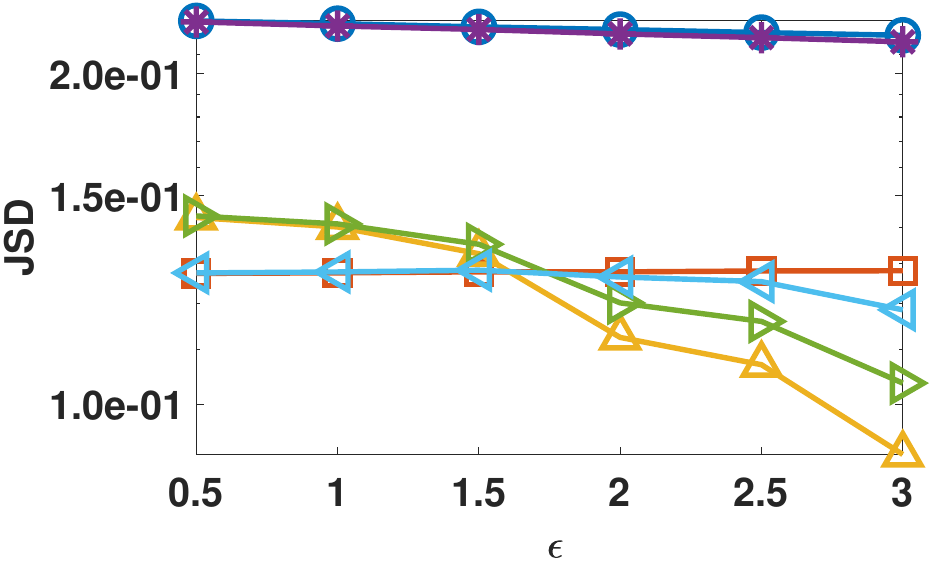}
		\end{minipage}
	}%
	\vspace{-0.05in}
	\caption{Cosine Distance w.r.t. $\epsilon$ for sampling-based algorithms vs non-sampling algorithms}
	\label{samplefrequency}
	\vspace{-0.1in}
\end{figure*}

\color{black}
\subsection{Discussion on Generalizability}

\subsubsection{The results for crowd-level statistics}  
We extend our analysis from individual-level statistics to crowd-level statistics, as shown in Figure \ref{wsdis_1}. The evaluation metric used is the Wasserstein distance between two distributions: the estimated distribution and the true distribution of means calculated from population subsequences. Figures \ref{wsdis_1} (a)-(d) present the results without sampling techniques. Among the methods, BA-SW performs the worst, followed by SW-direct, whereas CAPP outperforms APP and IPP. With sampling techniques (Figures \ref{wsdis_1} (e)-(h)), CAPP-S achieves the best performance, followed by APP-S, highlighting the synergy between our proposed methods and sampling. These results indicate that our methods not only provide accurate user-level time series statistics but also excel in estimating crowd-level statistics.

\begin{figure*}[t]
	\hspace{0.25in}
	{
		\begin{minipage}{14cm}
			\centering
			\includegraphics[scale=0.7]{legend_nosample.pdf}
		\end{minipage}
	}
	\\
	\centering
	\subfigure[\textbf{Taxi}, $w=q=10$.]{
		\begin{minipage}[t]{0.24\linewidth}
			\centering
			\includegraphics[width=1\textwidth]{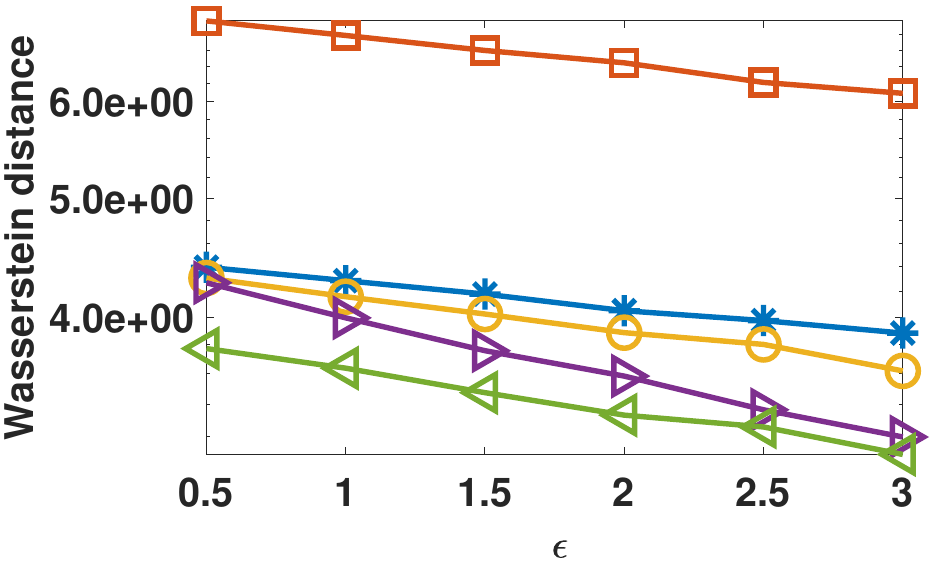}
		\end{minipage}%
	}%
	\subfigure[\textbf{Taxi}, $w=q=30$.]{
		\begin{minipage}[t]{0.24\linewidth}
			\centering
			\includegraphics[width=1\textwidth]{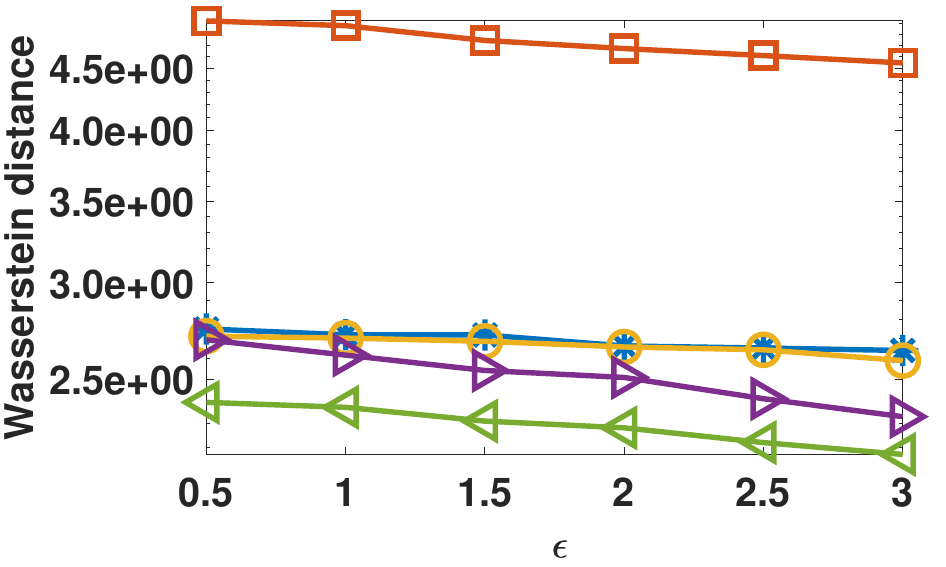}
		\end{minipage}%
	}%
	\subfigure[\textbf{Power}, $w=q=10$.]{
		\begin{minipage}[t]{0.24\linewidth}
			\centering
			\includegraphics[width=1\textwidth]{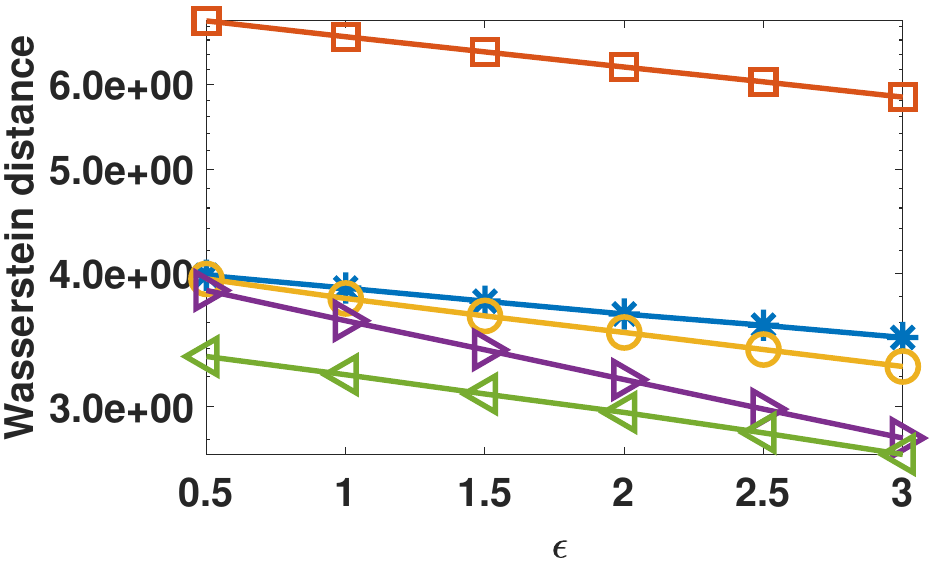}
		\end{minipage}
	}%
	\subfigure[\textbf{Power}, $w=q=30$.]{
		\begin{minipage}[t]{0.24\linewidth}
			\centering
			\includegraphics[width=1\textwidth]{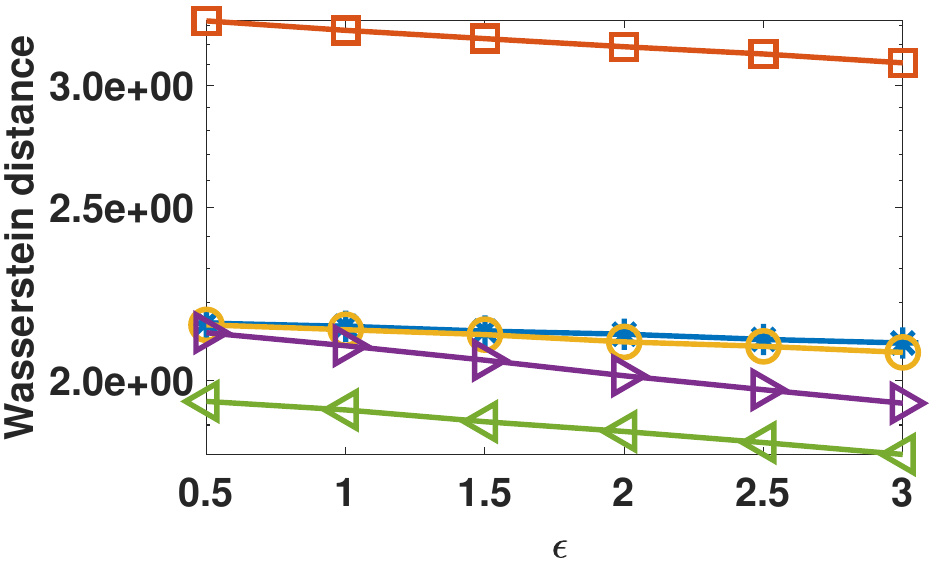}
		\end{minipage}
	}%
	\\
	{
		\begin{minipage}{10cm}
			\centering
			\includegraphics[scale=0.7]{legend_sampling.pdf}
		\end{minipage}
	}
	\\
	\centering
	\subfigure[\textbf{Taxi}, $w=20,q=10$.]{
		\begin{minipage}[t]{0.24\linewidth}
			\centering
			\includegraphics[width=1\textwidth]{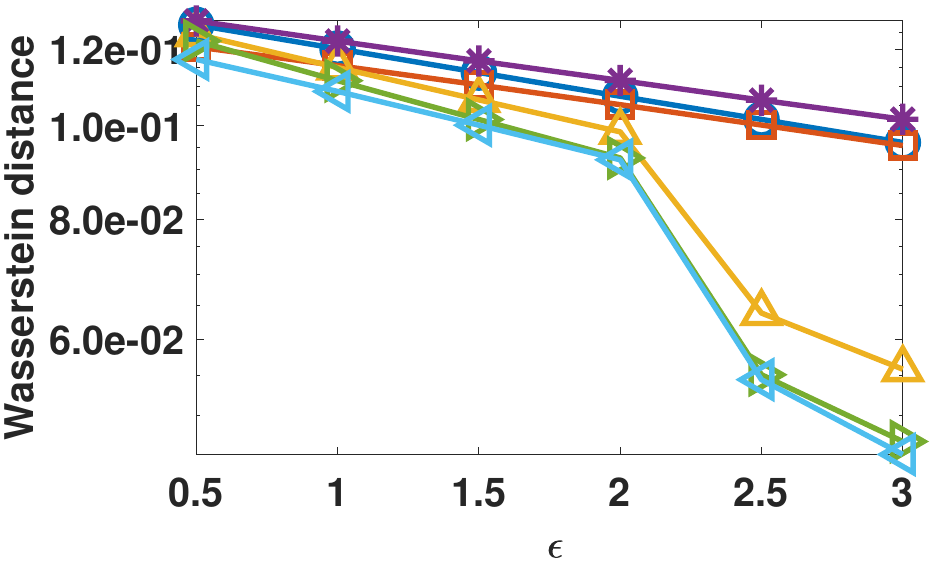}
		\end{minipage}%
	}%
	\subfigure[\textbf{Taxi}, $w=20,q=30$.]{
		\begin{minipage}[t]{0.24\linewidth}
			\centering
			\includegraphics[width=1\textwidth]{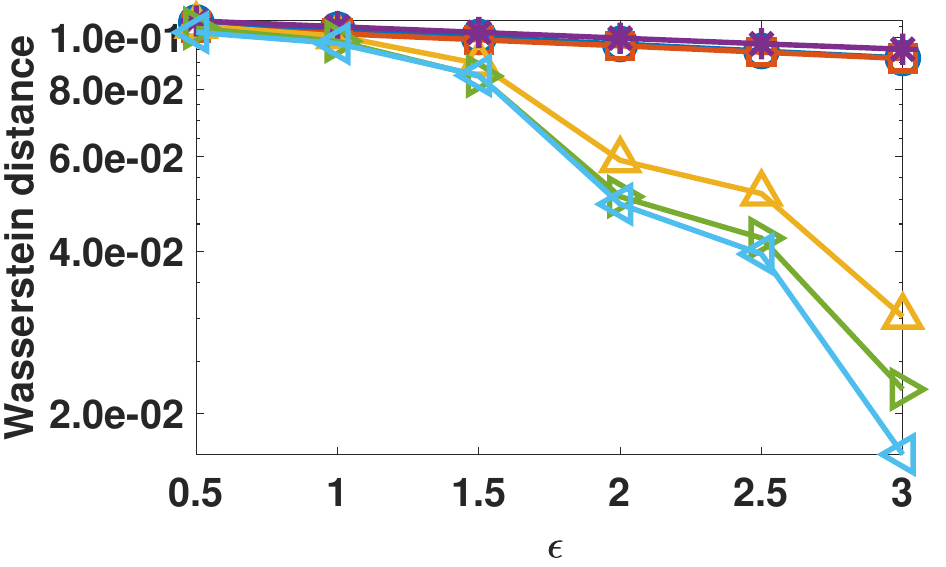}
		\end{minipage}%
	}%
	\subfigure[\textbf{Taxi}, $w=30,q=10$.]{
		\begin{minipage}[t]{0.24\linewidth}
			\centering
			\includegraphics[width=1\textwidth]{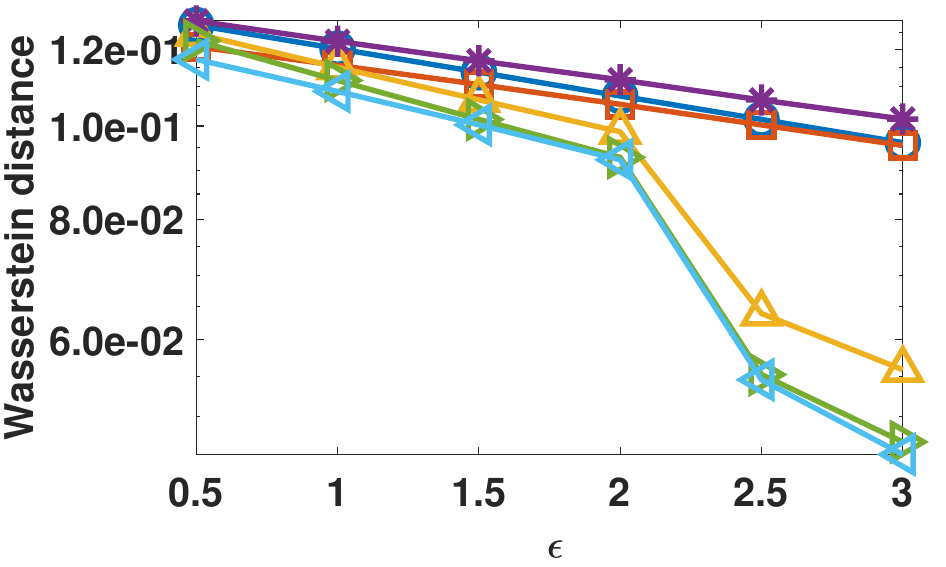}
		\end{minipage}
	}%
	\subfigure[\textbf{Taxi}, $w=30,q=40$.]{
		\begin{minipage}[t]{0.24\linewidth}
			\centering
			\includegraphics[width=1\textwidth]{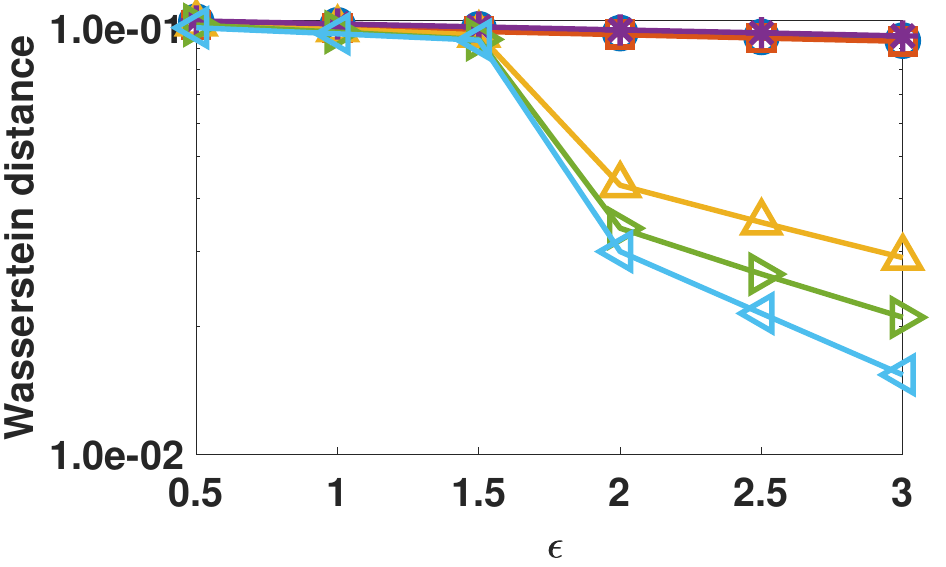}
		\end{minipage}
	}%
	\vspace{-0.05in}
	\caption{Wasserstein distance comparison of user mean distributions w.r.t. $\epsilon$}
	\label{wsdis_1}
	\vspace{-0.1in}
\end{figure*}

\subsubsection{Generalizability to different mechanisms}  
We replace the SW mechanism with alternative numerical mechanisms (Laplace, SR, and PM), and the results are presented in Figure \ref{othermechanism}. The results demonstrate that our APP scheme consistently enhances performance across all mechanisms in both mean estimation and stream publication tasks. Notably, the SW mechanism demonstrates superior performance compared to all other mechanisms. This advantage arises from its bounded perturbation range of ($-\frac{1}{2}, \frac{1}{2}$), which preserves a substantial amount of information during the clipping process, irrespective of the $\epsilon$ value. In contrast, other mechanisms either sacrifice excessive information due to their broad perturbation domains or inherently discard significant data characteristics during their perturbation processes, as thoroughly analyzed in Section IV.C.
\begin{figure*}[t]
	\hspace{-0.9in}
	{
		\begin{minipage}{10cm}
			\centering
			\includegraphics[scale=0.75]{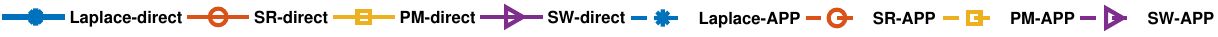}
		\end{minipage}
	}
	\\
	\centering
	\subfigure[\textbf{C6H6}, MSE.]{
		\begin{minipage}[t]{0.24\linewidth}
			\centering
			\includegraphics[width=1\textwidth]{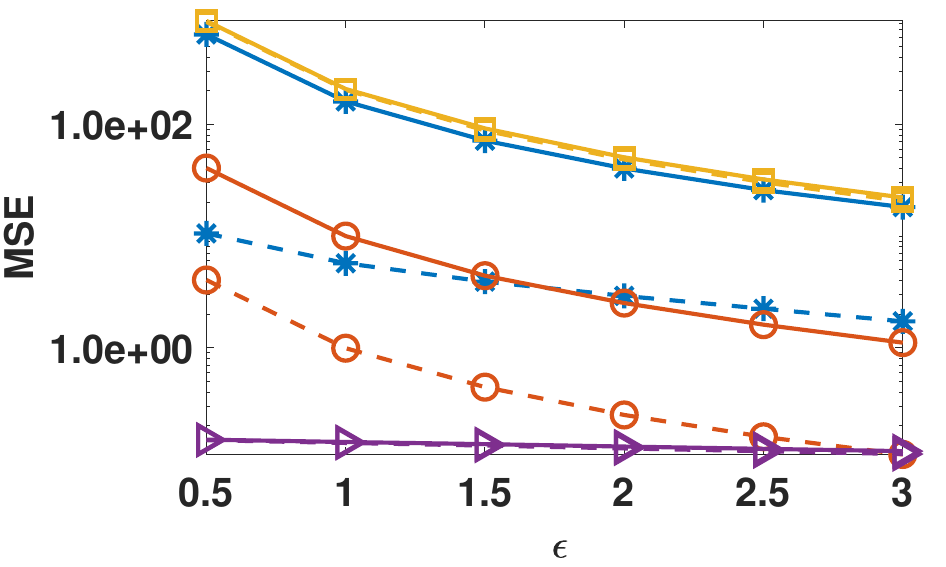}
		\end{minipage}%
	}%
	\subfigure[\textbf{Volume}, MSE.]{
		\begin{minipage}[t]{0.24\linewidth}
			\centering
			\includegraphics[width=1\textwidth]{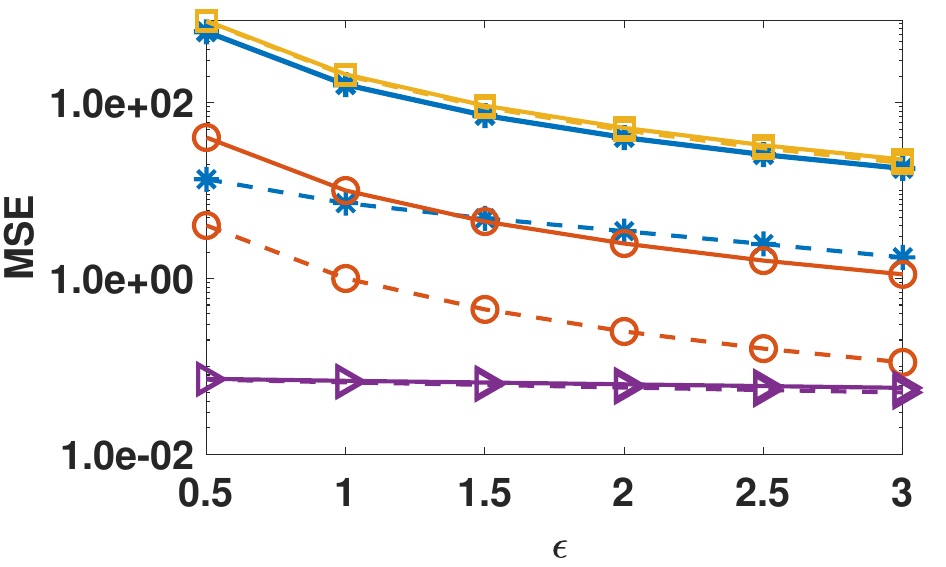}
		\end{minipage}%
	}%
	\subfigure[\textbf{C6H6}, JSD.]{
		\begin{minipage}[t]{0.24\linewidth}
			\centering
			\includegraphics[width=1\textwidth]{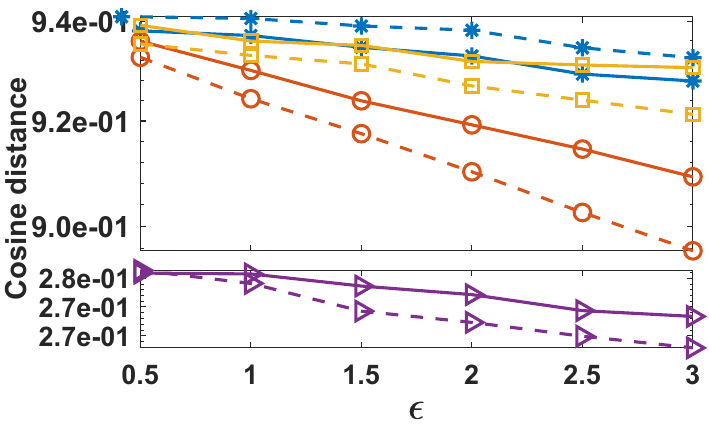}
		\end{minipage}
	}%
	\subfigure[\textbf{Volume}, JSD.]{
		\begin{minipage}[t]{0.24\linewidth}
			\centering
			\includegraphics[width=1\textwidth]{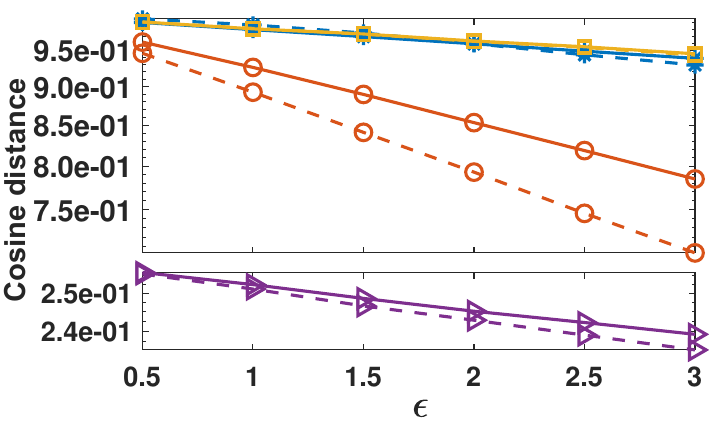}
		\end{minipage}
	}%
	\\
	\caption{Performance comparison of different LDP mechanisms with APP (Laplace, SR, PM, and SW)}
	\label{othermechanism}
\end{figure*}

\subsubsection{The results for high-dimensional time series}  
We extend our method to the analysis of high-dimensional time series. To evaluate its performance, we generate two datasets with $d=5$ and $d=12$ dimensions, where each dimension follows a sinusoidal function with varying frequency parameters. We process each dimension of the time series independently, applying privacy-preserving mechanisms separately to each one. To address the privacy budget constraints in this high-dimensional setting, we implement two strategies: Budget-Split (BS) and Sample-Split (SS). Figure \ref{high_dim} indicates that, while SW-SS and SW-BS perform the worst within their respective strategies, our APP and CAPP schemes significantly enhance the performance of both BS and SS approaches. However, the BS strategies (SW-BS, APP-BS, and CAPP-BS) outperform the SS strategies (SW-SS, APP-SS, and CAPP-SS), primarily because the SS approach suffers from reduced effectiveness caused by the limited number of data points per window introduced by sampling. This negative impact of sampling outweighs the disadvantages associated with splitting the privacy budget.

\begin{figure*}[t]
	\hspace{0.25in}
	{
		\begin{minipage}{14cm}
			\centering
			\includegraphics[scale=0.7]{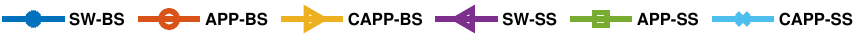}
		\end{minipage}
	}
	\\
	\centering
	\subfigure[\textbf{Sin-data}, MSE, $d=5$.]{
		\begin{minipage}[t]{0.24\linewidth}
			\centering
			\includegraphics[width=1\textwidth]{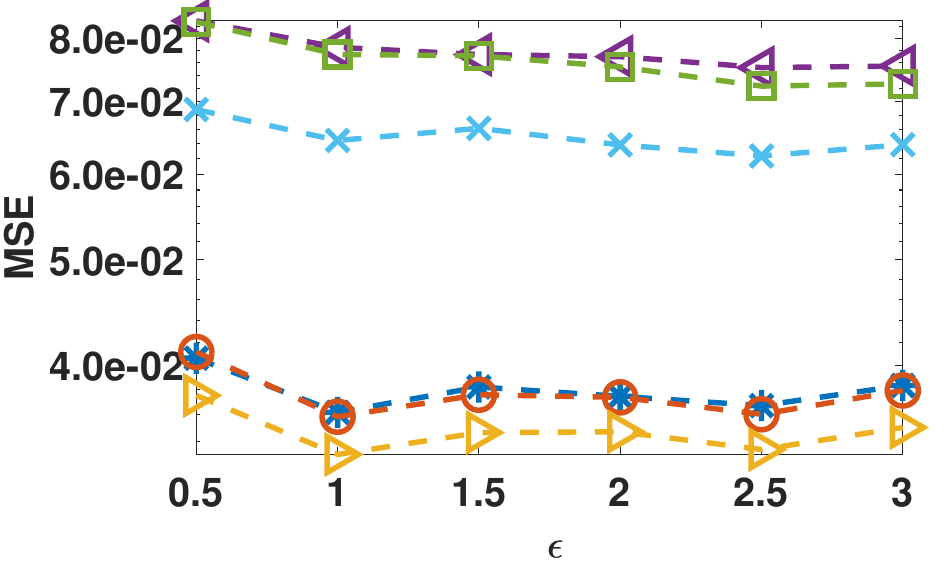}
		\end{minipage}%
	}%
	\subfigure[\textbf{Sin-data}, MSE, $d=10$.]{
		\begin{minipage}[t]{0.24\linewidth}
			\centering
			\includegraphics[width=1\textwidth]{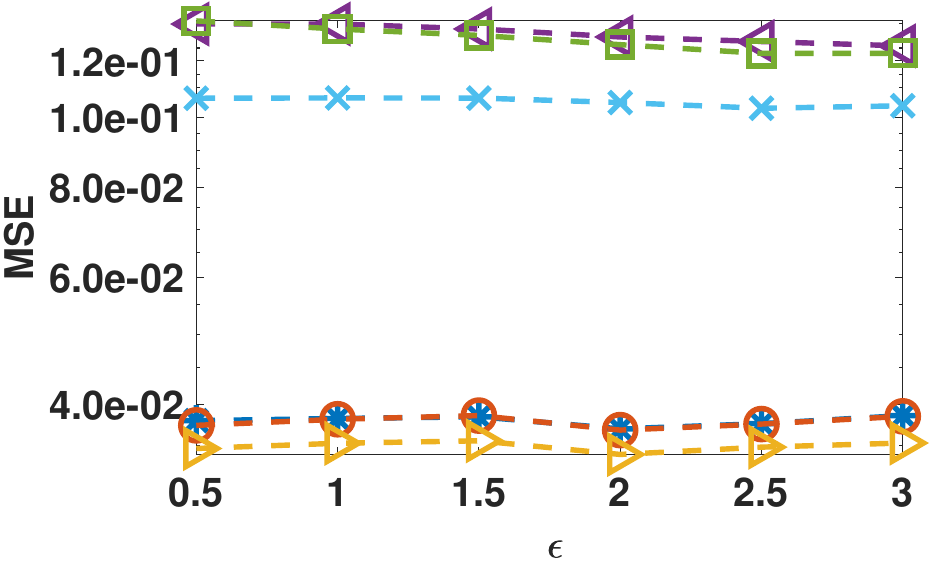}
		\end{minipage}%
	}%
	\subfigure[\textbf{Sin-data}, JSD, $d=5$.]{
		\begin{minipage}[t]{0.24\linewidth}
			\centering
			\includegraphics[width=1\textwidth]{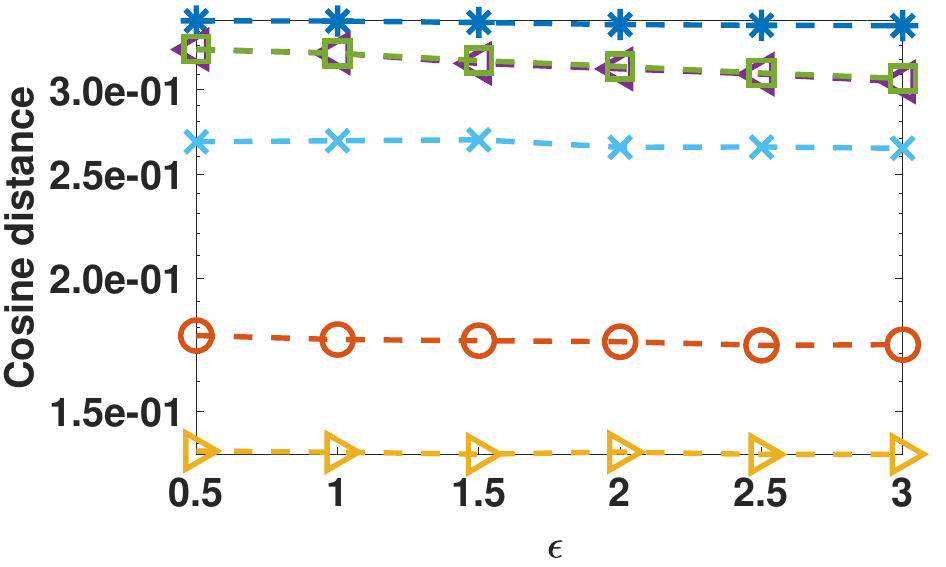}
		\end{minipage}
	}%
	\subfigure[\textbf{Sin-data}, JSD, $d=10$.]{
		\begin{minipage}[t]{0.24\linewidth}
			\centering
			\includegraphics[width=1\textwidth]{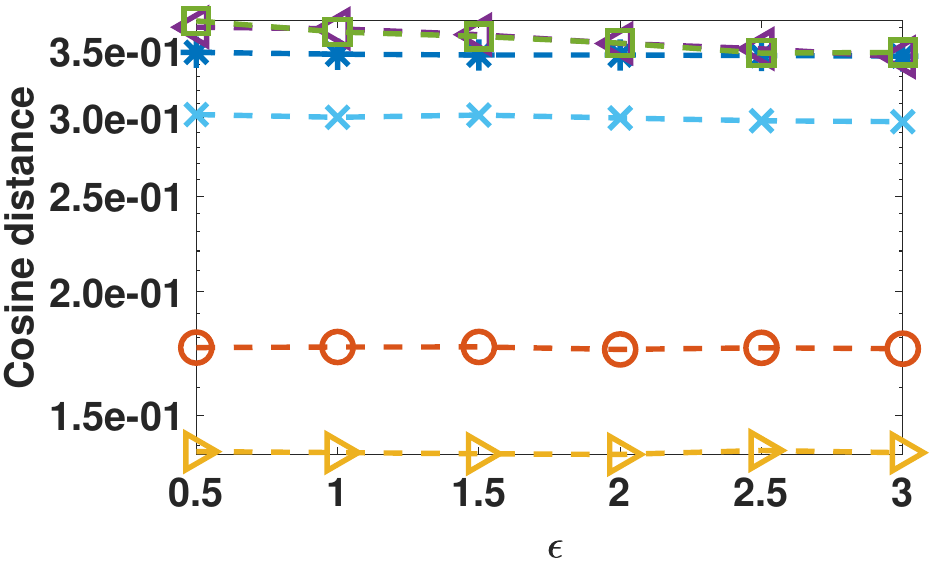}
		\end{minipage}
	}%
	\\
	\caption{Performance comparison of budget-split and sample-split in high-dimensional data}
	\label{high_dim}
\end{figure*}

\subsubsection{Sensitivity analysis on $[l, u]$}  
We represent $l$ and $u$ using $\delta$ (where $l = 0 - \delta$ and $u = 1 + \delta$), and evaluate performance for different $\delta$ values across four datasets: \textbf{Constant} (a time series with a constant value of $x = 0.1$), \textbf{Pulse} (zeros with a value of 1 inserted every five points), \textbf{Sinusoidal} (generated according to a sin(x) distribution), and \textbf{C6H6}. The experimental results, as shown in Figure \ref{l_r}, reveal consistent trends across all datasets: MSE decreases as $\epsilon$ increases. For a fixed $\epsilon$, MSE follows a U-shaped curve as $\delta$ varies from $-1$ to $0.5$, with the optimal $\delta$ (corresponding to the minimum MSE) differing across $\epsilon$ values. Generally, smaller $\epsilon$ values are associated with larger optimal $\delta$ values.  

The MSE remains relatively stable in the vicinity of these optimal $\delta$ values. The recommended $\delta$ values (marked with stars), derived from Equation \ref{Teped}, are close to these optimal values and fall within regions of stable MSE. We recommend setting $\delta$ within the range $-0.25 \leq \delta \leq 0.25$, regardless of the original data distribution. For larger $\epsilon$ values, smaller $\delta$ values are preferable, whereas for smaller $\epsilon$ values, larger $\delta$ values are recommended.

\begin{figure*}[t]
	\hspace{0.25in}
	{
		\begin{minipage}{14cm}
			\centering
			\includegraphics[scale=0.7]{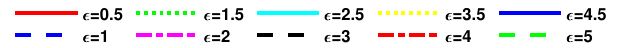}
		\end{minipage}
	}
	\\
	\vspace{-0.12in}
	\centering
	\subfigure[\textbf{Constant}.]{
		\begin{minipage}[t]{0.24\linewidth}
			\centering
			\includegraphics[width=1\textwidth]{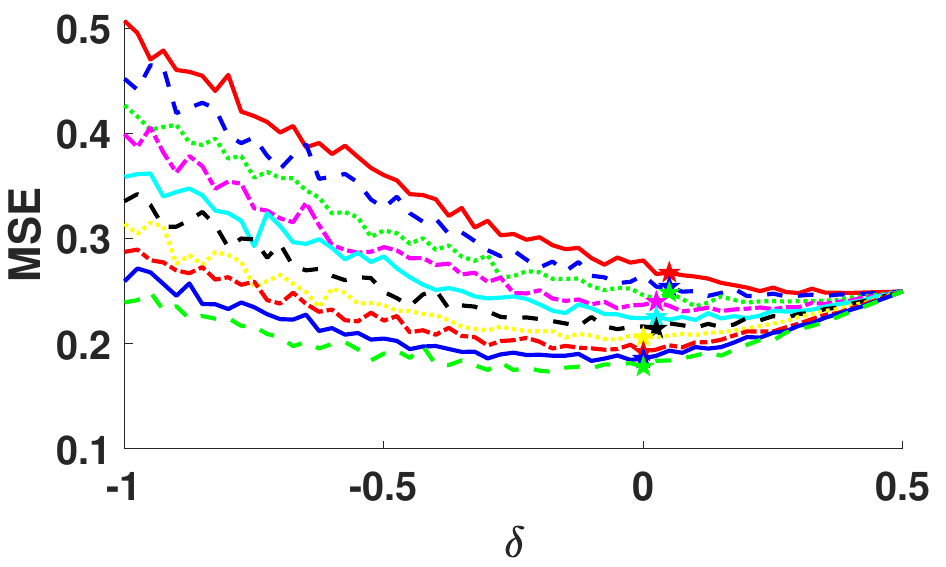}
		\end{minipage}%
	}%
	\subfigure[\textbf{Pulse}, MSE, $d=10$.]{
		\begin{minipage}[t]{0.24\linewidth}
			\centering
			\includegraphics[width=1\textwidth]{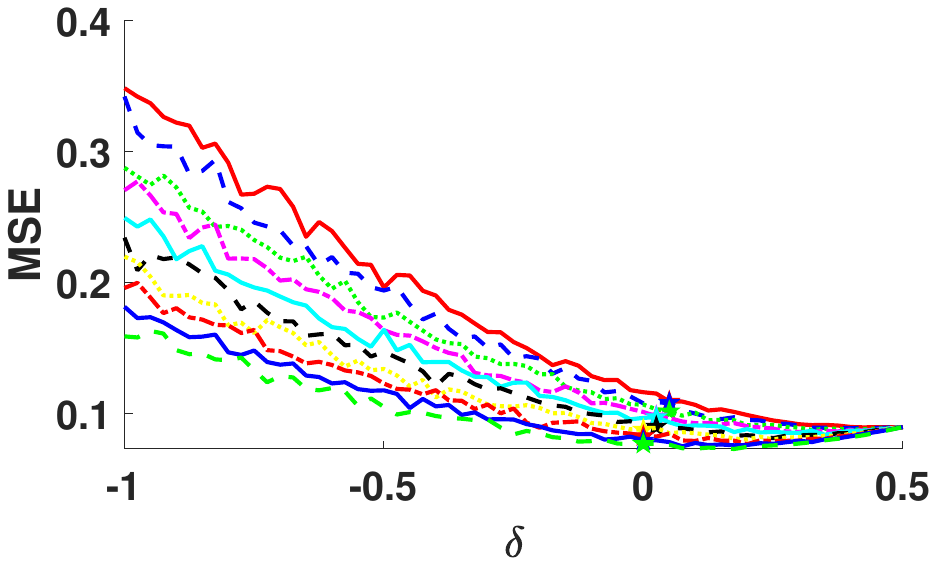}
		\end{minipage}%
	}%
	\subfigure[\textbf{Sinusoidal}.]{
		\begin{minipage}[t]{0.24\linewidth}
			\centering
			\includegraphics[width=1\textwidth]{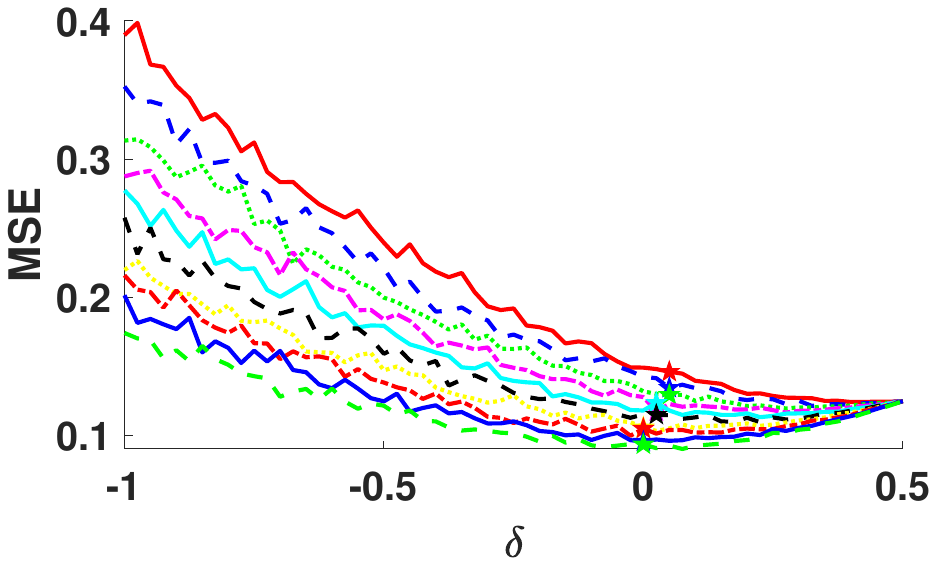}
		\end{minipage}
	}%
	\subfigure[\textbf{C6H6}.]{
		\begin{minipage}[t]{0.24\linewidth}
			\centering
			\includegraphics[width=1\textwidth]{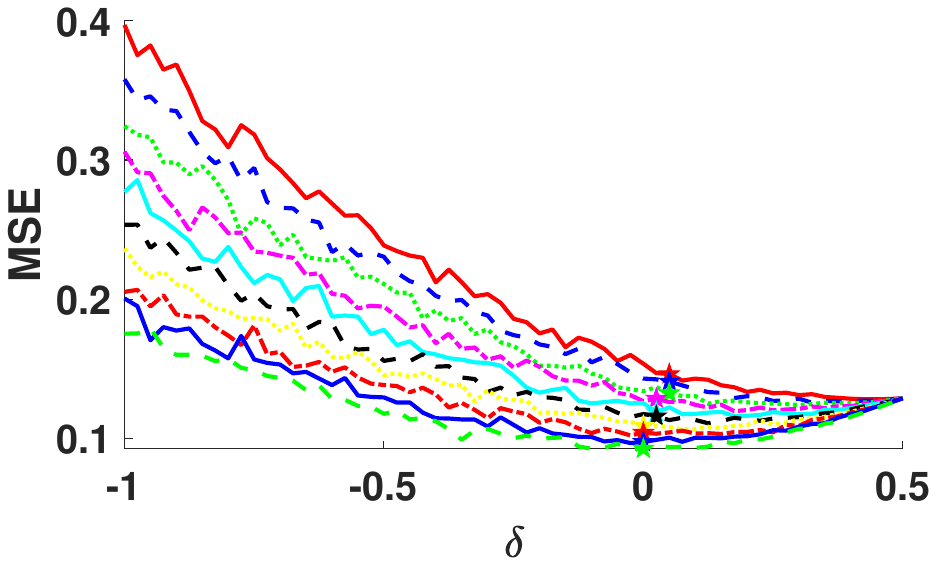}
		\end{minipage}
	}%
	\\
	\vspace{-0.1in}
	\caption{Sensitivity analysis of $\delta$ on MSE across datasets, $w=q=10$}
	\label{l_r}
\end{figure*}
\color{black}
\section{Related Work}
\label{relatedwork}

Local Differential Privacy (LDP)~\cite{chen2016private, duchi2013local, kasiviswanathan2011can} is an extension of Differential Privacy (DP)\cite{dwork2008differential, dwork2006calibrating, mcsherry2007mechanism} tailored for individual users in distributed systems. LDP enables statistical analysis across data types, supporting mean, frequency, and distribution estimations. Within LDP, two primary privacy protection approaches exist: event-level~\cite{wang2021continuous} and user-level~\cite{bao2021cgm}. However, event-level LDP provides insufficient stream protection, while user-level LDP compromises utility through privacy budget partitioning. The $w$-event LDP~\cite{wang2020towards} bridges this gap, offering a balanced framework for time series data protection.

Several methods address streaming data under user-level LDP, primarily focusing on discrete data collection~\cite{joseph2018local, erlingsson2019amplification,xue2022ddrm}. These approaches track statistical information by monitoring data changes to manage privacy budget consumption effectively. Specifically, \cite{joseph2018local} proposes a method that efficiently controls individual user contributions when the server's previous estimation is accurate, thereby conserving privacy budget. Erlingsson et al. \cite{erlingsson2019amplification} introduce a sanitization and reporting mechanism for continuous frequency estimation, where each user's value can change at most $C$ times with increments/decrements of $\pm 1$, and users randomly report one of their $C$ changes. \cite{xue2022ddrm} presents DDRM, which employs binary trees to dynamically record temporal differences and suppresses privacy budget consumption during periods of data stability, taking advantage of the common occurrence of unchanged values in time series data. However, these methods are designed for discrete time series data and cannot be directly applied to continuous time series, which is the focus of our study.

\color{black}
In this paper, we address two types of tasks. The first type focuses on estimating individual-level statistics. Notable existing works include ToPL \cite{wang2021continuous}, Pattern LDP \cite{wang2020towards}, and PrivShape \cite{mao2024privshape}. ToPL \cite{wang2021continuous} adopts event-level LDP constraints, which necessitate small privacy budgets for each timestamp, ultimately compromising utility in stream data publication. Pattern LDP \cite{wang2020towards} attempts to enhance utility by transmitting only significant changes with increased per-point privacy budgets, but this approach risks privacy leakage through the exposure of change points. PrivShape \cite{mao2024privshape} only records key nodes, which may result in information loss when queried intervals contain few or no key nodes. Our work explores $w$-event level analysis as a balanced compromise between these extremes. The second type focuses on estimating statistical characteristics in crowd-level statistics. As demonstrated in LDP-IDS \cite{ren2022ldp}, this includes analyzing the distribution of population means. This method builds upon the BA \cite{kellaris2014differentially} concept of utilizing general privacy budgets by not uploading points with minimal changes, thus conserving remaining privacy budgets. Additionally, it implements a sampling policy where users can upload their information in discontinuous windows, avoiding privacy budget splitting. Beyond simple time series, trajectory data represents a more complex form of time series, as explored in \cite{kellaris2014differentially}, \cite{zhang2023trajectory}, and \cite{sun2023synthesizing}. These works maintain high utility while satisfying local differential privacy requirements.
\color{black} 
\section{Conclusions}
\label{conclusion}In this paper, we propose novel algorithms to collect and publish stream data under LDP. By effectively utilizing perturbations in two complementary ways, our methods significantly enhance utility compared to existing approaches. We present the IPP, APP, and CAPP algorithms and prove they satisfy $w$-event differential privacy. Moreover, we devise an optimized sampling scheme, further improving the accuracy of subsequence mean statistics. Experiments demonstrate the effectiveness of our techniques. This research opens a new direction for high-utility stream data analysis with privacy guarantees. Future work could extend this approach to handle more data types and apply it to emerging contexts like IoT and edge computing.

\section*{Acknowledgements}
\label{sec::ack}
This work was supported by the National Natural Science Foundation of China (Grant No: 62372122 and 92270123), and the Research Grants Council, Hong Kong SAR, China (Grant No:  15203120, 15225921, 15209922, and 15224124). 

\bibliographystyle{plain}
\bibliography{references}
\clearpage

\end{document}